\let\csname equation*\endcsname\relax
\let\csname endequation*\endcsname\relax
 \newtheoremstyle{note}
 {\topsep/2} 
 {\topsep/2} 
 {} 
 {\parindent} 
 {\itshape} 
 {.} 
 {5pt plus 1pt minus 1pt}
 {}
 \theoremstyle{note}
 \newtheorem{theorem}{Theorem}
 \newtheorem{lemma}{Lemma}
 \newtheorem{corollary}{Corollary}
 \newtheorem{proposition}{Proposition}
 \theoremstyle{definition}
 \theoremstyle{remark}
 \newtheorem{remark}{Remark}
 \newcommand{\cnot}{\mathcal{U}_{\mrm{CNOT}}}
 \def\rE{\mathbb{E} }
 \newcommand{\mrm}[1]{\mathrm{#1}}
 \newcommand{\diag}{\operatorname{diag}}
 \newcommand{\supp}{\operatorname{supp}}
 \newcommand{\imply}{\mathrel{\Rightarrow}}
 \newcommand{\rmc}{\mathrm{c}}
 \newcommand{\rmr}{\mathrm{r}}
 \newcommand{\rmG}{\mathrm{G}}
 \newcommand{\rmL}{\mathrm{L}}
 \newcommand{\rmT}{\mathrm{T}}
 \newcommand{\caH}{\mathcal{H}}
 \newcommand{\caI}{\mathcal{I}}
 \newcommand{\caN}{\mathcal{N}}
 \newcommand{\caR}{\mathcal{R}}
 \newcommand{\scA}{\mathscr{A}}
 \newcommand{\lr}{{\mathcal{R}_\rmL}}
 \newcommand{\be}{\begin{equation}}
 \newcommand{\ee}{\end{equation}}
 \newcommand{\ba}{\begin{align}}
 \newcommand{\ea}{\end{align}}
 \def\<{\langle} 
 \def\>{\rangle} 
 \newcommand{\ket}[1]{| #1\>}
 \def\outer#1#2{|#1\>\<#2|} 
 \newcommand{\esref}[1]{\textup{(\ref{#1})}}
 \newcommand{\Esref}[1]{Equations~\textup{(\ref{#1})}}
 \newcommand{\thref}[1]{theorem~\ref{#1}}
 \newcommand{\Thref}[1]{Theorem~\ref{#1}}
 \newcommand{\thsref}[1]{theorems~\ref{#1}}
 \newcommand{\lref}[1]{lemma~\ref{#1}}
 \newcommand{\lsref}[1]{lemmas~\ref{#1}}
 \newcommand{\pref}[1]{proposition~\ref{#1}}
 \newcommand{\Pref}[1]{Proposition~\ref{#1}}
 \newcommand{\crref}[1]{corollary~\ref{#1}}
 \newcommand{\Crref}[1]{Corollary~\ref{#1}}
 \newcommand{\cref}[1]{conjecture~\ref{#1}}
 \newcommand{\Cref}[1]{Conjecture~\ref{#1}}
  \newcommand{\rcite}[1]{\cite{#1}}
  \newcommand{\rscite}[1]{\cite{#1}}
 \def\ps{\psi}
 \def\t{\theta}
 \newcommand{\proj}[1]{| #1\rangle\!\langle #1 |}
\begin{document}

\title{Coherence and entanglement measures based on R\'{e}nyi relative entropies}

\author{Huangjun Zhu$^1$, Masahito Hayashi$^{2,3}$, and Lin Chen$^{4,5}$}
\address{$^1$ Institute for Theoretical Physics, University of Cologne, Cologne 50937, Germany}

\address{$^2$ Graduate School of Mathematics, Nagoya University, 
	Nagoya, 464-8602, Japan}
\address{$^3$ Centre for Quantum Technologies, National University of Singapore, 
	3 Science Drive 2, 117542, Singapore}

\address{$^4$ School of Mathematics 
	and Systems Science, Beihang University, Beijing 100191, China}
\address{$^5$ International Research Institute for Multidisciplinary Science, 
	Beihang University, Beijing 100191, China}

\ead{hzhu1@uni-koeln.de, masahito@math.nagoya-u.ac.jp, and  linchen@buaa.edu.cn}

\bigskip
\begin{indented}
\item	\today
\end{indented}

\begin{abstract}
We study systematically  resource measures of
coherence  and entanglement based on R\'enyi relative entropies, which include the  logarithmic robustness of coherence, geometric coherence, and  conventional relative entropy of coherence together with their entanglement analogues. First, we show that each R\'enyi relative entropy of coherence is equal to the corresponding R\'enyi relative entropy of entanglement for any maximally correlated state. By virtue of this observation, we 
establish a simple operational connection between entanglement
measures and coherence measures based on R\'enyi relative entropies. We then prove that all these coherence measures, including the logarithmic robustness of coherence, are additive. Accordingly, all these entanglement measures are additive for maximally correlated states.
In addition, we  derive analytical formulas for R\'enyi relative entropies of entanglement of maximally correlated states and bipartite pure states, which reproduce a number of classic results on the relative entropy of entanglement and logarithmic robustness of entanglement in a unified framework.
Several nontrivial bounds for R\'enyi relative entropies of coherence (entanglement) are further derived, which improve over results known previously. 
Moreover, we determine all states whose relative entropy of coherence is equal to the logarithmic robustness of coherence. As an application, we provide an upper bound for the exact coherence distillation rate, which is saturated  for pure states.

\end{abstract}

%
\noindent{\it Keywords}: quantum coherence,  entanglement, R\'enyi relative entropies, robustness of coherence, exact coherence distillation, resource theory,  maximally correlated states

\section{Introduction}

Quantum coherence is a root of many nonclassical phenomena and a valuable resource for quantum information processing. Recently, the resource theory of coherence was established in \rscite{Aber06,BaumCP14, WintY16} and stimulated increasing attention in the quantum information community; see \rscite{StreAP17,HuHPZ17} for a review. It turns out that this resource theory is closely related to the well-established resource theory of entanglement \cite{StreSDB15,YuanZCM15,DuBG15,DuBQ15,KillSP16,ZhuMCF17,ZhuHC17,RanaPWL16,DanaDMW17}, which plays a crucial role in the development of coherence theory. Understanding the connections between the two resource theories is a focus of ongoing research.

Recently, Streltsov et al. showed that coherence with respect to a reference basis can be converted to entanglement by incoherent operations acting on the system and an incoherent ancilla \cite{StreSDB15}. Moreover, the maximum entanglement generated in this way defines a coherence measure. Surprisingly, this mapping can establish a one-to-one correspondence between many useful entanglement measures and coherence measures, including those based on the relative entropy, fidelity, and convex-roof construction~\cite{StreSDB15,ZhuMCF17}. Although not so obvious, the $l_1$-norm of coherence \cite{BaumCP14,RanaPWL16} turns out to be the analogue of the negativity under this mapping~\cite{ZhuHC17}.

Despite these progresses, it is still not clear what coherence  measures in general can be derived from entanglement  measures in a natural way. A case in point is the family of coherence  measures based on R\'enyi relative entropies \cite{MullDSF14,WildWY14,Haya17book,ChitG16a,ShaoLLX16}, which  includes three of the most important coherence measures, namely, relative entropy of coherence \cite{Aber06,BaumCP14} (equal to the distillable coherence \cite{WintY16,YuanZCM15}), logarithmic robustness of coherence \cite{NapoBCP16,PianCBN16,ChitG16a,RanaPWL16}, and geometric coherence \cite{StreSDB15}. Their  entanglement analogues are equally important in the resource theory of entanglement  \cite{HoroHHH09}. Although these resource measures  have been studied extensively, most previous works 
focus on individual measures separately, without studying the connections between them, which leads to severe limitation on our understanding about this subject.

In this paper we explore the connections between entanglement and coherence by studying systematically resource measures based on R\'enyi relative entropies. First, we show that 
 R\'enyi relative entropies of coherence and entanglement are equal to the corresponding R\'enyi conditional entropies for maximally correlated states. Interestingly, the same conclusion holds for three variants of entanglement measures based on separable states, positive-partial-transpose (PPT) states, and nondistillable states, respectively. By virtue of  this observation, 
 we show that each R\'enyi relative entropy of coherence  is equal to the  maximum of the corresponding R\'enyi relative entropy of entanglement generated by incoherent operations acting on the system and an incoherent ancilla. The generalized CNOT gate turns out to be the common optimal incoherent operation.
In this way, we set an operational one-to-one mapping between entanglement
measures and coherence measures based on R\'enyi relative entropies, which complements a similar mapping between measures based on the convex roof~\cite{ZhuMCF17}.

We then prove that all R\'enyi relative entropies of coherence, including the logarithmic robustness of coherence, are additive. As an implication, all R\'enyi relative entropies of entanglement are additive for maximally correlated states. 
In addition, 
we derive several nontrivial bounds on R\'enyi relative entropies of coherence and the robustness of coherence, which significantly improve over bounds known before. In particular, our study  shows that the logarithmic $l_1$-norm of coherence is a universal upper bound for all  R\'enyi relative entropies of coherence. Similar results apply to  R\'enyi relative entropies of entanglement of maximally correlated states.
Moreover, we  derive analytical formulas for R\'enyi relative entropies of entanglement of maximally correlated states and bipartite pure states, which reproduce a number of classic results on the relative entropy of entanglement and logarithmic robustness of entanglement in a unified framework.

Furthermore, we clarify the relations between different R\'enyi relative entropies of coherence and determine all states whose relative entropy of coherence (or distillable coherence) is equal to the logarithmic robustness of coherence. It turns out that for these states all R\'enyi relative entropies of coherence coincide with the  relative entropy of  coherence.
To achieve this goal, we determine the condition under which R\'enyi relative entropies  are independent of the order parameter, note that they are usually monotonically increasing with this parameter.

As an application, we provide an upper bound for the exact coherence distillation rate, which is saturated for pure states.
It turns out that for pure states this rate remains the same under three distinct classes of operations, namely, strictly incoherent operations, incoherent operations, and incoherence-preserving operations. This result parallels a similar result on exact entanglement distillation \cite{HayaKMM03,Haya06,Haya17book}, which further strengthens the connection between the resource theory of coherence and that of entanglement. 
In addition, we derive a necessary condition under which the exact coherence distillation rate is equal to the distillable coherence, thereby clarifying the relation between exact coherence distillation and approximate distillation with vanishing error asymptotically. 
Besides, the results presented here play a crucial role in studying secure random number generation via incoherent operations \cite{HayaZC17}.

The rest of this paper is organized as follows. In \sref{sec:Pre} we review the basic concepts and known results about R\'enyi relative entropies together with entanglement measures and coherence measures based on them. In \sref{sec:Connect} we establish an operational one-to-one mapping between entanglement measures and coherence measures based on R\'enyi relative entropies and thereby derive R\'enyi relative entropies of entanglement of maximally correlated states. In \sref{sec:Additivity} we prove the additivity of R\'enyi relative entropies of coherence and the logarithmic robustness of coherence. In \sref{sec:bounds} we derive several nontrivial upper and lower bounds for  R\'enyi relative entropies of coherence.
In \sref{sec:relationR} we investigate the relations between different R\'enyi relative entropies. 
In \sref{sec:relationRREC} we clarify the relations between different R\'enyi relative entropies of coherence. In \sref{sec:ExactDistill} we provide an upper bound for the exact coherence distillation rate, which is saturated for pure states. \Sref{sec:Sum} summarizes this paper.

\section{\label{sec:Pre}Preliminaries}
In this section we review the basic concepts and known results about two types of R\'enyi relative entropies together with entanglement measures and coherence measures based on them.
A few new results are added for completeness. 

\subsection{\label{sec:RRECE}R\'enyi relative entropies and conditional entropies}
The relative entropy between two density matrices $\rho$ and $\sigma$ on a given Hilbert space $\caH$
reads
\begin{equation}
S(\rho\|\sigma):=\tr(\rho\ln\rho)-\tr(\rho\ln\sigma)=-S(\rho)-\tr(\rho\ln\sigma),
\end{equation}
where "$\ln$" denotes the natural logarithm and $S(\rho):=-\tr(\rho\ln\rho)$ denotes the von Neumann entropy of $\rho$. Although we choose the natural logarithm in this paper, 
except for \sref{sec:relationR}, however,  the choice of the base for the logarithm does not affect our results explicitly as long as "$\exp$" and "$\log$" take on the same base.
The relative entropy $S(\rho\|\sigma)$ reduces to the relative entropy between two probability distributions when both $\rho$ and $\sigma$ are diagonal with respect to a reference basis. 

As  generalization, consider two types of 
R\'enyi relative entropies  \cite{MullDSF14,WildWY14}
\cite[Section 3.1]{Haya17book}
\begin{align}
S_\alpha(\rho\|\sigma):= \frac{1}{\alpha-1}\ln \tr\bigl( \rho^{\alpha}\sigma^{1-\alpha}\bigr), \quad
\underline{S}_\alpha(\rho\|\sigma):= 
\frac{1}{\alpha-1}\ln \tr\bigl( \sigma^{\frac{1-\alpha}{2\alpha}}\rho\sigma^{\frac{1-\alpha}{2\alpha}} \bigr)^{\alpha}, \label{eq:RRE}
\end{align}
where  $\alpha\geq0$ is known as the order parameter. The power of a positive operator is understood as the power on its support.
The second argument $\sigma$ in $S_\alpha(\rho\|\sigma)$ and $\underline{S}_\alpha(\rho\|\sigma)$ can be generalized to positive operators. These R\'enyi relative entropies have wide applications in quantum information processing~\cite{Haya17book} and
have operational interpretations in connection with quantum hypothesis testing~\cite{MosoO15}.

In the cases $\alpha=0, 1, \infty$, 
the definitions of $S_\alpha(\rho\|\sigma)$ and $\underline{S}_\alpha(\rho\|\sigma)$ above are understood as proper limits, all of which are well defined. Hence, the order parameter~$\alpha$ for both types of  R\'{e}nyi relative entropies can be regarded to run from $0$ to $\infty$. To be concrete,
\begin{equation}
S_0(\rho\|\sigma)=\lim_{\alpha\to 0}S_\alpha(\rho\|\sigma)=-\ln\tr(\Pi_\rho\sigma),
\end{equation} 
where $\Pi_\rho$ is the projector onto the support of $\rho$; the limit $\underline{S}_0(\rho\|\sigma)=\lim_{\alpha\to 0}\underline{S}_\alpha(\rho\|\sigma)$ is derived in \cite{AudeD15}, but is not needed here. Both $S_\alpha(\rho\|\sigma)$ and $\underline{S}_\alpha(\rho\|\sigma)$ approach $S(\rho\|\sigma)$ in the limit $\alpha\rightarrow 1$.  The limits $\lim_{\alpha\to \infty}{S}_\alpha(\rho\|\sigma)$ and
$\lim_{\alpha\to \infty}\underline{S}_\alpha(\rho\|\sigma)$  are written as  ${S}_\infty(\rho\|\sigma)$ and $\underline{S}_\infty(\rho\|\sigma)$, respectively. The latter $\underline{S}_\infty(\rho\|\sigma)$ is known as the  max relative entropy \cite{Datt09, Datt09b,DupuKFR13,MullDSF14} and can be expressed as 
\begin{align}
\underline{S}_\infty(\rho\|\sigma)&=\min \{\ln\lambda | \lambda \sigma \ge \rho \}.
\end{align}
The following two special cases of $\underline{S}_\alpha(\rho\|\sigma)$ are also useful to the current study,
\begin{align}
\underline{S}_{1/2}(\rho\|\sigma)&=-\ln F(\rho,\sigma), \label{eq:MinRE}\\
\underline{S}_2(\rho\|\sigma)&=\ln\tr\bigl[\bigl(\sigma^{-1/4}\rho\sigma^{-1/4}\bigr)^2\bigr]=\ln\tr\bigl(\sigma^{-1/2}\rho\sigma^{-1/2}\rho\bigr),
\end{align}
where $F(\rho,\sigma):=\bigl(\tr\sqrt{\sigma^{1/2}\rho\sigma^{1/2}}\bigr)^2$ denotes the fidelity between $\rho$ and $\sigma$.  The two relative entropies  $\underline{S}_{1/2}(\rho\|\sigma)$ and $\underline{S}_2(\rho\|\sigma)$ are known as the
min relative entropy and collision relative entropy, respectively \cite{Datt09, Datt09b,DupuKFR13,MullDSF14}.

According to the Araki-Lieb-Thirring inequality \cite{LiebT76,Arak90} and the result in
\rcite{Hiai94}, the two types of R\'enyi relative entropies defined in \eref{eq:RRE}  satisfy the following inequality
\cite{MullDSF14,WildWY14}\cite[Section 3.1]{Haya17book}
\begin{align}
S_\alpha(\rho\|\sigma) \ge
\underline{S}_\alpha(\rho\|\sigma)\quad \forall \alpha\in [0,\infty].\label{eq:RREabOrder}
\end{align}
When  $\alpha\in (0,\infty)$ with $\alpha\neq 1$,  the inequality is  saturated if and only if (iff) $\rho$ and $\sigma$ commute  \cite{MosoO15}. Both $S_\alpha(\rho\|\sigma)$ and $\underline{S}_\alpha(\rho\|\sigma)$
are monotonically increasing (means nondecreasing in this paper) with $\alpha$.
Similar to $S(\rho\|\sigma)$, the R\'enyi relative entropy
${S}_\alpha(\rho\|\sigma)$ satisfies the data-processing inequality
for $\alpha 
\in [0,2]$~\cite{Petz86},
and 
$\underline{S}_\alpha(\rho\|\sigma)$ satisfies the data-processing inequality for 
$\alpha \in [\frac{1}{2},\infty]$ \cite{MullDSF14,Beig13,FranL13,WildWY14}\cite[lemma 3.1]{Haya17book}. In other words, these R\'enyi relative entropies are contractive under  any completely positive and trace-preserving (CPTP) map $\Lambda$. 
More precisely, we have 
\begin{align}
S_\alpha(\Lambda(\rho)\|\Lambda(\sigma))&\leq S_\alpha(\rho\|\sigma)\quad \forall \alpha 
\in [0,2], \label{eq:DPI-RREa}\\
\underline{S}_\alpha(\Lambda(\rho)\|\Lambda(\sigma))&\leq \underline{S}_\alpha(\rho\|\sigma) \quad \forall \alpha \in \Bigl[\frac{1}{2},\infty\Bigr]. \label{eq:DPI-RREb}
\end{align}

In addition, $\exp[(\alpha-1)S_\alpha(\rho\|\sigma)]$ is jointly convex for $\alpha 
\in (1,2]$ and jointly concave for $\alpha 
\in [0,1)$; by contrast,  $\exp[(\alpha-1)\underline{S}_\alpha(\rho\|\sigma)]$ is jointly convex for $\alpha \in (1,\infty]$ and jointly concave for $\alpha \in [\frac{1}{2},1)$  \cite{MullDSF14,Haya17book2}. To see this, let $\rho_1,\rho_2,\sigma_1,\sigma_2$ be four arbitrary quantum states on $\caH$ and $0\leq \lambda\leq 1$. Consider the two 
states
\begin{align}
\rho:=\begin{pmatrix}
\lambda \rho_1 & 0 \\
0 & (1-\lambda) \rho_2 
\end{pmatrix}, \quad
\sigma:=\begin{pmatrix}
\lambda \sigma_1 & 0 \\
0 & (1-\lambda) \sigma_2 
\end{pmatrix}
\end{align}
on the composite system $\mathbb{C}^2\otimes {\cal H}$. Taking the partial trace over the first subsystem yields
\begin{align}
&\lambda \exp\bigl[(\alpha-1){S}_\alpha(\rho_1\|\sigma_1)\bigr]
+
(1-\lambda) \exp\bigl[(\alpha-1){S}_\alpha(\rho_2\|\sigma_2)\bigr] 
= \exp\bigl[(\alpha-1){S}_\alpha(\rho\|\sigma)\bigr]\nonumber\\
&\geq
\exp\bigl[(\alpha-1){S}_\alpha(
	\lambda \rho_1 + (1-\lambda) \rho_2
	\|
	\lambda \sigma_1 + (1-\lambda) \sigma_2)\bigr] \quad \forall \alpha\in (1,2],\label{eq:H7}
\end{align}
where the inequality follows from the data-processing inequality \eref{eq:DPI-RREa} and the fact that the partial trace is a CPTP map.
Therefore, $\exp[(\alpha-1)S_\alpha(\rho\|\sigma)]$ is jointly convex for $\alpha 
\in (1,2]$. The joint convexity of $\exp[(\alpha-1)\underline{S}_\alpha(\rho\|\sigma)]$ for $\alpha \in (1,\infty]$ follows from the same reasoning. The joint concavity
of $\exp[(\alpha-1)S_\alpha(\rho\|\sigma)]$ for $\alpha \in [0,1)$ and $\exp[(\alpha-1)\underline{S}_\alpha(\rho\|\sigma)]$ for $\alpha \in [\frac{1}{2},1)$ can also be proved in a similar manner.

Next, we turn to conditional entropies constructed from R\'enyi relative entropies. Given a bipartite state $\rho$ shared by Alice (A) and Bob (B), the conditional entropy of A given B have three equivalent definitions,
\begin{align}
H(A|B)_{\rho}:=&S(\rho_{AB})-S(\rho_B) =- S(\rho_{AB} \| I_A \otimes \rho_B)
=-\min_{\sigma_B} S( \rho_{AB} \| I_A \otimes \sigma_B),\label{eq:ConEntropy}
\end{align} 
where $\rho_{AB}=\rho$ (the subscripts are omitted if there is no confusion), $\rho_B=\tr_A(\rho)$, 
$I_A$ denotes the identity on $\caH_A$, and the minimization is taken over all quantum states $\sigma_B$ on $\caH_B$. 
However, only the second and third definitions  above admit meaningful generalizations, which produce four types of R\'enyi conditional entropies \cite{MullDSF14,TomaBH14},
\begin{align}
H_\alpha^{\downarrow}(A|B)_{\rho}
&:=- {S}_\alpha(\rho_{AB} \| I_A \otimes \rho_{B}), \quad &
H_\alpha^{\uparrow}(A|B)_{\rho}
&:=-\min_{\sigma_B} {S}_\alpha(\rho_{AB} \| I_A \otimes \sigma_{B}), \\
\overline{H}_\alpha^{\downarrow}(A|B)_{\rho}
&:=- \underline{S}_\alpha(\rho_{AB} \| I_A \otimes \rho_B),  \quad
&\overline{H}_\alpha^{\uparrow}(A|B)_{\rho}
&:=-\min_{\sigma_B} \underline{S}_\alpha(\rho_{AB} \| I_A \otimes \sigma_B). 
\end{align}
By definitions and  the inequality $S_\alpha(\rho\|\sigma) \ge
\underline{S}_\alpha(\rho\|\sigma)$ in \eref{eq:RREabOrder}, these conditional entropies satisfy
\begin{equation}\label{eq:RREineqSimple}
\begin{aligned}
H_\alpha^{\downarrow}(A|B)_{\rho}
&\leq
H_\alpha^{\uparrow}(A|B)_{\rho},\quad & \overline{H}_\alpha^{\downarrow}(A|B)_{\rho}
&\leq \overline{H}_\alpha^{\uparrow}(A|B)_{\rho},\\
H_\alpha^{\downarrow}(A|B)_{\rho}
& \leq \overline{H}_\alpha^{\downarrow}(A|B)_{\rho},\quad & 
H_\alpha^{\uparrow}(A|B)_{\rho}
&\leq \overline{H}_\alpha^{\uparrow}(A|B)_{\rho}.
\end{aligned}
\end{equation}

The conditional entropy $H_\alpha^{\uparrow}(A|B)_{\rho}$ has a closed formula according to \rcite{TomaBH14},
\begin{equation}\label{eq:CEaformula}
H_\alpha^{\uparrow}(A|B)_{\rho}=\frac{\alpha}{1-\alpha}\ln\tr \left\{\left[\tr_A (\rho_{AB}^\alpha)\right]^{1/\alpha} \right\}.
\end{equation}
When  $\rho$ is a classical-quantum state, i.e., it has the form
$\rho=\sum_{a} P_A(a) (|a\rangle \langle a| \otimes \rho_{B|a})$,
the quantity $\exp\bigl[-\overline{H}_{\infty}^{\uparrow}(A|B)_{\rho}\bigr]$ 
expresses the optimal probability of guessing correctly the classical information concerning A from the quantum system B \cite[theorem 1]{KoniRS09}.

When $\rho=\rho_A\otimes \rho_B$ is a tensor product, straightforward calculation shows that the four types of R\'enyi conditional entropies coincide with each other,
\begin{align}\label{eq:RCEproductS}
\overline{H}_\alpha^{\uparrow}(A|B)_{\rho} &=\overline{H}_\alpha^{\downarrow}(A|B)_{\rho}= H_\alpha^{\uparrow}(A|B)_{\rho}=H_\alpha^{\downarrow}(A|B)_{\rho}=-S_\alpha(\rho_A\|I_A)=S_\alpha(\rho_A),
\end{align}
where 
\begin{equation}
S_\alpha(\rho_A):=\frac{1}{1-\alpha}\tr(\rho_A^\alpha)
\end{equation}
is the R\'enyi $\alpha$-entropy of $\rho_A$.

When $\rho$ is a tripartite pure state shared by A, B and E (Eve), R\'enyi conditional entropies obey the following duality relations.
\begin{proposition}[\protect{\cite{MullDSF14}\cite{Beig13}\cite{TomaBH14}\cite[theorem 5.13]{Haya17book}}]\label{pro:duality}
	\begin{align}
	H_\alpha^{\downarrow}(A|E)_{\rho}
	+{H}_\beta^{\downarrow}(A|B)_{\rho}&=0,
	\label{eq:dualitya}\\
	\overline{H}_\alpha^{\uparrow}(A|E)_{\rho}+
	\overline{H}_\beta^{\uparrow}(A|B)_{\rho} &=0, \label{eq:dualityb} \\
	\overline{H}_\alpha^{\downarrow}(A|E)_{\rho}
	+{H}_\beta^{\uparrow}(A|B)_{\rho}&=0,
	\label{eq:dualityab}
	\end{align}
	where \eref{eq:dualitya} holds for $\alpha,\beta \in[0,2]$ with $\alpha+\beta=2$, 
	\eref{eq:dualityb} holds for $\alpha,\beta \in[\frac{1}{2},\infty]$ with $ \frac{1}{\alpha}+\frac{1}{\beta}=2$,  
	and  \eref{eq:dualityab} holds for $\alpha,\beta \in [0,\infty]$ with $ \alpha\beta=1$.
\end{proposition}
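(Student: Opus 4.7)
The plan is to reduce each duality to an operator-level trace identity on the tripartite pure state $|\psi\rangle_{ABE}$ and verify it using the Schmidt decomposition across the relevant bipartition. The structural backbone is that for a bipartite pure state the two reductions share the nonzero spectrum. Promoted to operators: Schmidt-decomposing $|\psi\rangle$ across $AB|E$ yields a partial isometry $V\colon\caH_E\to\caH_{AB}$ with $V\rho_E^x V^\dagger=\rho_{AB}^x$ for every $x\geq 0$ (powers interpreted on the support, as elsewhere in the paper), and similarly across $AE|B$ a partial isometry $W\colon\caH_B\to\caH_{AE}$ with $W\rho_B^x W^\dagger=\rho_{AE}^x$.

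For \eref{eq:dualitya}, expanding the Petz form $S_\alpha(\rho\|\sigma)=(\alpha-1)^{-1}\ln\tr(\rho^\alpha\sigma^{1-\alpha})$ and using $\alpha+\beta=2$ reduces the claim to the scalar identity
\[
\tr\bigl[\rho_{AE}^\alpha(I_A\otimes\rho_E^{1-\alpha})\bigr]=\tr\bigl[\rho_{AB}^{2-\alpha}(I_A\otimes\rho_B^{\alpha-1})\bigr].
\]
I would expand both sides using the Schmidt decomposition of $|\psi\rangle$ across $A|BE$ and recognise them as the same sum under the symmetry swap $B\leftrightarrow E$. Equivalently, one forms the operator $T_{AB}:=\rho_{AB}^{(2-\alpha)/2}(I_A\otimes\rho_B^{(\alpha-1)/2})$ and its $AE$-analogue $T_{AE}$, and matches $\tr(T_{AB}^\dagger T_{AB})$ with $\tr(T_{AE}^\dagger T_{AE})$ via the isometries $V$ and $W$.

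For \eref{eq:dualityb} and \eref{eq:dualityab} I would pass to closed-form expressions for the $\uparrow$-quantities. Equation~\eref{eq:CEaformula} already gives $H_\beta^\uparrow(A|B)_\rho$ purely in terms of $\tr_A\rho_{AB}^\beta$, and an analogous Lagrange-multiplier computation on the sandwiched relative entropy produces a closed form for $\overline{H}_\alpha^\uparrow(A|B)_\rho$ in terms of $\tr_A\rho_{AB}^\alpha$ with a different exponent structure, while $H_\alpha^\downarrow$ and $\overline{H}_\alpha^\downarrow$ are already explicit. Both sides of each duality then reduce to a scalar functional of the spectral data of $\rho_{AB}$, $\rho_{AE}$, $\rho_B$ and $\rho_E$, and the intertwining isometries $V,W$ convert powers of $\rho_E$ (respectively $\rho_B$) into the corresponding powers of $\rho_{AB}$ (respectively $\rho_{AE}$). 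The exponent constraints $\frac{1}{\alpha}+\frac{1}{\beta}=2$ and $\alpha\beta=1$ emerge naturally from matching powers of the shared spectra on the two sides.

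The main obstacle I anticipate is \eref{eq:dualityb}: the optimizer $\sigma_B$ in $\overline{H}_\alpha^\uparrow(A|B)_\rho$ is a nontrivial fractional power of $\tr_A\rho_{AB}^\alpha$, so aligning it with the dual optimizer on the $E$ side to reproduce the same scalar expression requires careful manipulation of noncommuting operators together with attention to support conditions so that the sandwiched quantities remain well-defined. By contrast, \eref{eq:dualitya} is essentially a direct Schmidt calculation, and \eref{eq:dualityab} proceeds by the same recipe once the closed form for $\overline{H}_\alpha^\downarrow$ has been converted through $V$. The full technical execution of these steps can be imported from the cited references, so my proof would consist mostly of explaining how the Schmidt isometries make each identity transparent and tracking the exponent arithmetic that fixes the parameter constraints.
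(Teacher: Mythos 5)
First, a point of comparison: the paper does not prove this proposition at all — it is imported verbatim from \rcite{MullDSF14}, \rcite{Beig13}, \rcite{TomaBH14} and \cite[theorem 5.13]{Haya17book} — so there is no in-paper argument to measure yours against. Judged on its own merits, your sketch is sound for \eref{eq:dualitya} and \eref{eq:dualityab} but has a genuine gap for \eref{eq:dualityb}. For \eref{eq:dualitya} the reduction to the scalar identity is correct, and the Schmidt/transpose trick $\bigl(f(\rho_{AB})\otimes I_E\bigr)|\psi\rangle=\bigl(I_{AB}\otimes f(\rho_E)\bigr)|\psi\rangle$, together with its $AE|B$ analogue, does finish the job: both sides equal the manifestly symmetric quantity $\langle\psi|\bigl(I_A\otimes\rho_B^{\alpha-1}\otimes\rho_E^{1-\alpha}\bigr)|\psi\rangle$. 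For \eref{eq:dualityab}, combining the closed form \eref{eq:CEaformula} with the fact that the two reductions of the unnormalized pure state $\bigl(I_{AB}\otimes\rho_E^{\frac{1-\alpha}{2\alpha}}\bigr)|\psi\rangle$ share their nonzero spectrum is exactly the argument in \rcite{TomaBH14}, so that part is fine.

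The gap is in \eref{eq:dualityb}. You propose to obtain a closed form for $\overline{H}_\alpha^{\uparrow}(A|B)_\rho$ in terms of $\tr_A\rho_{AB}^\alpha$ by a Lagrange-multiplier computation analogous to the one behind \eref{eq:CEaformula}. No such closed form exists. The reason \eref{eq:CEaformula} works is that $\tr\bigl[\rho_{AB}^\alpha(I_A\otimes\sigma_B^{1-\alpha})\bigr]=\tr\bigl[(\tr_A\rho_{AB}^\alpha)\,\sigma_B^{1-\alpha}\bigr]$, so the dependence on $\sigma_B$ factors through $\tr_A\rho_{AB}^\alpha$ and the optimization becomes an elementary H\"older problem. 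In the sandwiched case the optimand is $\tr\bigl[\bigl((I_A\otimes\sigma_B^{\frac{1-\alpha}{2\alpha}})\rho_{AB}(I_A\otimes\sigma_B^{\frac{1-\alpha}{2\alpha}})\bigr)^\alpha\bigr]$; the sandwiching does not commute past $\tr_A$, the optimal $\sigma_B$ is characterized only by an implicit fixed-point equation, and the optimized quantity is not a function of $\tr_A\rho_{AB}^\alpha$. Consequently the "matching powers of the shared spectra" step cannot get started for this case. The actual proofs of \eref{eq:dualityb} in \rcite{Beig13} and \rcite{MullDSF14} use a genuinely different tool: $L_p$-norm (H\"older) duality for $\|\cdot\|_\alpha$ combined with a minimax exchange, or complex interpolation. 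Deferring the technical execution to the references is legitimate here — the paper does exactly that — but the specific strategy you state for \eref{eq:dualityb} would not succeed if carried out, and you mislocate the obstacle (it is not the alignment of the two optimizers, but the nonexistence of the claimed closed form).
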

The duality relations in \pref{pro:duality} can be used to derive inequalities between different R\'enyi conditional entropies \cite[corollary 4]{TomaBH14} as well as upper and lower bounds for these conditional entropies. 
\begin{lemma}
	\label{lem:RCEineq}
	Suppose $\alpha\in [\frac{1}{2},\infty]$ and $\rho$ is a bipartite state shared by Alice and Bob. Then
	\begin{align}
	H_\alpha^{\downarrow}(A|B)_{\rho}\leq H_\alpha^{\uparrow}(A|B)_{\rho}\leq H_{2-\frac{1}{\alpha}}^{\downarrow}(A|B)_{\rho},\\ \overline{H}_\alpha^{\downarrow}(A|B)_{\rho}\leq \overline{H}_\alpha^{\uparrow}(A|B)_{\rho}\leq \overline{H}_{2-\frac{1}{\alpha}}^{\downarrow}(A|B)_{\rho},\\
	H_\alpha^{\downarrow}(A|B)_{\rho}\leq
	\overline{H}_\alpha^{\downarrow}(A|B)_{\rho}\leq H_{2-\frac{1}{\alpha}}^{\downarrow}(A|B)_{\rho},
	\\
	H_\alpha^{\uparrow}(A|B)_{\rho}\leq 
	\overline{H}_\alpha^{\uparrow}(A|B)_{\rho}\leq H_{2-\frac{1}{\alpha}}^{\uparrow}(A|B)_{\rho}.
	\end{align}	
	The second inequality in each of  the four equations is saturated whenever $\rho$ is pure. 
\end{lemma}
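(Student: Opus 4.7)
The first inequality in each of the four chains is just the corresponding line of \eref{eq:RREineqSimple} applied directly to $\rho_{AB}$, so only the four second inequalities require new argument. My plan, following the spirit of \cite[corollary 4]{TomaBH14} mentioned just after \pref{pro:duality}, is to handle all of them uniformly by purifying $\rho_{AB}$ to a tripartite pure state $|\psi\>_{ABE}$ and then invoking the duality relations of \pref{pro:duality} to transport both sides of each inequality from the $(A|B)$--marginal onto the $(A|E)$--marginal. On the $(A|E)$ side what remains is one of the four elementary inequalities already recorded in \eref{eq:RREineqSimple}.

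Concretely, for the first chain \eref{eq:dualityab} with $\alpha\cdot(1/\alpha)=1$ gives $H_\alpha^{\uparrow}(A|B)_{\rho}=-\overline{H}_{1/\alpha}^{\downarrow}(A|E)_{\rho}$, while \eref{eq:dualitya} with $(2-1/\alpha)+(1/\alpha)=2$ gives $H_{2-1/\alpha}^{\downarrow}(A|B)_{\rho}=-H_{1/\alpha}^{\downarrow}(A|E)_{\rho}$, so the bound $H_\alpha^{\uparrow}(A|B)_{\rho}\le H_{2-1/\alpha}^{\downarrow}(A|B)_{\rho}$ is equivalent to $H_{1/\alpha}^{\downarrow}(A|E)_{\rho}\le\overline{H}_{1/\alpha}^{\downarrow}(A|E)_{\rho}$, which is the third inequality of \eref{eq:RREineqSimple}. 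A pleasant bookkeeping fact is that when the four chains are fed through the duality dictionary they land on the four distinct inequalities in \eref{eq:RREineqSimple}; the remaining three reductions are entirely analogous, using \eref{eq:dualityb} or \eref{eq:dualityab} according to whether the relevant conditional entropies are of $\overline H$--type or $H$--type, and the order parameter that appears on the $(A|E)$ side is always either $1/\alpha$ or $1/(2-1/\alpha)$.

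Saturation for pure $\rho_{AB}$ then follows without additional work: the purifying system $E$ may be taken one-dimensional, so $\rho_{AE}=\rho_A\otimes|0\>\<0|_E$ is a tensor product and \eref{eq:RCEproductS} forces all four R\'enyi conditional entropies of $(A|E)$ to coincide with $S_{\cdot}(\rho_A)$. Whichever inequality of \eref{eq:RREineqSimple} one invoked is thereby collapsed into an equality, saturating the second inequality in each of the four chains.

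I do not anticipate a genuine obstacle here; the content of the lemma reduces to careful bookkeeping in the duality dictionary. The only delicate point is to verify that the order substitutions $\alpha\mapsto 1/\alpha$ and $\alpha\mapsto 1/(2-1/\alpha)$ stay inside the admissible ranges of \pref{pro:duality} when $\alpha\in[\tfrac{1}{2},\infty]$: the first maps $[\tfrac{1}{2},\infty]$ into $[0,2]$ as required by \eref{eq:dualitya}, while the second maps $[\tfrac{1}{2},\infty]$ into $[\tfrac{1}{2},\infty]$ as required by \eref{eq:dualityb}, which is precisely why the hypothesis $\alpha\in[\tfrac{1}{2},\infty]$ is the right one for the statement of the lemma.
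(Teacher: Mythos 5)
Your proposal is correct and follows essentially the same route as the paper's proof: purify $\rho_{AB}$, use the duality relations of \pref{pro:duality} to move each chain onto the $(A|E)$ marginal where it reduces to one of the elementary inequalities in \eref{eq:RREineqSimple}, and observe that for pure $\rho$ the marginal $\rho_{AE}$ is a product state so \eref{eq:RCEproductS} forces equality. Your bookkeeping of which duality relation pairs with which chain, and the range checks on $1/\alpha$ and $1/(2-1/\alpha)$, all hold up.
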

\begin{remark}
	The  inequalities in \lref{lem:RCEineq} were derived in \cite[corollary 4]{TomaBH14}. The first inequalities in the four equations 
	are reproduced from \eref{eq:RREineqSimple}.
	The paper \rcite{TomaBH14} did not discuss the equality conditions.
	The following proof refines the original proof in \rcite{TomaBH14}, so as to show that the second inequalities in the four equations are saturated 
	when $\rho$ is pure.
\end{remark}

\begin{proof}
Suppose $\alpha\in [\frac{1}{2},\infty]$. Let $\sigma$ be a purification of $\rho$ that is shared by A, B, and E. Then $H_\alpha^{\uparrow}(A|B)_{\rho}=H_\alpha^{\uparrow}(A|B)_{\sigma}$, so that
	\begin{align}\label{eq:RCEineqProof}
	H_\alpha^{\uparrow}(A|B)_{\rho}=
	-\overline{H}_\beta^{\downarrow}(A|E)_{\sigma}\leq
	-H_\beta^{\downarrow}(A|E)_{\sigma}=H_\gamma^{\downarrow}(A|B)_\rho
	\end{align}
	according to \pref{pro:duality}, where $\beta=1/\alpha$ and $\gamma=2-\beta=2-(1/\alpha)$. This result confirms the first equation in \lref{lem:RCEineq} given that the first inequality there is trivial. If $\rho$ is pure, then $\sigma_{AE}$ must be a product state, so that the inequality in \eref{eq:RCEineqProof} is saturated according to \eref{eq:RCEproductS}, which implies that $H_\alpha^{\uparrow}(A|B)_{\rho}= H_{2-\frac{1}{\alpha}}^{\downarrow}(A|B)_{\rho}$. The other three equations in \lref{lem:RCEineq} can be derived in a similar manner.
\end{proof} 

\begin{lemma}\label{lem:RCEub}
	\begin{align}
H_\alpha^{\downarrow}(A|B)_{\rho}\leq H_\alpha^{\uparrow}(A|B)_{\rho}&\leq
S_\alpha(\rho_A) \quad \forall \alpha\in [0,\infty], \label{eq:RCEaub}\\  
\overline{H}_\alpha^{\downarrow}(A|B)_{\rho}\leq \overline{H}_\alpha^{\uparrow}(A|B)_{\rho}&\leq S_\alpha(\rho_A)\quad \forall \alpha\in \Bigl[ \frac{1}{2},\infty\Bigr].
\label{eq:RCEbub}	
\end{align}
All the four inequalities are saturated simultaneously for all $\alpha$
iff $\rho$ is a product state.
\end{lemma}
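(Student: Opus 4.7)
The plan is to isolate the genuinely new content (the upper bounds by $S_\alpha(\rho_A)$) from what is already available, and then dispatch the equality part by reduction to the ordinary conditional entropy at $\alpha=1$. The first inequality in each line, namely $H_\alpha^{\downarrow}(A|B)_\rho\leq H_\alpha^{\uparrow}(A|B)_\rho$ and $\overline{H}_\alpha^{\downarrow}(A|B)_\rho\leq \overline{H}_\alpha^{\uparrow}(A|B)_\rho$, is already recorded in \eref{eq:RREineqSimple}, so only the upper bounds by $S_\alpha(\rho_A)$ require work.

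For the sandwiched family \eref{eq:RCEbub}, I would apply the data-processing inequality \eref{eq:DPI-RREb} to the partial trace $\tr_B$, which is CPTP. Since $\tr_B(I_A\otimes\sigma_B)=I_A$ for any state $\sigma_B$ on $\caH_B$, this yields
\[
\underline{S}_\alpha(\rho_{AB}\|I_A\otimes\sigma_B)\geq \underline{S}_\alpha(\rho_A\|I_A)=-S_\alpha(\rho_A)\quad\forall\,\alpha\in[\tfrac{1}{2},\infty].
\]
Minimizing over $\sigma_B$ and negating gives $\overline{H}_\alpha^{\uparrow}(A|B)_\rho\leq S_\alpha(\rho_A)$, and the corresponding bound on $\overline{H}_\alpha^{\downarrow}$ follows from \eref{eq:RREineqSimple}.

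For the Petz-type family \eref{eq:RCEaub}, the analogous argument via \eref{eq:DPI-RREa} covers only $\alpha\in[0,2]$, which is the main obstacle: on $\alpha>2$ the data-processing inequality for $S_\alpha$ is known to fail. The workaround is to chain the cross inequality $H_\alpha^{\uparrow}\leq\overline{H}_\alpha^{\uparrow}$ from \eref{eq:RREineqSimple} with the bound on $\overline{H}_\alpha^{\uparrow}$ just established, which is valid throughout $[\tfrac{1}{2},\infty]\supset(2,\infty]$. Patching the two regimes covers $\alpha\in[0,\infty]$, and a second application of \eref{eq:RREineqSimple} yields the bound for $H_\alpha^{\downarrow}$ as well.

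For the equality condition, one direction is read off directly from \eref{eq:RCEproductS}: a product state $\rho=\rho_A\otimes\rho_B$ forces all four R\'enyi conditional entropies to equal $S_\alpha(\rho_A)$ for every admissible $\alpha$. For the converse I would exploit the single value $\alpha=1$, which lies in all four admissible ranges and at which every one of the four R\'enyi conditional entropies collapses to the ordinary conditional entropy $H(A|B)_\rho$ while $S_\alpha(\rho_A)$ collapses to $S(\rho_A)$. Saturation at $\alpha=1$ therefore reads $H(A|B)_\rho=S(\rho_A)$, i.e., the mutual information $I(A\!:\!B)_\rho$ vanishes, which is equivalent to $\rho=\rho_A\otimes\rho_B$.
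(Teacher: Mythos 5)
Your proposal is correct and follows essentially the same route as the paper: the upper bound on $\overline{H}_\alpha^{\uparrow}$ via monotonicity of $\underline{S}_\alpha$ under the partial trace over B, the same patching of $H_\alpha^{\uparrow}\leq S_\alpha(\rho_A)$ on $[0,2]$ with the chain $H_\alpha^{\uparrow}\leq\overline{H}_\alpha^{\uparrow}\leq S_\alpha(\rho_A)$ on $[\frac{1}{2},\infty]$ to cover all of $[0,\infty]$, and the same reduction of the equality condition to $H(A|B)_\rho=S(\rho_A)$ at $\alpha=1$, which forces $\rho_{AB}=\rho_A\otimes\rho_B$.
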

\begin{remark}
The inequality $\overline{H}_\alpha^{\uparrow}(A|B)_{\rho}\leq S_\alpha(\rho_A)$  was derived in \rcite{LediRD17}.
\end{remark}
\begin{proof}
If $\alpha\in\bigl[\frac{1}{2},\infty\bigr]$, then 
\begin{equation}
\overline{H}_\alpha^{\uparrow}(A|B)_{\rho}
=-\min_{\sigma_B} \underline{S}_\alpha(\rho_{AB} \| I_A \otimes \sigma_B)\leq- \underline{S}_\alpha(\rho_{A} \| I_A )=-S_\alpha(\rho_{A} \| I_A )=S_\alpha(\rho_A),
\end{equation}
where the inequality is due to the monotonicity of $\underline{S}_\alpha$ under the partial trace. This observation confirms \eref{eq:RCEbub} given that the first inequality there is obvious. 
By the same token, $H_\alpha^{\uparrow}(A|B)_{\rho}
\leq S_\alpha(\rho_A)$ for $\alpha\in [0,2]$. In addition $H_\alpha^{\uparrow}(A|B)_{\rho}\leq \overline{H}_\alpha^{\uparrow}(A|B)_{\rho}\leq S_\alpha(\rho_A)$ for $\alpha\in\bigl[\frac{1}{2},\infty\bigr]$, which confirms \eref{eq:RCEaub}. 

If $\rho$ is a product state, then the four inequalities in \lref{lem:RCEub} are saturated according to \eref{eq:RCEproductS}.
Conversely, if all the four inequalities are saturated for all $\alpha$, then  $ H(A|B)_{\rho}= S(\rho_A)$, which implies that
$S(\rho_{AB}\|\rho_A \otimes \rho_B)=0$, so that  $\rho_{AB}=\rho_A \otimes \rho_B $ is a product state.
\end{proof}

The following lemma generalizes the Araki-Lieb inequality $H(A|B)\geq -S(\rho_A)$ \cite{ArakL70}, in which \eref{eq:RCEblb} was derived in \rcite{LediRD17}.
\begin{lemma}\label{lem:RCElb}
	\begin{align}
	H_\alpha^{\downarrow}(A|B)_{\rho}&\geq 	-S_{2-\alpha}(\rho_A)\quad \forall \alpha\in [0,2],\\	
	H_\alpha^{\uparrow}(A|B)_{\rho}&\geq -S_{\frac{1}{\alpha}}(\rho_A)\quad \forall \alpha\in [0,2],\\ 
\overline{H}_\alpha^{\downarrow}(A|B)_{\rho}&\geq 	-S_{\frac{1}{\alpha}}(\rho_A) \quad \forall \alpha\in [0,\infty], \\
\overline{H}_\alpha^{\uparrow}(A|B)_{\rho}&\geq 	-S_{\frac{\alpha}{2\alpha-1}}(\rho_A)\quad \forall \alpha\in \Bigl[ \frac{1}{2},\infty\Bigr].\label{eq:RCEblb}
	\end{align}
All the four inequalities are saturated  simultaneously for all $\alpha$
iff 
the system A is independent of the environment of $\rho$.
In particular,
all the four inequalities are saturated when $\rho$ is pure.
\end{lemma}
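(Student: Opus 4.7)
The plan is to reduce each of the four lower bounds to an upper bound of the type established in \lref{lem:RCEub} by invoking the duality relations of \pref{pro:duality}. I first fix a purification $\sigma_{ABE}$ of $\rho_{AB}$, so that $\sigma_A=\rho_A$ and every conditional entropy of $\rho$ on $AB$ agrees with the corresponding one of $\sigma$ on $AB$. The key observation is that each $H(A|B)$-type quantity on $\rho$ can be rewritten, via the appropriate duality, as the negative of a matching $H(A|E)$-type quantity on $\sigma$, and the latter is bounded above by a R\'enyi entropy of $\sigma_A=\rho_A$ thanks to \lref{lem:RCEub}. Thus each of the four lower bounds follows from a single, carefully chosen duality partner.

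Carrying this out: for the first inequality, \eref{eq:dualitya} with $\beta=2-\alpha$ gives $H_\alpha^{\downarrow}(A|B)_\rho=-H_{2-\alpha}^{\downarrow}(A|E)_\sigma\geq -S_{2-\alpha}(\rho_A)$. For the second, \eref{eq:dualityab} with $\beta=1/\alpha$ yields $H_\alpha^{\uparrow}(A|B)_\rho=-\overline{H}_{1/\alpha}^{\downarrow}(A|E)_\sigma$; since $\alpha\in[0,2]$ keeps $1/\alpha\in[1/2,\infty]$, \lref{lem:RCEub} applies and bounds this below by $-S_{1/\alpha}(\rho_A)$. For the third, the same duality \eref{eq:dualityab}, now read as $\overline{H}^\downarrow$ on the left and $H^\uparrow$ on the right, gives $\overline{H}_\alpha^{\downarrow}(A|B)_\rho=-H_{1/\alpha}^{\uparrow}(A|E)_\sigma\geq -S_{1/\alpha}(\rho_A)$, valid for all $\alpha\in[0,\infty]$ since the range of $H^\uparrow$ in \lref{lem:RCEub} is unrestricted. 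For the fourth, \eref{eq:dualityb} with $\beta=\alpha/(2\alpha-1)$ produces $\overline{H}_\alpha^{\uparrow}(A|B)_\rho=-\overline{H}_{\alpha/(2\alpha-1)}^{\uparrow}(A|E)_\sigma\geq -S_{\alpha/(2\alpha-1)}(\rho_A)$, and as $\alpha$ ranges over $[1/2,\infty]$ so does $\alpha/(2\alpha-1)$, keeping the partner order in the valid range. The endpoint values $\alpha=0$ and $\alpha=\infty$ are handled by continuity of the R\'enyi quantities in the order parameter.

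For the saturation clause, the ``if'' direction is immediate: if A is independent of the environment, meaning $\sigma_{AE}=\sigma_A\otimes\sigma_E$, then by \lref{lem:RCEub} every $H(A|E)$-type quantity on $\sigma$ saturates its upper bound, and the dualities above transport these saturations into the saturation of each corresponding lower bound on $AB$; the pure-state case is a trivial instance since then the environment is one-dimensional. For the ``only if'' direction, I specialize to $\alpha=1$: all four bounds collapse to $H(A|B)_\rho\geq -S(\rho_A)$, and saturation together with the purification identities $S(\rho_{AB})=S(\sigma_E)$, $S(\rho_B)=S(\sigma_{AE})$ yields $S(\sigma_{AE})=S(\sigma_A)+S(\sigma_E)$, which by the equality case of subadditivity forces $\sigma_{AE}=\sigma_A\otimes\sigma_E$. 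The only delicate point I anticipate is matching the admissible ranges of $\alpha$ to each duality relation and verifying that the partner order stays in the range where \lref{lem:RCEub} applies; this is a routine bookkeeping check rather than a genuine obstacle.
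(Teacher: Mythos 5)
Your proof is correct and follows essentially the same route as the paper's: purify $\rho$ to $\sigma_{ABE}$, convert each lower bound into an upper bound on a dual conditional entropy of $A$ given $E$ via Proposition~\ref{pro:duality}, invoke Lemma~\ref{lem:RCEub}, and handle the converse by specializing to $\alpha=1$ and using the equality case of subadditivity (equivalently, $S(\sigma_{AE}\|\sigma_A\otimes\sigma_E)=0$). The only difference is cosmetic: you write out all four duality pairings and the range bookkeeping explicitly, whereas the paper does the first case and declares the rest analogous.
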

\begin{remark}
	When $\rho$ is pure, the system A is independent of the environment of $\rho$. However, the converse does not hold in general. 
	For example, when $\rho=\rho_A\otimes \rho_B$ with $\rho_A$ a pure state, the system A is independent of the environment of $\rho$, although $\rho$ is not necessarily pure.
\end{remark}

\begin{proof}
Let $\sigma$ be a purification of $\rho$ that is shared by A, B, and E. If $\alpha\in [0,2]$, then 
\begin{equation}
H_\alpha^{\downarrow}(A|B)_{\rho}=-	H_{2-\alpha}^{\downarrow}(A|E)_{\sigma}\geq -S_{2-\alpha}(\rho_A) 
\end{equation}
according to \pref{pro:duality} and \lref{lem:RCEub}.
If the system A is independent of the environment of $\rho$, that is, 
if $\sigma_{AE}$ is a product state, 
then  the inequality above is saturated according to \lref{lem:RCEub}.
The other three inequalities in \lref{lem:RCElb} can be derived in a similar manner, and they are saturated when $\sigma_{AE}$ is a product state by the same token.

Conversely, if all the four inequalities in \lref{lem:RCElb} are saturated for all $\alpha$, then we have
 $ H(A|E)=S(\rho_A)=S(\sigma_A)$, which implies that $S(\sigma_{AE}\|\sigma_A \otimes \sigma_E)=0$, so that $\sigma_{AE}=\sigma_A \otimes \sigma_E $. Therefore, the system A is independent of the environment of $\rho$.
\end{proof}

\subsection{\label{sec:RREE}Entanglement measures based on R\'enyi relative entropies}

Given a bipartite state $\rho$ shared by Alice and Bob, we can define two types of R\'enyi relative entropies of entanglement as
\begin{align}
E^\scA_{\rmr,\alpha}(\rho):=
\min_{\sigma \in \scA}S_\alpha(\rho\|\sigma),\quad
\underline{E}^\scA_{\rmr,\alpha}(\rho):=
\min_{\sigma \in \scA}\underline{S}_\alpha(\rho\|\sigma),
\end{align}
where $\scA$ may denote one of the three sets, the set  of separable states, that of PPT states, and that of nondistillable states. To simplify the notation, we will drop this superscript if a statement applies to all three choices of $\scA$. Incidentally,  R\'enyi relative entropies are also useful to quantifying quantum correlations \cite{MisrBPS15}. 
\begin{proposition}\label{pro:RREEmono}
$E_{\rmr,\alpha}(\rho)$ for $\alpha\in [0,2]$ and $\underline{E}_{\rmr,\alpha}(\rho)$ for $\alpha\in [\frac{1}{2},\infty]$ do not increase under local operations and classical communication (LOCC). 
\end{proposition}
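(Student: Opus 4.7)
The plan is to reduce the proposition to the data-processing inequalities \eref{eq:DPI-RREa} and \eref{eq:DPI-RREb} via the stability of each admissible set $\scA$ under LOCC. Let $\Lambda$ denote an arbitrary bipartite LOCC channel (a CPTP map implementable by local operations and classical communication, possibly including ancillas held locally).

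The first step is to verify that $\Lambda(\scA)\subseteq\scA$ for each of the three choices of $\scA$. For separable states this is immediate because LOCC channels form a subclass of separable operations, which manifestly preserve separability. For PPT states the stability follows from the fact that LOCC channels admit a Kraus representation with local Kraus operators $A_k\otimes B_k$, so $(\id\otimes T)\Lambda(\sigma) = \sum_k (A_k\otimes B_k^*)(\id\otimes T)(\sigma)(A_k^\dagger\otimes B_k^\intercal)\geq 0$ whenever $(\id\otimes T)(\sigma)\geq 0$. For nondistillable states the stability is contrapositive: if $\Lambda(\sigma)$ admitted a distillation protocol $\Lambda'$, then $\Lambda'\circ\Lambda^{\otimes n}$ would distill entanglement from $\sigma$, contradicting the hypothesis that $\sigma$ is nondistillable.

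The second step is routine. Fix $\alpha\in[0,2]$ and $\sigma\in\scA$. Since $\Lambda(\sigma)\in\scA$ is a feasible point for the minimization defining $E_{\rmr,\alpha}^\scA(\Lambda(\rho))$,
\begin{align*}
E_{\rmr,\alpha}^\scA(\Lambda(\rho))
\;\leq\; S_\alpha\bigl(\Lambda(\rho)\big\|\Lambda(\sigma)\bigr)
\;\leq\; S_\alpha(\rho\|\sigma),
\end{align*}
where the second inequality is the data-processing inequality \eref{eq:DPI-RREa}. Minimizing the right-hand side over $\sigma\in\scA$ yields $E_{\rmr,\alpha}^\scA(\Lambda(\rho))\leq E_{\rmr,\alpha}^\scA(\rho)$. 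The statement for $\underline{E}_{\rmr,\alpha}^\scA$ with $\alpha\in[\tfrac{1}{2},\infty]$ is obtained by the same chain of inequalities with \eref{eq:DPI-RREb} in place of \eref{eq:DPI-RREa}.

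The only nontrivial step is the closure property of $\scA$ for the nondistillable case, which requires one to invoke the definition of nondistillability in terms of asymptotic LOCC protocols and the composition of channels. The separable and PPT cases are textbook. Note that the proposition is about deterministic (trace-preserving) LOCC; for selective LOCC one would additionally need convexity properties, but $S_\alpha$ and $\underline{S}_\alpha$ are only jointly convex (or concave) in their exponentiated forms on the stated ranges of $\alpha$, so the strictly deterministic framing stated here is what the data-processing inequality most cleanly delivers.
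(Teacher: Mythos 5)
Your proof is correct and follows essentially the same route as the paper, which likewise reduces the monotonicity to the data-processing inequalities \eref{eq:DPI-RREa}--\eref{eq:DPI-RREb} combined with the invariance of each set $\scA$ (separable, PPT, nondistillable) under LOCC; you merely spell out the closure properties that the paper asserts as known facts. No changes needed.
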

This proposition shows that $E_{\rmr,\alpha}(\rho)$ for $\alpha\in [0,2]$ and $\underline{E}_{\rmr,\alpha}(\rho)$ for $\alpha\in [\frac{1}{2},\infty]$ are proper entanglement measures. This conclusion follows from the following two facts: First, the R\'enyi relative entropies
${S}_\alpha(\rho\|\sigma)$
for $\alpha 
\in [0,2]$ and 
$\underline{S}_\alpha(\rho\|\sigma)$ for 
$\alpha \in [\frac{1}{2},\infty]$ satisfy  the data-processing inequality \cite{Petz86} \cite[lemma 8.7]{Haya17book}\cite[lemma 3.4]{Haya17book2}; see \eqref{eq:DPI-RREa} and \eqref{eq:DPI-RREb}.  Second, 
the set  of separable states is invariant under LOCC, and so are the set of PPT states and that of nondistillable states. 
 Actually, here LOCC can be replaced by CPTP maps that preserve the set $\scA$ of concern. Outside these parameter ranges, $E_{\rmr,\alpha}(\rho)$ and $\underline{E}_{\rmr,\alpha}(\rho)$ do not satisfy basic requirements for entanglement measures, but  they are still useful in our study.

Incidentally, 
 the quantities 
 $\exp[(\alpha-1)E_{\rmr,\alpha}(\rho)]$ with $\alpha \in (1,2]$
 and $\exp[(\alpha-1)\underline{E}_{\rmr,\alpha}(\rho)]$ with $\alpha \in (1,\infty]$
 are convex in $\rho$ due to the joint convexity of the corresponding R\'enyi relative entropies \eref{eq:H7} \cite[lemma 3.4]{Haya17book2} .
 By contrast, the quantities 
 $\exp[(\alpha-1)E_{\rmr,\alpha}(\rho)]$ with $\alpha \in [0,1)$
 and $\exp[(\alpha-1)\underline{E}_{\rmr,\alpha}(\rho)]$ with $\alpha \in [\frac{1}{2},1)$
 are concave \cite[lemma 3.4]{Haya17book2}.
 Taking the logarithm, we find that
 the entanglement measures 
 $E_{\rmr,\alpha}(\rho)$ with $\alpha \in [0,1)$
 and $\underline{E}_{\rmr,\alpha}(\rho)$ with $\alpha \in [\frac{1}{2},1)$
 are convex in $\rho$.

In the limit $\alpha\to 1$, both R\'enyi relative entropies of entanglement $E^\scA_{\rmr,\alpha}(\rho)$ and $\underline{E}^\scA_{\rmr,\alpha}(\rho)$
approach the conventional relative entropy of entanglement \cite{VedrPRK97,VedrP98,HoroHHH09}
\begin{equation}
E^\scA_\rmr(\rho):=
\min_{\sigma \in \scA}S(\rho\|\sigma).
\label{eq:erab}
\end{equation}
In another limit $\alpha\to \infty$, the variant $\underline{E}^\scA_{\rmr,\alpha}(\rho)$ approaches the logarithmic robustness of entanglement \cite{Datt09,Datt09b,ZhuCH10}
\begin{align}
\underline{E}^\scA_{\rmr,\infty}(\rho)=E^\scA_\lr(\rho):=\ln (1+E^\scA_\caR(\rho)),\label{H5-3BC}
\end{align}
where
\begin{equation}\label{eq:RoE}
E^\scA_\caR(\rho):=\min\left\{x \Big|x\geq0, \; \exists \mbox{ a state } \sigma,\; \frac{\rho+x\sigma}{1+x}\in \scA \right\}
\end{equation}
is the robustness of entanglement (originally called the generalized robustness of entanglement) \cite{HoroHHH09,VidaT99,HarrN03,Stei03,Bran05}. Here $\sigma$ is an arbitrary quantum state, not necessarily contained in $\scA$.
In general, $E_{\rmr,\alpha}(\rho)$ and $\underline{E}_{\rmr,\alpha}(\rho)$ are monotonically increasing with $\alpha$. Therefore, 
\begin{equation}
\underline{E}_{\rmr,\alpha}(\rho)\leq 
E_\lr(\rho) \quad \forall \alpha\in [0,\infty].
\end{equation}
The special case $E_\rmr(\rho)\leq E_\lr(\rho)$ is well known \cite{HayaMMO06,ZhuCH10}.
In addition, the min relative entropy of entanglement $\underline{E}^\scA_{\rmr,1/2}(\rho)$ is equal to a variant of the geometric (measure of) entanglement~\cite{HoroHHH09,WeiG03,StreKB10},
\begin{equation}
\underline{E}^\scA_{\rmr,1/2}(\rho)=E^\scA_\rmG(\rho):=-\ln \max_{\sigma\in \scA} F(\rho,\sigma),
\end{equation}
recall that $\underline{S}_{1/2}(\rho\|\sigma)=-\ln F(\rho,\sigma)$ according to \eref{eq:MinRE}. The measure $E^\scA_\rmG(\rho)$ has a popular variant defined as 
\begin{equation}
\tilde{E}^\scA_\rmG(\rho):=1- \max_{\sigma\in \scA} F(\rho,\sigma).
\end{equation}
In this paper we are more interested in the first variant $E_\rmG(\rho)$ due to its simple connection with R\'enyi relative entropies of entanglement. It is known that $\underline{E}_{\rmr,1/2}(\rho)$ and $E_{\rmr,0}(\rho)$
set upper bounds for the asymptotic exact distillation rate of entanglement \cite[lemma 8.15]{Haya17book}, and both bounds are saturated for  pure states \cite{HayaKMM03,Haya06}\cite[Exercise 8.32]{Haya17book}.

When $\rho$ is a pure state, 
$E_\rmr(\rho)$ is equal to the von Neumann entropy of each reduced state \cite{VedrP98}, while $E_\caR(\rho)$ is equal to the negativity \cite{VidaW02}. Recall that the negativity of a bipartite state $\rho$ is defined as
\begin{equation}
\caN(\rho):=\tr\bigl(\bigl|\rho^{\rmT_A}\bigr|\bigr)-1,
\end{equation}
where $\rmT_A$ denotes the partial transpose with respect to the subsystem A, and $|M|=\sqrt{M^\dagger M}$.
For example, let $\rho=|\psi\>\<\psi|$ with  $|\psi\>=\sum_j \sqrt{\lambda_j} |jj\>$. Then we have
\begin{equation}\label{eq:REERoEpure}
E_\rmr(\rho)=S(\rho_A)=-\sum_j \lambda_j \ln\lambda_j,\quad 
E_\caR(\rho)=\mathcal{N}(\rho)=\bigl(\tr\sqrt{\rho_A}\bigr)^2-1=\biggl(\sum_j \sqrt{\lambda_j}\biggr)^2-1.
\end{equation}

The following lemma provides lower bounds for R\'enyi relative entropies of entanglement in terms of R\'enyi conditional entropies. The special case \eref{eq:REEvsCE} is derived in \cite{PlenVP00}. 
\begin{lemma}\label{lem:EntRenyiCE}
	Any bipartite state $\rho$ on $\caH_A\otimes \caH_B$ satisfies
	\begin{align}
	E_\rmr(\rho)&\geq -H(A|B)_\rho, \label{eq:REEvsCE}\\
	E_{\rmr,\alpha}(\rho)&\geq -H_\alpha^{\uparrow}(A|B)_\rho \quad \forall \alpha\in [0,2], \label{eq:REEvsCE2}\\
	\underline{E}_{\rmr,\alpha}(\rho)&\geq -\overline{H}_\alpha^{\uparrow}(A|B)_\rho\quad\forall  \alpha\in \Bigl[ \frac{1}{2},\infty\Bigr], \label{eq:REEvsCE3}\\
	{E}_\lr
	(\rho)&\geq 
	-\overline{H}_{\infty}^{\uparrow}(A|B)_{\rho}.\label{eq:REEvsCE4}
	\end{align}
\end{lemma}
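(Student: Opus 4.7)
\emph{Plan.} My plan is to prove all four inequalities via a single structural observation: every $\sigma \in \scA$ satisfies the operator inequality $\sigma \le I_A \otimes \sigma_B$, with $\sigma_B := \tr_A(\sigma)$. For a separable $\sigma = \sum_i p_i \outer{a_i}{a_i} \otimes \outer{b_i}{b_i}$, this is immediate from $\outer{a_i}{a_i} \le I_A$. For PPT or nondistillable $\sigma$, it is the contrapositive of the Horodecki reduction criterion: any state violating $\sigma \le I_A \otimes \sigma_B$ is distillable, so the claim covers all three choices of $\scA$ at once. Given this inequality, each bound follows by comparing $S_\alpha(\rho\|\sigma)$ (resp.\ $\underline{S}_\alpha(\rho\|\sigma)$) with the corresponding quantity evaluated at $I_A\otimes\sigma_B$ and then minimizing over $\sigma_B$.

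\emph{Carrying out the four bounds.} For \eref{eq:REEvsCE2} with $\alpha \in [0,2]$, I would establish that $\sigma \mapsto S_\alpha(\rho\|\sigma)$ is nonincreasing in the operator order: for $\alpha \in [0,1]$ use operator monotonicity of $x^{1-\alpha}$ together with the negative prefactor $\tfrac{1}{\alpha-1}$; for $\alpha \in (1,2]$ use operator antimonotonicity of $x^{1-\alpha}$ together with the positive prefactor; and check the boundary cases $\alpha = 0$ (direct from $S_0(\rho\|\sigma)=-\ln\tr(\Pi_\rho \sigma)$) and $\alpha = 1$ (from operator monotonicity of $\ln$) by hand. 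Hence for every $\sigma \in \scA$,
\begin{equation*}
S_\alpha(\rho\|\sigma) \;\ge\; S_\alpha(\rho\|I_A \otimes \sigma_B) \;\ge\; \min_{\tau_B}\, S_\alpha(\rho\|I_A \otimes \tau_B) \;=\; -H_\alpha^{\uparrow}(A|B)_\rho,
\end{equation*}
and taking the minimum over $\sigma\in\scA$ yields \eref{eq:REEvsCE2}; the inequality \eref{eq:REEvsCE} is the $\alpha = 1$ instance. For \eref{eq:REEvsCE3} on the sandwiched side, the argument is identical after replacing $S_\alpha$ by $\underline{S}_\alpha$; the required monotonicity of $\underline{S}_\alpha$ in its second argument for $\alpha \in [\tfrac{1}{2},\infty]$ is cleanly derived from the data-processing inequality by a direct-sum embedding that sends $(\rho\oplus 0,\,\sigma\oplus(\tilde\sigma-\sigma))$ to $(\rho,\tilde\sigma)$ whenever $\sigma\le\tilde\sigma$. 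Finally, \eref{eq:REEvsCE4} follows from the variational characterization $\underline{S}_\infty(\rho\|\sigma) = \min\{\ln\lambda : \rho \le \lambda\sigma\}$: if $\rho \le \lambda\sigma$ for some $\sigma\in\scA$, then also $\rho \le \lambda\,(I_A \otimes \sigma_B)$, so any $\lambda$ feasible for $E_\lr(\rho)$ is feasible for $-\overline{H}_\infty^{\uparrow}(A|B)_\rho$.

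\emph{Main obstacle.} The one nontrivial ingredient is the operator inequality $\sigma \le I_A \otimes \sigma_B$ for the largest set $\scA$, namely nondistillable states; this is not elementary and requires invoking the Horodecki reduction criterion for distillability. Once this is in hand, the operator-monotonicity estimates and the subsequent minimization over $\sigma_B$ are routine, and the four bounds emerge in a uniform fashion.
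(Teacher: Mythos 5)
Your proposal is correct and follows essentially the same route as the paper: both rest on the operator inequality $\sigma \le I_A \otimes \sigma_B$ for all $\sigma \in \scA$ (the reduction criterion, covering separable, PPT, and nondistillable states) combined with the fact that $S_\alpha$ and $\underline{S}_\alpha$ are nonincreasing in the second argument on the stated parameter ranges, followed by minimization over $\sigma_B$. The only differences are cosmetic: you supply proofs of the second-argument monotonicity (which the paper cites from the literature) and treat $\alpha=\infty$ via the variational formula rather than as a limit of \esref{eq:REEvsCE3}.
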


\begin{proof}
	Let $\sigma$ be an arbitrary nondistillable state. Then $\sigma\leq I_A \otimes \sigma_B$ according to \pref{pro:nondistill} below, so that
	\begin{align}
	& S(\rho\|\sigma)\geq S(\rho\|I_A \otimes \sigma_B),
	\end{align}
	because the relative entropy is monotonically decreasing in the second argument.
	Therefore,
	\begin{align}
	&E^\scA_\rmr(\rho)=\min_{\sigma\in \scA} S(\rho\|\sigma)\geq \min_{\sigma\in \scA}S(\rho\|I_A \otimes \sigma_B)=-H(A|B)_\rho,
	\end{align}
where $\scA$ could be  the set  of separable states, that of PPT states, or that of nondistillable states (note that the first two sets are contained in the third one). 	This observation confirms \eref{eq:REEvsCE}. 
	\Esref{eq:REEvsCE2} and \eqref{eq:REEvsCE3} follow from the same reasoning, note that R\'enyi relative entropies $S_\alpha$ with $\alpha\in [0,2]$ and $\underline{S}_\alpha$ with $\alpha\in \bigl[ \frac{1}{2},\infty\bigr]$ are also monotonically decreasing in the second argument \cite{MullDSF14,WildWY14}\cite[Exercise 5.25]{Haya17book}. \Eref{eq:REEvsCE4} is the limit $\alpha\to \infty$ of~\eref{eq:REEvsCE3}.
\end{proof}

The following proposition was proved in \rcite{HoroH99}. See \rcite{HayaC11} for a partial converse. 
\begin{proposition}[\cite{HoroH99}]\label{pro:nondistill}
Any nondistillable bipartite state $\sigma$ on $\caH_A\otimes \caH_B$ satisfies the reduction criterion, that is, 
\begin{equation}
\sigma\leq I_A \otimes \sigma_B,\quad \sigma\leq \sigma_A \otimes I_B.
\end{equation}
\end{proposition}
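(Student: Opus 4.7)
The plan is to establish the contrapositive: if $\sigma$ violates $\sigma \le I_A \otimes \sigma_B$, then $\sigma$ must be distillable. The symmetric inequality $\sigma \le \sigma_A \otimes I_B$ then follows by exchanging the roles of Alice and Bob.

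Suppose $I_A \otimes \sigma_B - \sigma$ has a negative eigenvalue, witnessed by some unit vector $\ket{\psi}$ with $\bra{\psi}\sigma\ket{\psi} > \bra{\psi}(I_A \otimes \sigma_B)\ket{\psi}$. The first step is to apply on Bob's side the local filter $N_B := \sigma_B^{-1/2}$ (defined on $\supp(\sigma_B)$), yielding the subnormalized state $\tilde\sigma := (I_A \otimes N_B)\sigma(I_A \otimes N_B)$. Its Bob-marginal equals the projector $\Pi_B$ onto $\supp(\sigma_B)$, so $\tr(\tilde\sigma) = d_B := \dim\supp(\sigma_B)$. By Sylvester's law of inertia, conjugation by $I_A \otimes N_B$ preserves the signature, so $I_A \otimes \Pi_B - \tilde\sigma$ is also not positive semidefinite. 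Hence the normalized state $\tilde\sigma/d_B$ possesses an eigenvalue strictly exceeding $1/d_B$; let $\ket{\chi}$ be a corresponding eigenvector. Second, one applies on Alice's side an invertible filter constructed from the Schmidt decomposition of $\ket{\chi}$, designed to map $\ket{\chi}$ (up to normalization) to a maximally entangled vector on a $d\times d$ subsystem, where $d$ is the Schmidt rank of $\ket{\chi}$. A direct computation then shows that the resulting normalized state has singlet fraction (overlap with some maximally entangled pure state) strictly exceeding $1/d$.

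Finally, one invokes the Horodecki-Horodecki-Horodecki distillability criterion: any bipartite state on $\mathbb{C}^d \otimes \mathbb{C}^d$ with singlet fraction larger than $1/d$ is distillable via $U\otimes U^*$ twirling followed by a recurrence or hashing protocol. Since both filtering steps are stochastic LOCC and distillability is monotone under LOCC, the distillability of the filtered state forces distillability of $\sigma$, contradicting the hypothesis. The main technical obstacle lies in the second step, namely verifying quantitatively that Alice's filter converts the large-eigenvalue witness into a genuine singlet-fraction enhancement above $1/d$ rather than merely a large overlap with an arbitrary (possibly weakly entangled) pure state; the Schmidt decomposition of $\ket{\chi}$ is what makes this possible. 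Throughout, one must restrict the argument to $\supp(\sigma_B)$ (and, in the second step, to $\supp(\rho_A)$ of the intermediate state) to handle rank-deficient marginals.
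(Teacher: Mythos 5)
First, a point of reference: the paper offers no proof of \pref{pro:nondistill} at all --- the statement is imported from \cite{HoroH99} with a citation --- so your proposal has to be judged against the standard argument of that reference rather than against anything in the text. Your overall plan (prove the contrapositive by locally filtering down to a state whose singlet fraction exceeds $1/d$, then invoke twirling plus recurrence/hashing and LOCC-monotonicity of distillability) is exactly the right one, and your step 1 and your final appeal to the singlet-fraction criterion are sound.

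The gap is in step 2, and it is fatal as written: the filter that maps $\ket{\chi}$ to a maximally entangled vector is not the filter that converts the large overlap $\bra{\chi}\rho_1\ket{\chi}$ into a large singlet fraction. If $\ket{\chi}=\sum_{i=1}^{d}c_i\ket{e_if_i}$ and Alice applies $F\propto\sum_i c_i^{-1}\outer{e_i}{e_i}$, which indeed sends $\ket{\chi}$ to $\ket{\Phi_d^+}:=\frac{1}{\sqrt d}\sum_i\ket{e_if_i}$, then the overlap of the normalized output with $\ket{\Phi_d^+}$ equals
\begin{equation}
\frac{\bra{\Phi_d^+}(F\otimes I)\rho_1(F\otimes I)\ket{\Phi_d^+}}{\tr[(F^2\otimes I)\rho_1]}
=\frac{\bra{w}\rho_1\ket{w}}{d\sum_i c_i^{-2}\bra{e_i}(\rho_1)_A\ket{e_i}},\qquad \ket{w}=\sum_i c_i^{-1}\ket{e_if_i}.
\end{equation}
The vector that enters is $\ket{w}$, the image of $\ket{\Phi_d^+}$ under $F$, not $\ket{\chi}$; the hypothesis $\bra{\chi}\rho_1\ket{\chi}>1/d_B$ says nothing about $\bra{w}\rho_1\ket{w}$, while any weight of $\rho_1$ on $\ket{e_i}$ with small $c_i$ is amplified by $c_i^{-2}$ in the denominator. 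A concrete counterexample: on $\caH_A\otimes\caH_B=\bbC^3\otimes\bbC^3$ take $\ket{\chi}=\frac{2\sqrt2}{3}\ket{e_1f_1}+\frac13\ket{e_2f_2}$ and $\rho_1=\frac38\outer{\chi}{\chi}+\frac13\outer{e_2f_3}{e_2f_3}+\frac{7}{24}\outer{e_3f_2}{e_3f_2}$. One checks that $(\rho_1)_B=I_B/3$ and that $\ket{\chi}$ is the top eigenvector with eigenvalue $\frac38>\frac13$, so this is a legitimate instance of your construction (with your step 1 acting trivially); yet your filter produces $\frac15\outer{\Phi_2^+}{\Phi_2^+}+\frac45\outer{e_2f_3}{e_2f_3}$, whose singlet fraction is $\frac15$, well below $\frac12$ (and whose overlap with \emph{every} maximally entangled state is below $\frac12$). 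The repair is to use the filter with the Schmidt coefficients themselves, acting on the other side: $G\propto\sum_i c_i\outer{f_i}{f_i}$ on Bob satisfies $(I_A\otimes G)\ket{\Phi_d^+}\propto\ket{\chi}$, so the output overlap with $\ket{\Phi_d^+}$ becomes $\bra{\chi}\sigma\ket{\chi}/\bigl[d\,\bra{\chi}(I_A\otimes\sigma_B)\ket{\chi}\bigr]$, which exceeds $1/d$ precisely because $\ket{\chi}$ witnesses the violation of $\sigma\le I_A\otimes\sigma_B$. With this corrected filter your step 1 is unnecessary: one can apply it directly to $\sigma$ using the original witness $\ket{\psi}$.
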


\subsection{\label{sec:RREC}Coherence measures based on R\'enyi relative entropies}

Consider a $d$-dimensional Hilbert space $\mathcal{H}$ with a reference basis $\{|j\>\}$. A quantum state $\rho$ is incoherent if it is diagonal with respect to the reference basis. The set of incoherent states is denoted by $\caI$. 
A CPTP map $\Lambda$ is incoherence preserving (also called maximally incoherent) if $\Lambda(\rho)\in \caI$ whenever $\rho\in \caI$. 
Suppose the CPTP map $\Lambda$ has Kraus representation $\{K_j\}$, that is, $\Lambda(\rho)=\sum_j K_j \rho K_j^\dag$ for all $\rho$. Then $\{K_j\}$ is incoherent if each Kraus operator $K_j$ maps every incoherent state to an incoherent state, that is $K_j\rho K_j^\dag/\tr\bigl(K_j\rho K_j^\dag\bigr)\in \caI$ whenever $\rho\in \caI$ \cite{Aber06,BaumCP14,WintY16,StreAP17}.
It is strictly incoherent if in addition $K_j^\dag\rho K_j/\tr\bigl(K_j^\dag\rho K_j\bigr)\in \caI$ whenever $\rho\in \caI$ \cite{WintY16}. A CPTP map is necessarily incoherence preserving if it has an (strictly) incoherent Kraus representation. A pure state of the form $|\psi\>=\sum_j c_j |j\>$ with $|c_j|^2=1/d$ is called maximally coherent because any other state in dimension $d$ can be generated from it under (strictly) incoherent operations \cite{BaumCP14}.

Note that the definition of coherence is basis dependent, and so are many related concepts in the resource theory of coherence, including incoherent states, maximally coherent states, incoherence-preserving operations, and (strictly) incoherent operations. All results about coherence in this paper are stated with respect to a given reference basis. 

\begin{figure}
	\centering
	\includegraphics[width=8cm]{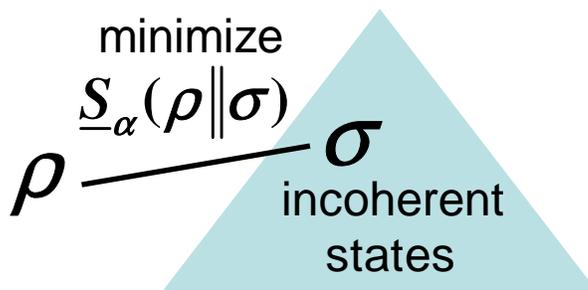}
	\caption{\label{fig:RREC}Illustration of the definition of the R\'enyi relative entropy of coherence $\underline{C}_{\rmr,\alpha}(\rho)$.
		The other  variant ${C}_{\rmr,\alpha}(\rho)$ is defined in a similar way.
	}
\end{figure}

In analogy to entanglement theory, two families of coherence quantifiers can be defined in terms of R\'enyi relative entropies \cite{ChitG16a, ShaoLLX16} as illustrated in \fref{fig:RREC},
\begin{align}
C_{\rmr,\alpha}(\rho):=\min_{\sigma\in \mathcal{I}}S_\alpha(\rho\|\sigma),\quad
\underline{C}_{\rmr,\alpha}(\rho):=\min_{\sigma\in \mathcal{I}}\underline{S}_\alpha(\rho\|\sigma),
\end{align}
where $\mathcal{I}$ denotes the set of incoherent states. Related measures based on Tsallis relative 
entropies were studied in \cite{Rast16}.
Many  results presented in this paper still apply if R\'enyi relative entropies are replaced by 
Tsallis relative 
entropies because the latter are monotonic functions of the former.

\begin{proposition}\label{pro:RRECmono}
	$C_{\rmr,\alpha}(\rho)$ for $\alpha\in [0,2]$ and $\underline{C}_{\rmr,\alpha}(\rho)$ for $\alpha\in [\frac{1}{2},\infty]$ do not increase under incoherence-preserving operations (including incoherent operations). 
\end{proposition}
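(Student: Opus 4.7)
The plan is to mimic the standard proof of monotonicity of the relative entropy of entanglement/coherence, replacing the ordinary relative entropy with its R\'enyi variant. The two key ingredients are already on the table in the excerpt: first, the data-processing inequalities \eqref{eq:DPI-RREa} and \eqref{eq:DPI-RREb} which guarantee contractivity of $S_\alpha$ for $\alpha\in[0,2]$ and of $\underline{S}_\alpha$ for $\alpha\in[\tfrac12,\infty]$ under any CPTP map; second, the defining property of an incoherence-preserving map $\Lambda$, namely $\Lambda(\sigma)\in\caI$ whenever $\sigma\in\caI$. Together these give exactly the parameter ranges claimed in the proposition.

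Concretely, I would fix an incoherence-preserving CPTP map $\Lambda$ and an arbitrary state $\rho$. For the first family, choose (either as a minimizer or as an $\varepsilon$-near-minimizer, since $\caI$ is compact the minimum is attained) an incoherent state $\sigma^{\star}\in\caI$ with $C_{\rmr,\alpha}(\rho)=S_\alpha(\rho\|\sigma^{\star})$. Since $\Lambda(\sigma^{\star})$ is incoherent by hypothesis, it lies in the feasible set of the minimization defining $C_{\rmr,\alpha}(\Lambda(\rho))$, so
\begin{equation*}
C_{\rmr,\alpha}(\Lambda(\rho))
=\min_{\tau\in\caI}S_\alpha(\Lambda(\rho)\|\tau)
\leq S_\alpha(\Lambda(\rho)\|\Lambda(\sigma^{\star}))
\leq S_\alpha(\rho\|\sigma^{\star})
=C_{\rmr,\alpha}(\rho),
\end{equation*}
where the last step invokes \eqref{eq:DPI-RREa} and requires $\alpha\in[0,2]$. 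For the second family the argument is identical, with $S_\alpha$ replaced by $\underline{S}_\alpha$ and \eqref{eq:DPI-RREa} replaced by \eqref{eq:DPI-RREb}, which restricts the admissible range to $\alpha\in[\tfrac12,\infty]$.

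There is essentially no obstacle here: the proposition is a direct corollary of the data-processing inequality for the two families of R\'enyi relative entropies together with the fact that $\Lambda(\caI)\subseteq\caI$. The only small care needed is to point out that the parameter restrictions $\alpha\in[0,2]$ and $\alpha\in[\tfrac12,\infty]$ are inherited verbatim from the ranges in which data processing is known to hold, and to note that the statement for incoherent operations follows because any incoherent CPTP map is in particular incoherence-preserving, as remarked just before the proposition.
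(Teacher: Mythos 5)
Your proposal is correct and follows exactly the argument the paper itself gives: the paper justifies \pref{pro:RRECmono} by citing the same two facts (the data-processing inequalities \eref{eq:DPI-RREa}--\eref{eq:DPI-RREb} in the stated parameter ranges and the invariance of $\caI$ under incoherence-preserving maps), merely in sketch form rather than with the explicit minimizer-and-chain-of-inequalities display you wrote out.
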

This proposition shows that $C_{\rmr,\alpha}(\rho)$ for $\alpha\in [0,2]$ and $\underline{C}_{\rmr,\alpha}(\rho)$ for $\alpha\in [\frac{1}{2},\infty]$ are proper coherence measures, in analogy to the corresponding entanglement measures. This conclusion follows from two facts: First, the R\'enyi relative entropies
${S}_\alpha(\rho\|\sigma)$
for $\alpha \in [0,2]$ and 
$\underline{S}_\alpha(\rho\|\sigma)$ for 
$\alpha \in [\frac{1}{2},\infty]$ satisfy  the data-processing inequality \cite{Petz86} \cite[lemma 8.7]{Haya17book}\cite[lemma 3.4]{Haya17book2}; see \eqref{eq:DPI-RREa} and \eqref{eq:DPI-RREb}.  Second, 
the set  of incoherent states is invariant under incoherence-preserving operations. 
Outside these parameter ranges, $C_{\rmr,\alpha}(\rho)$ and $\underline{C}_{\rmr,\alpha}(\rho)$ do not satisfy basic requirements for coherence measures, but  they are still useful in our study.

Incidentally,  
the quantities 
$\exp[(\alpha-1)C_{\rmr,\alpha}(\rho)]$ with $\alpha \in (1,2]$
and $\exp[(\alpha-1)\underline{C}_{\rmr,\alpha}(\rho)]$ with $\alpha \in (1,\infty]$
are convex in $\rho$ due to the joint convexity of the corresponding R\'enyi relative entropies as shown in \eref{eq:H7} \cite[lemma 3.4]{Haya17book2}.
By contrast, the quantities 
$\exp[(\alpha-1)C_{\rmr,\alpha}(\rho)]$ with $\alpha \in [0,1)$
and $\exp[(\alpha-1)\underline{C}_{\rmr,\alpha}(\rho)]$ with $\alpha \in [\frac{1}{2},1)$
are concave \cite[lemma 3.4]{Haya17book2}.
Taking the logarithm, we find that
the coherence measures 
$C_{\rmr,\alpha}(\rho)$ with $\alpha \in [0,1)$
and $\underline{C}_{\rmr,\alpha}(\rho)$ with $\alpha \in [\frac{1}{2},1)$
are convex in $\rho$.

In the limit $\alpha\to 1$, both measures $C_{\rmr,\alpha}(\rho)$ and $\underline{C}_{\rmr,\alpha}(\rho)$ approach the conventional relative entropy of coherence \cite{Aber06,BaumCP14},
\begin{equation}\label{eq:REC}
C_\rmr(\rho):=
\min_{\sigma \in \caI} S(\rho\|\sigma)=S\bigl(\rho^{\diag}\bigr)-S(\rho),
\end{equation}
where $\rho^{\diag}$ is the diagonal part of $\rho$ with respect to the reference basis.
In another limit $\alpha\to \infty$, the measure $\underline{C}_{\rmr,\alpha}(\rho)$ approaches the logarithmic robustness of coherence \cite{ChitG16a},
\begin{align}
\underline{C}_{\rmr,\infty}(\rho)=C_\lr(\rho):=\ln (1+C_\caR(\rho)),\label{H5-3B}
\end{align}
where
\begin{equation}\label{eq:RoC}
C_\caR(\rho):=\min\left\{x \Big|x\geq0, \; \exists \mbox{ a state } \sigma,\; \frac{\rho+x\sigma}{1+x}\in \caI \right\}
\end{equation}
is the robustness of coherence, which is an observable coherence measure and has an operational interpretation in connection with the task of phase discrimination \cite{NapoBCP16,PianCBN16}. 
Similar to $E_{\rmr,\alpha}(\rho)$ and $\underline{E}_{\rmr,\alpha}(\rho)$ discussed in \sref{sec:RREE}, $C_{\rmr,\alpha}(\rho)$ and $\underline{C}_{\rmr,\alpha}(\rho)$ are monotonically increasing with $\alpha$. Therefore, 
\begin{equation}\label{eq:RRECvsRoC}
\underline{C}_{\rmr,\alpha}(\rho)\leq C_\lr(\rho) \quad \forall \alpha\in [0,\infty],
\end{equation}
which implies the inequality $C_\rmr(\rho)\leq C_\lr(\rho)$  derived in \rcite{RanaPWL16}. In addition, the min relative entropy of coherence $\underline{C}_{\rmr,1/2}(\rho)$ is equal to a variant of the geometric (measure of) coherence,
\begin{equation}
\underline{C}_{\rmr,1/2}(\rho)=C_\rmG(\rho):=-\ln \max_{\sigma\in \caI} F(\rho,\sigma),
\end{equation}
which is closely related to another common variant  \cite{StreSDB15},
\begin{equation}
\tilde{C}_\rmG(\rho):=1- \max_{\sigma\in \caI} F(\rho,\sigma).
\end{equation}
In this paper we are more interested in the first variant $C_\rmG(\rho)$ due to its simple connection with R\'enyi relative entropies of coherence. 
As shown in \sref{sec:ExactDistill}, $\underline{C}_{\rmr,1/2}(\rho)$ and $C_{\rmr,0}(\rho)$
set upper bounds for the asymptotic exact distillation rate of coherence, and both bounds are saturated when $\rho$ is pure.

An explicit formula for $C_{\rmr,\alpha}(\rho)$ was derived in \rcite{ChitG16a} as reproduced below. 
\begin{proposition}[\protect{\cite{ChitG16a}}]\label{pro:RRECformulaA}
	\begin{equation}\label{eq:RRECformulaA}
	C_{\rmr,\alpha}(\rho)=\frac{1}{\alpha-1}\ln \bigl\|(\rho^\alpha)^{\diag}\bigr\|_{1/\alpha}\quad \forall \alpha\in [0,\infty],
	\end{equation}
where $(\rho^\alpha)^{\diag}$ denotes the diagonal matrix with the same diagonal as $\rho^\alpha$.
\end{proposition}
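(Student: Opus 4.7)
The plan is to reduce the minimization $C_{\rmr,\alpha}(\rho)=\min_{\sigma\in\caI}S_\alpha(\rho\|\sigma)$ to an elementary constrained optimization over a probability vector, solve it with (reverse) H\"older, and recognize the answer as the claimed Schatten quasi-norm of $(\rho^\alpha)^{\diag}$. First I would use that every incoherent $\sigma$ is diagonal in the reference basis, so $\sigma=\sum_j \sigma_j |j\>\<j|$ with $\sigma_j\geq 0$ and $\sum_j \sigma_j=1$. Since $\sigma^{1-\alpha}$ is then also diagonal (read on its support), the trace defining $S_\alpha$ collapses to $\tr(\rho^\alpha\sigma^{1-\alpha})=\sum_j r_j\sigma_j^{1-\alpha}$, where $r_j:=\<j|\rho^\alpha|j\>\geq 0$. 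Minimizing $S_\alpha(\rho\|\sigma)$ over $\sigma\in\caI$ thereby reduces to extremizing $\sum_j r_j\sigma_j^{1-\alpha}$ on the probability simplex, the direction of extremization being dictated by the sign of $1/(\alpha-1)$.

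The core step is the factorization $r_j^{1/\alpha}=(r_j\sigma_j^{1-\alpha})^{1/\alpha}\,\sigma_j^{(\alpha-1)/\alpha}$ combined with H\"older's inequality with exponents $p=\alpha$ and $q=\alpha/(\alpha-1)$. For $\alpha>1$ both exponents are positive and conjugate in the usual sense, yielding $\sum_j r_j\sigma_j^{1-\alpha}\geq\bigl(\sum_j r_j^{1/\alpha}\bigr)^\alpha$; for $\alpha\in(0,1)$ one exponent becomes negative and the reverse H\"older inequality flips the bound to $\leq$. In both regimes the same extremal choice $\sigma_j\propto r_j^{1/\alpha}$, which is the unique Lagrange critical point of $\sum_j r_j\sigma_j^{1-\alpha}$ on the simplex, saturates the bound. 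Since $\bigl(\sum_j r_j^{1/\alpha}\bigr)^\alpha$ is precisely the Schatten-$(1/\alpha)$ quasi-norm $\bigl\|(\rho^\alpha)^{\diag}\bigr\|_{1/\alpha}$ of the diagonal matrix with entries $r_j$, taking $\tfrac{1}{\alpha-1}\ln$ of the optimal value turns both inequality directions into the single identity \eref{eq:RRECformulaA} for $\alpha\in(0,1)\cup(1,\infty)$: the sign of $1/(\alpha-1)$ exactly cancels the direction flip between H\"older and its reverse.

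The endpoint cases $\alpha\in\{0,1,\infty\}$ follow by taking the corresponding limits of the formula just proved. For $\alpha\to 1$, the first-order expansion $\bigl\|(\rho^\alpha)^{\diag}\bigr\|_{1/\alpha}=1+(\alpha-1)[S(\rho^{\diag})-S(\rho)]+O((\alpha-1)^2)$, obtained from $\tfrac{d}{d\alpha}\rho^\alpha|_{\alpha=1}=\rho\ln\rho$, recovers the known $C_\rmr(\rho)=S(\rho^{\diag})-S(\rho)$ of \eref{eq:REC}; for $\alpha=0$ the right-hand side reduces to $-\ln\max_j\<j|\Pi_\rho|j\>$, matching the direct optimization of $S_0(\rho\|\sigma)=-\ln\tr(\Pi_\rho\sigma)$ by concentrating $\sigma$ on the basis vector achieving the largest diagonal entry of $\Pi_\rho$; the $\alpha\to\infty$ case is analogous.

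The main subtlety is keeping the direction of H\"older straight across the two regimes $\alpha>1$ and $\alpha\in(0,1)$ and verifying that the same extremal $\sigma_j\propto r_j^{1/\alpha}$ saturates the bound in each. A clean independent check is that the objective $\sum_j r_j\sigma_j^{1-\alpha}$ is convex in $\sigma$ on the simplex for $\alpha>1$ and concave for $\alpha\in(0,1)$, so in either case its unique Lagrange critical point is automatically the global extremum, confirming that the H\"older bound is tight and that the formula \eref{eq:RRECformulaA} truly gives the minimum rather than a mere stationary value.
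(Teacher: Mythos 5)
Your proof is correct and follows essentially the same route as the paper's: both reduce $\min_{\sigma\in\caI}S_\alpha(\rho\|\sigma)$ to the diagonal optimization $\tr\bigl[(\rho^\alpha)^{\diag}\sigma^{1-\alpha}\bigr]$ over the probability simplex and identify the same optimizer $\sigma\propto\bigl[(\rho^\alpha)^{\diag}\bigr]^{1/\alpha}$. The only difference is how the scalar optimization is closed: you carry out the H\"older/reverse-H\"older computation explicitly (with the correct sign bookkeeping against $1/(\alpha-1)$), whereas the paper pulls out the normalization $(\tr Q)^\alpha$ and invokes the non-negativity of $S_\alpha(\hat{Q}\|\sigma)$ with equality at $\sigma=\hat{Q}$ --- two equivalent ways of saying the same thing.
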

\begin{remark}
Note that $C_\rmr(\rho)$ is correctly reproduced in the limit $\alpha\to 1$,
	\begin{equation}\label{eq:RRECformulaAlim}
\lim_{\alpha\to 1} 	C_{\rmr,\alpha}(\rho)=	\lim_{\alpha\to 1}\frac{1}{\alpha-1}\ln \bigl\|(\rho^\alpha)^{\diag}\bigr\|_{1/\alpha}=S(\rho^{\diag})-S(\rho)=C_\rmr(\rho).
	\end{equation}
The paper \cite{ChitG16a} considered $C_{\rmr,\alpha}(\rho)$ only for $\alpha\in [0,2]$, but the formula in \eref{eq:RRECformulaA} is valid for $\alpha\in [0,\infty]$, as demonstrated in the following proof. An alternative proof is presented in the appendix, which is applicable for $\alpha\in [0,2]$.
\end{remark}
\begin{proof}
Suppose $\alpha\geq0$ and $\alpha\neq1$. Then 
\begin{align}
C_{\rmr,\alpha}(\rho)&=\min_{\sigma\in \caI}S_\alpha(\rho\| \sigma)=\min_{\sigma\in \caI}\frac{1}{\alpha-1}\ln\tr(\rho^\alpha\sigma^{1-\alpha})=\min_{\sigma\in \caI}\frac{1}{\alpha-1}\ln\tr\bigl[(\rho^\alpha)^{\diag}\sigma^{1-\alpha}\bigr],
&
\end{align}
where the last equality follows from the assumption that $\sigma$ is diagonal in the reference basis. Let $Q=[(\rho^\alpha)^{\diag}]^{1/\alpha}$ and $\hat{Q}=Q/\tr(Q)$. Then 
\begin{align}
&C_{\rmr,\alpha}(\rho)
=\min_{\sigma\in \caI}\frac{1}{\alpha-1}\ln(Q^\alpha\sigma^{1-\alpha})=\min_{\sigma\in \caI}\frac{1}{\alpha-1}\bigl[\ln(\tr Q)^\alpha+\ln\tr\bigl(\hat{Q}^\alpha\sigma^{1-\alpha}\bigr)\bigr],\nonumber\\
&=\frac{1}{\alpha-1}\ln(\tr Q)^\alpha+\min_{\sigma\in \caI}S_\alpha(\hat{Q}\|\sigma)=\frac{1}{\alpha-1}\ln(\tr Q)^\alpha =\frac{1}{\alpha-1}\ln\bigl\|(\rho^\alpha)^{\diag}\bigr\|_{1/\alpha},
\end{align}
where the minimum is attained when $\sigma=\hat{Q}$. 
\end{proof}

In the case $\alpha=2$, \eref{eq:RRECformulaA} reduces to 
\begin{align}\label{eq:RRECr2}
C_{\rmr,2}(\rho) =\ln\Biggl[\sum_j \biggl(\sum_k |\rho_{jk}|^2\biggr)^{1/2}\Biggr]^2.
\end{align}
In the limit $\alpha\to 0$, \eref{eq:RRECformulaA} yields
\begin{equation}\label{eq:RRECformulaA0}
C_{\rmr,0}(\rho)=-\ln \bigl\|(\Pi_\rho)^{\diag}\bigr\|,
\end{equation}
where $\Pi_\rho$ is the projector onto the support of $\rho$, and $\|M\|=\|M\|_\infty$ denotes the operator norm of $M$. 

When $\rho$ is pure, the formulas of $C_{\rmr,\alpha}(\rho)$ and $\underline{C}_{\rmr,\alpha}(\rho)$ are  derived in \rcite{ChitG16a}.
\begin{proposition}[\protect{\cite{ChitG16a}}]\label{pro:RRECpure}
	Suppose $\rho=|\phi \>\<\phi|$ is a pure state with $|\phi \rangle=\sum_{i}a_i|i\rangle$ and  $|a_i|^2=p_i$. Then we have
	\begin{align}
	C_{\rmr,\alpha}(\rho) 
	&=\begin{cases}
	\frac{\alpha}{\alpha-1}\ln \Bigl(\sum_i p_i^{\frac{1}{\alpha}}\Bigr) 
	& \mbox{ if } \alpha > 0,\\
	-\ln \max_i p_i & \hbox{ if } \alpha = 0,
	\end{cases} 	\label{eq:H4C}\\
	\underline{C}_{\rmr,\alpha}(\rho) 
	&=\begin{cases}
	\frac{2\alpha-1}{\alpha-1}
	\ln \Bigl(\sum_i p_i^{\frac{\alpha}{2\alpha-1}}\Bigr) & \mbox{ if } \alpha > \frac{1}{2},\\
	-\ln \max_i p_i 
	& \mbox{ if } \alpha = \frac{1}{2}.
	\end{cases} \label{eq:H5C}
	\end{align}
\end{proposition}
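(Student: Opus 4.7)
The plan is to derive the first family from the closed-form expression in \pref{pro:RRECformulaA} and the second family by direct minimization. For $C_{\rmr,\alpha}$, I use the fact that $\rho=|\phi\>\<\phi|$ is a rank-one projector, so $\rho^\alpha=\rho$ for every $\alpha>0$. Hence $(\rho^\alpha)^{\diag}=\diag(p_1,\ldots,p_d)$ and
\begin{equation*}
\bigl\|(\rho^\alpha)^{\diag}\bigr\|_{1/\alpha}=\Bigl(\sum_i p_i^{1/\alpha}\Bigr)^{\alpha}.
\end{equation*}
Inserting this into \pref{pro:RRECformulaA} immediately yields $C_{\rmr,\alpha}(\rho)=\frac{\alpha}{\alpha-1}\ln\sum_i p_i^{1/\alpha}$. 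The case $\alpha=0$ follows from \eref{eq:RRECformulaA0}: since $\Pi_\rho=\rho$, we have $(\Pi_\rho)^{\diag}=\diag(p_i)$ with operator norm $\max_i p_i$, giving $C_{\rmr,0}(\rho)=-\ln\max_i p_i$.

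For the sandwiched family with $\alpha>1/2$, $\alpha\ne 1$, fix an arbitrary incoherent state $\sigma=\sum_j q_j|j\>\<j|$ with $q_j\ge 0$ and $\sum_j q_j=1$. Because $\rho$ is rank one, the operator $\sigma^{(1-\alpha)/(2\alpha)}\rho\sigma^{(1-\alpha)/(2\alpha)}=|\tilde\phi\>\<\tilde\phi|$ is rank one with $\|\tilde\phi\|^{2}=\<\phi|\sigma^{(1-\alpha)/\alpha}|\phi\>=\sum_i p_i q_i^{(1-\alpha)/\alpha}$, and its $\alpha$-th power has trace $\|\tilde\phi\|^{2\alpha}$. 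Consequently
\begin{equation*}
\underline{S}_\alpha(\rho\|\sigma)=\frac{\alpha}{\alpha-1}\ln\Bigl(\sum_i p_i q_i^{(1-\alpha)/\alpha}\Bigr).
\end{equation*}
A Lagrange-multiplier computation with constraint $\sum_j q_j=1$ produces the critical point
\begin{equation*}
q_j^{*}=\frac{p_j^{\alpha/(2\alpha-1)}}{\sum_i p_i^{\alpha/(2\alpha-1)}},
\end{equation*}
and substituting $q_j^{*}$ back and simplifying the exponents gives the claimed formula $\underline{C}_{\rmr,\alpha}(\rho)=\frac{2\alpha-1}{\alpha-1}\ln\sum_i p_i^{\alpha/(2\alpha-1)}$. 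The boundary case $\alpha=1/2$ is handled separately using \eref{eq:MinRE}: for pure $\rho$ and diagonal $\sigma$ we have $F(\rho,\sigma)=\<\phi|\sigma|\phi\>=\sum_i p_i q_i$, which is maximized by placing all weight on an index achieving $\max_i p_i$, so $\underline{C}_{\rmr,1/2}(\rho)=-\ln\max_i p_i$.

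The one delicate point is verifying that $q_j^{*}$ is genuinely the extremum sought in both regimes $\alpha>1$ and $1/2<\alpha<1$. For $\alpha>1$ the exponent $(1-\alpha)/\alpha$ is negative, making $\sum_i p_i q_i^{(1-\alpha)/\alpha}$ convex in $(q_j)$; we must minimize it since $\alpha/(\alpha-1)>0$, and convexity guarantees the Lagrange critical point is the minimizer. For $1/2<\alpha<1$ the exponent lies in $(0,1)$, making the objective concave; we must now maximize it since $\alpha/(\alpha-1)<0$, and concavity again certifies optimality of $q_j^{*}$. Consistency at $\alpha=1/2$ may also be verified as a limit: since $\alpha/(2\alpha-1)\to\infty$ as $\alpha\to 1/2^{+}$, the standard large-exponent $\ell^{p}$ asymptotics give $\frac{2\alpha-1}{\alpha-1}\ln\sum_i p_i^{\alpha/(2\alpha-1)}\to-\ln\max_i p_i$, matching the direct computation.
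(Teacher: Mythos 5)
Your proof is correct, and it is worth noting that the paper itself does not prove \pref{pro:RRECpure} at all: it imports the formulas from \rcite{ChitG16a} and only later remarks that \thref{thm:RRECub} and \thref{thm:RRECabBound} "enable a simple derivation" for pure states, by sandwiching $\underline{C}_{\rmr,\alpha}(\rho)$ between the lower bound $\underline{S}_{2-\frac{1}{\alpha}}(\rho\|\rho^{\diag})$ and the upper bound $S_{\frac{\alpha}{2\alpha-1}}(\rho^{\diag})$, both of which collapse to $\frac{2\alpha-1}{\alpha-1}\ln\sum_i p_i^{\alpha/(2\alpha-1)}$ when $\rho$ is pure. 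Your route is more elementary and self-contained. For the first family you do exactly what the paper's machinery would do, reading the result off \pref{pro:RRECformulaA} using $\rho^\alpha=\rho$ for a rank-one projector, with \eref{eq:RRECformulaA0} handling $\alpha=0$. For the sandwiched family you replace the conditional-entropy duality argument by a direct computation of $\underline{S}_\alpha(\rho\|\sigma)=\frac{\alpha}{\alpha-1}\ln\sum_i p_i q_i^{(1-\alpha)/\alpha}$ for rank-one $\rho$ and an explicit optimization over the simplex; your care in checking that the Lagrange critical point is a minimizer for $\alpha>1$ (convex objective, positive prefactor, boundary giving $+\infty$) and a maximizer for $\frac{1}{2}<\alpha<1$ (concave objective, negative prefactor) is exactly the point that is easy to get wrong, and you handle it correctly, as well as the $\alpha=\frac{1}{2}$ case via \eref{eq:MinRE}. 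What the paper's route buys is uniformity — the same duality bounds also give the saturation statements in \thref{thm:RRECabOrder} and the entanglement analogues — whereas your route buys independence from the conditional-entropy formalism and makes the optimal incoherent state $\sigma^*\propto\sum_j p_j^{\alpha/(2\alpha-1)}|j\>\<j|$ explicit.
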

In the case $\alpha=1$, the formulas in \pref{pro:RRECpure} are understood as proper limits.
Alternatively, these formulas can be expressed as follows,
\begin{align}
C_{\rmr,\alpha}(\rho)&=\begin{cases}
S_{\frac{1}{\alpha}}\bigl(\rho^{\diag}\bigr)=\underline{S}_{\rmr,\alpha}(\rho\|\rho^{\diag})& \forall \alpha\in [0,\infty],\\
S_{2-\frac{1}{\alpha}}(\rho\|\rho^{\diag})&\forall \alpha\in \bigl[ \frac{1}{2},\infty\bigr],
\end{cases} \label{eq:RRECApureAlt}\\
\underline{C}_{\rmr,\alpha}(\rho) &=S_{\frac{\alpha}{2\alpha-1}}\bigl(\rho^{\diag}\bigr)=\underline{S}_{2-\frac{1}{\alpha}}(\rho\|\rho^{\diag})\quad \forall \alpha\in \Bigl[ \frac{1}{2},\infty\Bigr]. \label{eq:RRECBpureAlt}
\end{align}
The reasons
 behind these equalities are explained in  \thsref{thm:RRECub} and
 \ref{thm:RRECabBound} in \sref{sec:bounds} and \thref{thm:RRECabOrder} in \sref{sec:relationRREC}. 

\Pref{pro:RRECpure} implies that any pure state $\rho$ satisfies
\begin{align}
C_\lr(\rho)=\underline{C}_{\rmr,\infty}(\rho)
=C_{\rmr,2}(\rho)=C_{\rmL}(\rho)
=2\ln \bigl(\tr \sqrt{\rho^{\diag}}\bigr).\label{eq:RoCCr2}
\end{align}
Here $C_{\rmL}(\rho):=\ln(1+C_{l_1}(\rho))$ and
\begin{equation}
C_{l_1}(\rho):=\sum_{j\neq k}|\rho_{jk}|
\end{equation}
is the $l_1$-norm of coherence \cite{BaumCP14}, which may be seen as the analogue of the negativity in entanglement theory \cite{RanaPWL16,ZhuHC17}. In particular, the $l_1$-norm of coherence can be uniquely characterized by a few simple axioms in a similar way to the negativity. In addition,  the $l_1$-norm of coherence is equal to the maximum entanglement,  quantified by the negativity, produced by  incoherent operations acting on the system and an incoherent ancilla \cite{ZhuHC17}.

According to theorem~4 in \rcite{PianCBN16}, any state $\rho$ in dimension $d$ satisfies
\begin{equation}\label{eq:RoCL}
\frac{ C_{l_1}(\rho)}{d-1}\leq C_{\caR}(\rho)\leq C_{l_1}(\rho),
\end{equation}
which implies that 
\begin{equation}\label{eq:RoCLL}
\ln\Bigl[\frac{ C_{l_1}(\rho)}{d-1}+1\Bigr]\leq C_\lr(\rho)\leq C_{\rmL}(\rho).
\end{equation}
In conjunction with \eref{eq:RRECvsRoC}, we deduce that
\begin{equation}
\underline{C}_{\rmr,\alpha}(\rho)\leq C_{\rmL}(\rho)\quad \forall \alpha\in [0,\infty],
\end{equation}
which implies that $C_\rmr(\rho)\leq C_{\rmL}(\rho)$ in particular. On the other hand, the lower bound for $C_\lr(\rho)$ in \eref{eq:RoCLL} in general does not apply to 
$C_\rmr(\rho)$. For example, when $d=2$, the upper and lower bounds in \eref{eq:RoCLL} coincide, which implies that $C_\lr(\rho)=C_\rmL(\rho)$. However, the inequality $C_\rmr(\rho)<C_\lr(\rho)=C_\rmL(\rho)$ is strict except when $\rho$ is either maximally coherent or incoherent (cf.~\thref{thm:EntropyRobustEqual} in \sref{sec:relationRREC}). 
 To be concrete, consider
$\rho=\proj{\ps}$ with $\ket{\ps}=\cos\t\ket{0}+\sin\t\ket{1}$. 
We have
\begin{equation}
C_\lr(\rho)=C_\rmL(\rho)=\ln[1+\sin(2\theta)],\quad C_\rmr(\rho)=-\cos^2\theta\ln\cos^2\theta-\sin^2\theta\ln\sin^2\theta.
\end{equation}
It is easy to verify that $C_\rmr(\rho)<C_\rmL(\rho)$ when $\t>0$ is sufficiently small.

The coherence measures introduced above can be generalized to a bipartite or multipartite system, in which case the reference basis is the tensor product of reference bases for respective subsystems. 
The following lemma clarifies the relations between entanglement measures and coherence measures based on R\'enyi relative entropies for a bipartite system. It is an immediate consequence of the definitions and the fact that incoherent states  are separable,  PPT, and nondistillable. The same conclusion also applies to a multipartite system.
\begin{lemma}\label{lem:EntCohInEq}
	\begin{align}
	\label{eq:ec1}
	E_\rmr(\rho) &\le C_\rmr(\rho), \quad & E_\lr(\rho) &\le C_\lr(\rho),
	\\
	E_{\rmr,\alpha}(\rho)
	&\le
	C_{\rmr,\alpha}(\rho),
	\quad &
	\underline{E}_{\rmr,\alpha}(\rho)
	& \le
	\underline{C}_{\rmr,\alpha}(\rho)\quad \forall \alpha\in [0,\infty].
	\end{align}
\end{lemma}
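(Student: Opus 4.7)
The plan is straightforward: this is essentially an observation about the set inclusions between the feasible regions in the defining minimizations, so the proof amounts to invoking monotonicity of a minimum under enlargement of its domain.

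First I would observe that, for a bipartite reference basis $\{|i\>\otimes|j\>\}$, every incoherent state is diagonal in the product basis and hence is a convex mixture of product projectors $|i\>\<i|\otimes|j\>\<j|$. Such a mixture is manifestly separable, and separable states are PPT, which in turn are nondistillable. Therefore $\caI\subseteq\scA$ for each of the three choices of $\scA$ appearing in the definitions of $E^\scA_\rmr$, $E^\scA_{\rmr,\alpha}$, $\underline{E}^\scA_{\rmr,\alpha}$, and $E^\scA_\lr$.

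Next I would plug this inclusion into the definitions in Section~\ref{sec:RREE} and Section~\ref{sec:RREC}. Since $S_\alpha(\rho\|\cdot)$ and $\underline{S}_\alpha(\rho\|\cdot)$ are being minimized over a strictly larger set in the entanglement case than in the coherence case, we get
\begin{equation*}
E^\scA_{\rmr,\alpha}(\rho)=\min_{\sigma\in\scA}S_\alpha(\rho\|\sigma)\leq\min_{\sigma\in\caI}S_\alpha(\rho\|\sigma)=C_{\rmr,\alpha}(\rho),
\end{equation*}
and analogously $\underline{E}^\scA_{\rmr,\alpha}(\rho)\leq\underline{C}_{\rmr,\alpha}(\rho)$; taking the limit $\alpha\to 1$ yields $E_\rmr(\rho)\leq C_\rmr(\rho)$. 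For the robustness pair, the same inclusion shows that any admissible $x$ in the definition of $C_\caR(\rho)$ (with witness state $\sigma$ such that $(\rho+x\sigma)/(1+x)\in\caI$) is automatically admissible in the definition of $E^\scA_\caR(\rho)$, since $\caI\subseteq\scA$; hence $E^\scA_\caR(\rho)\leq C_\caR(\rho)$, and applying $\ln(1+\cdot)$ gives $E_\lr(\rho)\leq C_\lr(\rho)$ via \eref{H5-3BC} and \eref{H5-3B}.

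There is no real obstacle here beyond checking the one-line set inclusion $\caI\subseteq\{\text{separable}\}\subseteq\{\text{PPT}\}\subseteq\{\text{nondistillable}\}$; everything else follows from the elementary principle that enlarging the feasible set of a minimization can only decrease its value. Extension to the multipartite case is identical once one adopts the product reference basis.
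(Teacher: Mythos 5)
Your proof is correct and matches the paper's own justification, which simply notes that the lemma ``is an immediate consequence of the definitions and the fact that incoherent states are separable, PPT, and nondistillable,'' i.e.\ the inclusion $\caI\subseteq\scA$ plus monotonicity of a minimum under enlargement of the feasible set. Your additional handling of the robustness via the feasibility of $x$ in \eref{eq:RoE} versus \eref{eq:RoC} is a fine (equivalent) alternative to taking the $\alpha\to\infty$ limit of the $\underline{E}_{\rmr,\alpha}\le\underline{C}_{\rmr,\alpha}$ inequality.
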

Although coherence measures depend on the choice of local bases (unlike entanglement measures), \lref{lem:EntCohInEq} is applicable to any given choice of local bases.
In  \thref{thm:EntCohMC} in the next section, we will show that all the inequalities in \lref{lem:EntCohInEq} are saturated when $\rho$ is a maximally correlated state \cite{Rain99} as long as the corresponding R\'enyi relative entropies satisfy the data processing inequality. Recall that a \emph{maximally correlated state} has the form  \cite{Rain99}
\begin{equation}
\rho_{\mrm{MC}}:=\sum_{jk}\rho_{jk}\outer{jj}{kk}.
\end{equation}


\section{\label{sec:Connect}Connecting entanglement measures and coherence measures}
In this section we establish an operational one-to-one mapping between entanglement  measures and coherence measures based on R\'enyi relative entropies. To achieve this goal, we first clarify the relations between 
these measures and R\'enyi conditional  entropies for  maximally correlated states. As applications, we derive several analytical formulas for R\'enyi relative entropies of entanglement of maximally correlated states, which reproduce  a number of   classic results on the
relative entropy of entanglement and logarithmic  robustness of entanglement as special cases. 
In addition, the results presented here play crucial roles in understanding several  topics discussed in the following sections, including the additivity of R\'enyi relative entropies of coherence (\sref{sec:Additivity}) and the exact coherence distillation rate (\sref{sec:ExactDistill}).

Our study is inspired by a recent work of Streltsov et al. \rcite{StreSDB15}, which provides a general framework for constructing coherence measures from entanglement measures; see also \rscite{ZhuMCF17,ZhuHC17}. 
Let $\rho$ be any density matrix on $\mathcal{H}_A$ of dimension $d_A$. If $\rho$ is coherent, then it can generate entanglement under incoherent operations acting on the system $\mathcal{H}_A$ and an incoherent density matrix on an ancilla $\mathcal{H}_B$. Given any entanglement measure~$E$, the maximum entanglement generated in this way defines a coherence measure $C_E$ according to \rcite{StreSDB15}. More precisely,
\begin{equation}\label{eq:EntGen}
C_E(\rho):=\lim_{d_B\rightarrow \infty}\left\{\sup_{\Lambda_\rmi}E\left(\Lambda_\rmi\left[\rho\otimes \outer{0}{0}\right]\right)\right\},
\end{equation}
where $d_B$ is the dimension of the ancilla, and the supremum is taken over all incoherent operations $\Lambda_\rmi$. Interestingly, \eref{eq:EntGen} maps the relative entropy of entanglement, geometric entanglement, and negativity to the relative entropy of coherence, geometric coherence, and $l_1$-norm of coherence, respectively, that is, 
$C_E=C_\rmr, C_\rmG, C_{l_1}$ when 
$E=E_\rmr, E_\rmG, \caN$ \cite{StreSDB15,ZhuHC17}. Moreover, it enables  establishing a one-to-one mapping between entanglement measures and coherence measures that are based on the convex roof~\cite{ZhuMCF17}.
Surprisingly, the generalized CNOT gate $\cnot$ is the common optimal incoherent operation with respect to all these entanglement measures. Recall that $\cnot$ corresponds to conjugation by the unitary $U_{\mrm{CNOT}}$ defined as follows,
\begin{equation}
U_{\mrm{CNOT}}|jk\>=\begin{cases}
|j (j+k)\>
& k<d_A,\\
|j k\> &k\geq d_A,
\end{cases}
\end{equation}
where the addition is modulo $d_A$, assuming that $d_B\geq d_A$. The operation $\cnot$ turns any state
$\rho=\sum_{jk}\rho_{jk}\outer{j}{k}$ on $\caH_A$ into a maximally correlated state on $\caH_A\otimes \caH_B$ \cite{Rain99, StreSDB15,WintY16},
\begin{equation}
\rho_{\mrm{MC}}=\cnot\left[\rho\otimes \outer{0}{0}\right]=\sum_{jk}\rho_{jk}\outer{jj}{kk}.
\end{equation}
It is worth mentioning that any bipartite entangled pure state is equivalent to a maximally correlated state under local unitary transformations.

Here we shall extend the operational connection between entanglement and coherence to measures based on R\'enyi relative entropies.
By virtue of \eref{eq:EntGen}, we can define 
two families of coherence quantifiers based on the two families of 
R\'enyi relative entropies of entanglement as illustrated in  \fref{fig:EntCoh}, 
\begin{align}
C_{E_{\rmr,\alpha}}(\rho)&:=\lim_{d_B\rightarrow \infty}\left\{\sup_{\Lambda_\rmi}
E_{\rmr,\alpha}\left(\Lambda_\rmi\left[\rho\otimes \outer{0}{0}\right]\right)\right\}, \label{eq:EntCohRenyiA}\\
C_{\underline{E}_{\rmr,\alpha}}(\rho)&:=\lim_{d_B\rightarrow
	\infty}
\left\{\sup_{\Lambda_\rmi}
\underline{E}_{\rmr,\alpha}\left(\Lambda_\rmi\left[\rho\otimes \outer{0}{0}\right]
\right)\right\}. \label{eq:EntCohRenyiB}
\end{align}
According to \pref{pro:RREEmono},  $E_{\rmr,\alpha}(\rho)$ for $\alpha\in [0,2]$ and $\underline{E}_{\rmr,\alpha}(\rho)$ for $\alpha\in [\frac{1}{2},\infty]$ are proper entanglement measures. Therefore, $C_{E_{\rmr,\alpha}}(\rho)$ for $\alpha\in [0,2]$ and $C_{\underline{E}_{\rmr,\alpha}}(\rho)$ for $\alpha\in [\frac{1}{2},\infty]$ are proper coherence measures.  In the limit $\alpha\rightarrow 1$, both $E_{\rmr,\alpha}$ and $\underline{E}_{\rmr,\alpha}$ approach the relative entropy of entanglement $E_\rmr$, so $C_{E_{\rmr,\alpha}}$ and $C_{\underline{E}_{\rmr,\alpha}}$ reduce to $C_{E_{\rmr}}$, which is equal to the relative entropy of coherence $C_\rmr$ according to \rcite{StreSDB15}. In another limit $\alpha\rightarrow \infty$, $\underline{E}_{\rmr,\alpha}$ approaches the logarithmic robustness of entanglement $E_\lr$, and \eref{eq:EntCohRenyiB} takes on the form
\begin{equation}\label{eq:EntGenRo}
C_{E_\lr}(\rho):=\lim_{d_B\rightarrow \infty}\left\{
\sup_{\Lambda_\rmi}E_\lr
\left(\Lambda_\rmi\left[\rho\otimes \outer{0}{0}\right]\right)\right\}.
\end{equation}

\begin{figure}
	\centering
	\includegraphics[width=14cm]{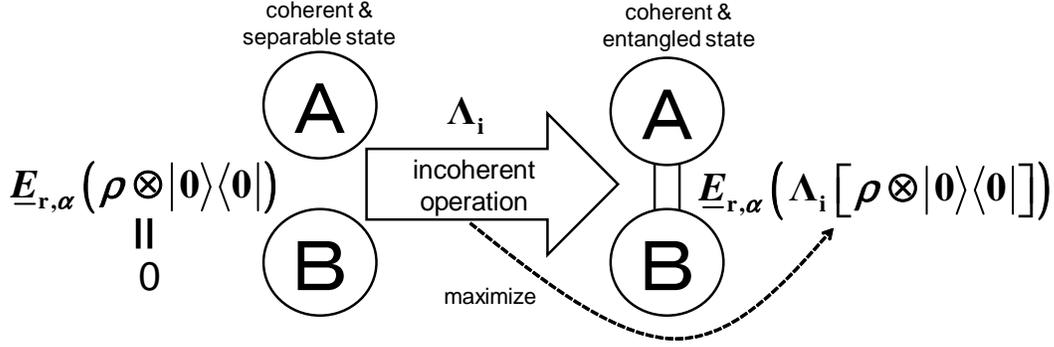}
	\caption{\label{fig:EntCoh}Illustration of the definition of the coherence measure $C_{\underline{E}_{\rmr,\alpha}}(\rho)$ as the maximum R\'enyi relative entropy of entanglement $\underline{E}_{\rmr,\alpha}$ generated by incoherent operations acting on the system and an incoherent ancilla.}
\end{figure}

To achieve our goal, we first show that 
the inequalities between R\'enyi relative entropies of entanglement and R\'enyi conditional entropies as well as R\'enyi relative entropies of coherence in  \lsref{lem:EntRenyiCE} and \ref{lem:EntCohInEq} are saturated for maximally correlated states (for the parameter ranges of interest).
\begin{theorem}\label{thm:EntCohMC}
	Any maximally correlated state $\rho_{\mrm{MC}}$ satisfies the following relations,
	\begin{align}
	E_\rmr({\rho_{\mrm{MC}}})&= C_{\rmr}(\rho_{\mrm{MC}})= -H(A|B)_{\rho_{\mrm{MC}}},\label{eq:EntCohMC1}\\
	E_{\rmr,\alpha}(\rho_{\mrm{MC}})&= C_{\rmr,\alpha}(\rho_{\mrm{MC}})=-H_\alpha^{\uparrow}(A|B)_{\rho_{\mrm{MC}}}\quad \forall \alpha \in [0,2], \label{eq:EntCohMCa}\\
	\underline{E}_{\rmr,\alpha}(\rho_{\mrm{MC}})&= \underline{C}_{\rmr,\alpha}(\rho_{\mrm{MC}})
	=-\overline{H}_\alpha^{\uparrow}(A|B)_{\rho_{\mrm{MC}}}\quad \forall \alpha \in \Bigl[\frac{1}{2},\infty\Bigr],\label{eq:EntCohMCb}\\
	{E}_\lr
	(\rho_{\mrm{MC}})&= {C}_\lr
	(\rho_{\mrm{MC}})
	=-\overline{H}_{\infty}^{\uparrow}(A|B)_{\rho_{\mrm{MC}}}.\label{eq:EntCohMCro}
	\end{align} 
\end{theorem}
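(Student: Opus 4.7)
The plan is to prove everything by sandwiching the middle quantity between a trivial upper bound (incoherent $\subseteq$ separable) and a tight lower bound obtained from a carefully chosen data-processing step; the three chains \eref{eq:EntCohMC1}--\eref{eq:EntCohMCro} will then all collapse to equalities simultaneously, uniformly in the choice of $\scA$ (separable, PPT, or nondistillable).

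First, I would fix $\alpha$ in the appropriate range and chase the trivial sides of the inequality. From \lref{lem:EntCohInEq}, $E_{\rmr,\alpha}(\rho_{\mrm{MC}}) \le C_{\rmr,\alpha}(\rho_{\mrm{MC}})$ and $\underline{E}_{\rmr,\alpha}(\rho_{\mrm{MC}}) \le \underline{C}_{\rmr,\alpha}(\rho_{\mrm{MC}})$, since every incoherent state is separable, PPT, and nondistillable. From \lref{lem:EntRenyiCE}, using that separable/PPT/nondistillable states all satisfy the reduction criterion in \pref{pro:nondistill}, we have $E_{\rmr,\alpha}(\rho_{\mrm{MC}}) \ge -H_\alpha^{\uparrow}(A|B)_{\rho_{\mrm{MC}}}$ for $\alpha\in[0,2]$ and the analogous statement for $\underline{E}_{\rmr,\alpha}$ on $[\tfrac12,\infty]$. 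To close the loop it therefore suffices to prove the single nontrivial inequality
\begin{equation*}
C_{\rmr,\alpha}(\rho_{\mrm{MC}}) \;\le\; -H_\alpha^{\uparrow}(A|B)_{\rho_{\mrm{MC}}},
\end{equation*}
and its $\underline{\phantom{S}}$-counterpart; the logarithmic-robustness equality \eref{eq:EntCohMCro} will follow by taking $\alpha\to\infty$ in \eref{eq:EntCohMCb} and using \eref{H5-3BC} and \eref{H5-3B}.

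The key step is a block-dephasing argument. Let $P:=\sum_j |jj\rangle\!\langle jj|$ be the projector onto the ``correlated subspace'' and define the CPTP pinching $\Phi(\tau) := P\tau P + (I-P)\tau(I-P)$. Since $\rho_{\mrm{MC}}$ is supported inside $\operatorname{range}(P)$, we have $\Phi(\rho_{\mrm{MC}})=\rho_{\mrm{MC}}$. For any state $\sigma_B$ on $\caH_B$, a direct computation gives $P(I_A\otimes \sigma_B)P = \omega(\sigma_B)$, where $\omega(\sigma_B):=\sum_j (\sigma_B)_{jj}\,|jj\rangle\!\langle jj|$ is an \emph{incoherent} state on $\caH_A\otimes\caH_B$. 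Because $\Phi(I_A\otimes\sigma_B)$ is block-diagonal with respect to $P$ and $\rho_{\mrm{MC}}$ is supported in the first block, the standard block-splitting identity for R\'enyi relative entropies gives $S_\alpha(\rho_{\mrm{MC}}\|\Phi(I_A\otimes\sigma_B)) = S_\alpha(\rho_{\mrm{MC}}\|\omega(\sigma_B))$, and likewise for $\underline{S}_\alpha$. Combining with the data-processing inequalities \eref{eq:DPI-RREa} and \eref{eq:DPI-RREb} yields
\begin{align*}
S_\alpha(\rho_{\mrm{MC}}\|I_A\otimes\sigma_B) &\ge S_\alpha(\rho_{\mrm{MC}}\|\omega(\sigma_B)) \ge C_{\rmr,\alpha}(\rho_{\mrm{MC}}) \quad \forall \alpha\in[0,2],\\
\underline{S}_\alpha(\rho_{\mrm{MC}}\|I_A\otimes\sigma_B) &\ge \underline{S}_\alpha(\rho_{\mrm{MC}}\|\omega(\sigma_B)) \ge \underline{C}_{\rmr,\alpha}(\rho_{\mrm{MC}}) \quad \forall \alpha\in\Bigl[\tfrac12,\infty\Bigr].
\end{align*}
Minimizing the left sides over $\sigma_B$ gives exactly the desired upper bounds on $C_{\rmr,\alpha}(\rho_{\mrm{MC}})$ and $\underline{C}_{\rmr,\alpha}(\rho_{\mrm{MC}})$ in terms of the R\'enyi conditional entropies.

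The main obstacle I anticipate is purely bookkeeping: verifying that the block-splitting identity for the R\'enyi relative entropy holds when $\Phi(I_A\otimes\sigma_B)$ may fail to have full support (so one must use the ``power on the support'' convention consistently), and checking the degenerate endpoints $\alpha=0$ and $\alpha=\infty$ as limits. Once those points are settled, the three inequality chains
\begin{equation*}
C_{\rmr,\alpha}(\rho_{\mrm{MC}}) \ge E_{\rmr,\alpha}(\rho_{\mrm{MC}}) \ge -H_\alpha^{\uparrow}(A|B)_{\rho_{\mrm{MC}}} \ge C_{\rmr,\alpha}(\rho_{\mrm{MC}})
\end{equation*}
(and its $\underline{\phantom{S}}$-analogue, and the $\alpha=1$ specialization) close up, forcing equality throughout and yielding all four equations. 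Note that the argument is independent of which of the three sets $\scA$ we choose for $E^{\scA}_{\rmr,\alpha}$, since both the upper and lower bounds on the entanglement measure are uniform in $\scA$.
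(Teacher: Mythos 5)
Your proposal is correct and follows essentially the same route as the paper: the same sandwich $C_{\rmr,\alpha}\ge E_{\rmr,\alpha}\ge -H_\alpha^{\uparrow}$ from \lref{lem:EntCohInEq} and \lref{lem:EntRenyiCE}, closed by exactly the paper's pinching map $\Lambda_P(\tau)=P\tau P+(1-P)\tau(1-P)$ with $P=\sum_j\proj{jj}$, data processing, and the observation that $P(I_A\otimes\sigma_B)P$ is incoherent, with the robustness case obtained as the limit $\alpha\to\infty$. Your explicit attention to the block-splitting identity when the pinched operator lacks full support is a minor refinement of a step the paper leaves implicit, not a different argument.
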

\begin{remark}
Although coherence measures depend on the choice of local bases (unlike entanglement measures), \thref{thm:EntCohMC} is applicable to any given choice of local bases. In addition, 
this theorem applies to entanglement measures defined with respect to three type of states, namely, separable states, PPT states, and nondistillable states (see \sref{sec:RREE}). Similar remarks apply to many other results presented in this paper. 
\end{remark}

\begin{proof}
	Let $P$ be the projector onto the space spanned by $|jj\>$ for all $j$ and define the CPTP map $\Lambda_P$ by $\Lambda_P(\rho): =P\rho P+(1-P)\rho(1-P)$. Then $\Lambda_P(\rho_{\mrm{MC}}) =\rho_{\mrm{MC}}$, so that 
	\begin{align}
	S(\rho_{\mrm{MC}}\|I \otimes \sigma_B)
	\geq S\bigl(\Lambda_P(\rho_{\mrm{MC}})\|\Lambda_P(I \otimes \sigma_B)\bigr)=S\bigl(\rho_{\mrm{MC}}\|P(I \otimes \sigma_B)P\bigr)\label{eq:REmapMC}
	\end{align}
	for any state $\sigma_B$ on $\mathcal{H}_B$. Observing that $P(I \otimes \sigma_B)P=\sum_j (\sigma_B)_{jj}|jj\>\<jj|$ is a normalized incoherent state, we conclude that 
	\begin{align}
	-H(A|B)_{\rho_{\mrm{MC}}}=\min_{\sigma_B}
	S(\rho_{\mrm{MC}}\|I \otimes \sigma_B)\geq \min_{\sigma\in \caI}S(\rho_{\mrm{MC}}\|\sigma)=C_\rmr(\rho_{\mrm{MC}}).
	\end{align}
This result confirms \eref{eq:EntCohMC1} given the inequality $-H(A|B)_{\rho_{\mrm{MC}}}\leq E_\rmr(\rho_{\mrm{MC}})\leq C_{\rmr}(\rho_{\mrm{MC}})$ according to \lsref{lem:EntRenyiCE} and \ref{lem:EntCohInEq}. 
	
	\Esref{eq:EntCohMCa}, \eqref{eq:EntCohMCb}, and \eqref{eq:EntCohMCro} can be proved in a similar way. In particular, \eref{eq:REmapMC}
	still applies if $S$ is replaced by $S_\alpha$ with $\alpha\in [0,2]$ or $\underline{S}_\alpha$ with $\alpha\in \bigl[\frac{1}{2},\infty\bigr]$. Therefore, $-H_\alpha^{\uparrow}(A|B)_{\rho_{\mrm{MC}}}\geq C_{\rmr,\alpha}(\rho_{\mrm{MC}})$ and $-\overline{H}_\alpha^{\uparrow}(A|B)_{\rho_{\mrm{MC}}}\geq \underline{C}_{\rmr,\alpha}(\rho_{\mrm{MC}})$ for the given parameter ranges, which imply \esref{eq:EntCohMCa} and \eqref{eq:EntCohMCb} in view of \lsref{lem:EntRenyiCE} and \ref{lem:EntCohInEq}. Finally, \eref{eq:EntCohMCro} is derived from \eqref{eq:EntCohMCb} by taking the limit $\alpha\rightarrow \infty$. 
\end{proof}

\begin{remark}
	The equality $E_{\rmr}(\rho_{\mrm{MC}})=-H(A|B)_{\rho_{\mrm{MC}}}$ in \eref{eq:EntCohMC1}
	is known before \cite{PlenVP00} \cite[(8.143)]{Haya17book}; also, the equality
	$E_{\rmr}(\rho_{\mrm{MC}})=C_{\rmr}(\rho_{\mrm{MC}})$
	has been derived in \rscite{StreSDB15}. In addition, the relations
	$	E_{\rmr,\alpha}(\rho_{\mrm{MC}})=-H_\alpha(A|B)_{\rho_{\mrm{MC}}}$
	and $\underline{E}_{\rmr,\alpha}(\rho_{\mrm{MC}})
	=-\overline{H}_\alpha(A|B)_{\rho_{\mrm{MC}}}$ in \esref{eq:EntCohMCa} and \eqref{eq:EntCohMCb}
	were stated in \cite[lemma 8.9]{Haya17book}.
	However, the derivation there contains an error, which is fixed here.
\end{remark}

Now we can establish an operational connection between entanglement measures and coherence measures based on R\'enyi relative entropies.
\begin{theorem}\label{thm:CohTwoMeasCoincide}
	We have the following relations,
	\begin{align}
	C_{E_\rmr}(\rho)&= C_{\rmr}(\rho)\label{eq:H22a},\\
	C_{E_{\rmr,\alpha}}(\rho)&= C_{\rmr,\alpha}(\rho) \quad \forall \alpha\in [0,2], \label{eq:H22}\\
	C_{\underline{E}_{\rmr,\alpha}}(\rho)&= \underline{C}_{\rmr,\alpha}(\rho)\quad \forall  \alpha\in \Bigl[ \frac{1}{2},\infty\Bigr],
	\label{eq:H32}\\
	C_{{E}_\lr}(\rho)&= {C}_\lr(\rho) .
	\label{eq:H32+}
	\end{align}
\end{theorem}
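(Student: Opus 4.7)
\begin{proofof}{\Thref{thm:CohTwoMeasCoincide} (proof sketch)}
Our plan is to establish each equation by a matching upper and lower bound on the supremum in \eref{eq:EntCohRenyiA} or \eref{eq:EntCohRenyiB}, with the argument being essentially the same for all four cases, differing only in the admissible range of~$\alpha$ and the monotonicity statement invoked. We focus on \eref{eq:H22}; the proofs of \eref{eq:H22a}, \eref{eq:H32}, and \eref{eq:H32+} are word-for-word analogues using \lref{lem:EntCohInEq} and \pref{pro:RRECmono} in the appropriate parameter range (with \eref{eq:H32+} obtained either directly or as the limit $\alpha\to\infty$ of \eref{eq:H32}).

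For the upper bound $C_{E_{\rmr,\alpha}}(\rho)\le C_{\rmr,\alpha}(\rho)$, fix an ancilla dimension $d_B$, an incoherent operation $\Lambda_\rmi$, and set $\tau:=\rho\otimes|0\>\<0|$. Since $|0\>\<0|$ is incoherent on the ancilla, a direct calculation from \pref{pro:RRECformulaA} (or from the definition) yields $C_{\rmr,\alpha}(\tau)=C_{\rmr,\alpha}(\rho)$. By \pref{pro:RRECmono}, $C_{\rmr,\alpha}$ is monotone under incoherent operations for $\alpha\in[0,2]$, so $C_{\rmr,\alpha}(\Lambda_\rmi(\tau))\le C_{\rmr,\alpha}(\tau)$. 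Combining with the general inequality $E_{\rmr,\alpha}\le C_{\rmr,\alpha}$ of \lref{lem:EntCohInEq}, we get $E_{\rmr,\alpha}(\Lambda_\rmi(\tau))\le C_{\rmr,\alpha}(\rho)$; taking the supremum over $\Lambda_\rmi$ and the limit $d_B\to\infty$ delivers the desired bound.

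For the matching lower bound, we exhibit a specific incoherent operation, namely the generalized CNOT $\cnot$ on a system of ancilla dimension $d_B\ge d_A$. By construction $\cnot(\tau)=\rho_{\mrm{MC}}$, and \thref{thm:EntCohMC} gives $E_{\rmr,\alpha}(\rho_{\mrm{MC}})=C_{\rmr,\alpha}(\rho_{\mrm{MC}})$ for $\alpha\in[0,2]$. It remains to identify $C_{\rmr,\alpha}(\rho_{\mrm{MC}})$ with $C_{\rmr,\alpha}(\rho)$: one direction is monotonicity of $C_{\rmr,\alpha}$ under the incoherent map $\cnot$ applied to $\tau$; for the reverse, note that $\cnot$ is invertible with $\cnot^{-1}(\rho_{\mrm{MC}})=\tau$, and $\cnot^{-1}$ is itself incoherent (its Kraus operator $U_{\mrm{CNOT}}^\dagger$ is a permutation on the computational basis), so monotonicity in the opposite direction yields the reverse inequality. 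Hence $E_{\rmr,\alpha}(\cnot(\tau))=C_{\rmr,\alpha}(\rho)$, so the supremum in \eref{eq:EntCohRenyiA} is at least $C_{\rmr,\alpha}(\rho)$, finishing the proof.

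The only genuine subtlety is the identification $C_{\rmr,\alpha}(\rho_{\mrm{MC}})=C_{\rmr,\alpha}(\rho)$ in the lower bound, and this is handled cleanly by invoking monotonicity in both directions under $\cnot$ and $\cnot^{-1}$; alternatively, one can verify it by inspection from the closed form in \pref{pro:RRECformulaA}, since the diagonal of $(\rho_{\mrm{MC}})^\alpha$ collects exactly the same entries as the diagonal of $(\rho^\alpha)^{\mrm{diag}}$ at the ``$jj$'' positions. The fact that the \emph{same} operation $\cnot$ is simultaneously optimal for every member of both R\'enyi families, and also saturates $E_{\rmr,\alpha}\le C_{\rmr,\alpha}$ on its output, is the substantive content of \thref{thm:EntCohMC}, and it makes the present theorem a short corollary.
\end{proofof}
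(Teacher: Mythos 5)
Your proposal is correct and follows essentially the same route as the paper: the upper bound via \lref{lem:EntCohInEq} together with the monotonicity in \pref{pro:RRECmono}, and the lower bound by exhibiting the generalized CNOT gate and invoking \thref{thm:EntCohMC} on the resulting maximally correlated state. The only difference is that you spell out the identification $C_{\rmr,\alpha}(\rho_{\mrm{MC}})=C_{\rmr,\alpha}(\rho)$ (via the incoherent inverse of $\cnot$, or the closed formula), a step the paper leaves implicit.
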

\begin{remark}
	According to the following proof, the generalized CNOT gate is the common optimal incoherent operation that achieves	 the supremums in the definitions of $C_{E_\rmr}(\rho)$,
	$C_{E_{\rmr,\alpha}}(\rho)$, $C_{\underline{E}_{\rmr,\alpha}}(\rho)$, and 
	$C_{E_\lr}(\rho)$. Here the special case \eref{eq:H22a} was derived in \rcite{StreSDB15}. 
	In view of the relation $\underline{S}_{1/2}(\rho\|\sigma)=-\ln F(\rho,\sigma)$, \thref{thm:CohTwoMeasCoincide} implies  that $C_{E_\rmG}(\rho)=C_\rmG(\rho)$ and
	$C_{\tilde{E}_\rmG}(\rho)=\tilde{C}_\rmG(\rho)$, 	which were derived in \rscite{StreSDB15,ZhuMCF17} based on  different approaches. 
\end{remark}

\begin{proof}Let $\Lambda_\rmi$ be an arbitrary incoherence-preserving operation acting on the system and the ancilla. Then
	\begin{align}
	&E_\rmr\left(\Lambda_\rmi\left[\rho\otimes \outer{0}{0}\right]\right)\leq C_\rmr\left(\Lambda_\rmi\left[\rho\otimes \outer{0}{0}\right]\right) \leq C_\rmr\left(\rho\otimes \outer{0}{0}\right) 
	=C_\rmr(\rho)\label{eq:H2631}
	\end{align}
	according to \lref{lem:EntCohInEq} and \pref{pro:RRECmono}. 
	By \thref{thm:EntCohMC}, the two inequalities are saturated when $\Lambda_\rmi$ is the generalized CNOT gate, in which case $\Lambda_\rmi\left[\rho\otimes \outer{0}{0}\right]$ is maximally correlated. 
	This observation confirms \eref{eq:H22a}. 
	
	\Esref{eq:H22}, \eqref{eq:H32}, and \eqref{eq:H32+} follow from the same reasoning as above, note that \eref{eq:H2631} still holds if $E_\rmr$ is replaced by $E_{\rmr,\alpha}$, $\underline{E}_{\rmr,\alpha}$, and $E_\lr$, while $C_\rmr$ is replaced by $C_{\rmr,\alpha}$, $\underline{C}_{\rmr,\alpha}$, and $C_\lr$ accordingly. 
\end{proof}

\Thref{thm:EntCohMC} is useful not only in connecting entanglement measures and coherence measures based on R\'enyi relative entropies, but also in studying entanglement measures of maximally correlated states, including bipartite pure states.	 
\begin{corollary}\label{cor:RREEa}
	Suppose $\rho$ is a maximally correlated state. Then 
	\begin{equation}
	E_{\rmr,\alpha}(\rho)=\frac{1}{\alpha-1}\ln \bigl\|(\rho^\alpha)^{\diag}\bigr\|_{1/\alpha}\quad \forall \alpha\in [0,2].
	\end{equation}
\end{corollary}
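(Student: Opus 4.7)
The plan is to derive this corollary as an immediate consequence of two results already established in the excerpt. By \Thref{thm:EntCohMC}, any maximally correlated state $\rho$ satisfies
\begin{equation*}
E_{\rmr,\alpha}(\rho) = C_{\rmr,\alpha}(\rho) \quad \forall \alpha\in[0,2],
\end{equation*}
regardless of whether the underlying set $\scA$ in the definition of $E_{\rmr,\alpha}$ is the set of separable states, PPT states, or nondistillable states. Meanwhile, \Pref{pro:RRECformulaA} provides the closed-form expression
\begin{equation*}
C_{\rmr,\alpha}(\rho)=\frac{1}{\alpha-1}\ln \bigl\|(\rho^\alpha)^{\diag}\bigr\|_{1/\alpha}
\end{equation*}
valid for all $\alpha\in[0,\infty]$. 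Chaining these two identities yields the claim.

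Concretely, I would write a one-line proof: starting from $E_{\rmr,\alpha}(\rho)=C_{\rmr,\alpha}(\rho)$ for $\alpha\in[0,2]$ by \Thref{thm:EntCohMC}, substitute the explicit formula from \Pref{pro:RRECformulaA} to obtain the stated expression. No additional computation is needed because the reference basis used to define coherence here is precisely the product basis $\{|jj\>\}$ in which $\rho_{\mathrm{MC}}$ is ``diagonal block-wise,'' and the diagonal of $\rho^\alpha$ with respect to this product basis is well defined.

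The potential obstacle is purely conceptual rather than technical: one must be sure that the coherence measure $C_{\rmr,\alpha}$ appearing in \Thref{thm:EntCohMC} (defined on the bipartite system with respect to the product reference basis) matches the quantity for which \Pref{pro:RRECformulaA} gives the formula. This is fine because \Pref{pro:RRECformulaA} is stated for an arbitrary finite-dimensional Hilbert space with a specified reference basis, and the bipartite product basis qualifies. Beyond this sanity check there is nothing to prove, so the corollary follows essentially by quotation of the preceding theorem and proposition.
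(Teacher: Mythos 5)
Your proof is correct and is exactly the paper's argument: the paper derives the corollary by combining \Thref{thm:EntCohMC} (giving $E_{\rmr,\alpha}(\rho)=C_{\rmr,\alpha}(\rho)$ for maximally correlated states and $\alpha\in[0,2]$) with the closed formula of \Pref{pro:RRECformulaA}. Your added remark that the coherence measure is taken with respect to the product reference basis is a valid sanity check but does not change the route.
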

This corollary is a consequence of \thref{thm:EntCohMC} and  \pref{pro:RRECformulaA}. In conjunction with \eref{eq:RRECformulaAlim}, we deduce that 
\begin{equation}
\lim_{\alpha\to 1}E_{\rmr,\alpha}(\rho)
=S(\rho^{\diag})-S(\rho)=S(\rho_A)-S(\rho)=E_\rmr(\rho),
\end{equation}
which reproduces the relative entropy of entanglement of maximally correlated states \cite{PlenVP00} \cite[(8.143)]{Haya17book}, including bipartite pure states \cite{VedrP98}.

\begin{corollary}\label{cor:RREEpure}
	Suppose $\rho=|\phi\>\<\phi|$ is a bipartite pure state with $|\phi \rangle=\sum_{i}a_i|ii\rangle$ and $|a_i|^2=p_i$. Then 
	\begin{align}
	E_{\rmr,\alpha}(\rho) 
	&=\begin{cases}
	\frac{\alpha}{\alpha-1}\ln \Bigl(\sum_i p_i^{\frac{1}{\alpha}}\Bigr) 
	& \mbox{ if } \alpha > 0,\\
	-\ln \max_i p_i & \hbox{ if } \alpha = 0,
	\end{cases} 	\label{eq:H4}\\
	\underline{E}_{\rmr,\alpha}(\rho) 
	&=\begin{cases}
	\frac{2\alpha-1}{\alpha-1}
	\ln \Bigl(\sum_i p_i^{\frac{\alpha}{2\alpha-1}}\Bigr) & \mbox{ if } \alpha > \frac{1}{2},\\
	-\ln \max_i p_i 
	& \mbox{ if } \alpha = \frac{1}{2}.
	\end{cases} \label{eq:H5}
	\end{align}
\end{corollary}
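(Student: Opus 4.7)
The plan is to recognize that $\rho=|\phi\>\<\phi|$ with $|\phi\>=\sum_i a_i|ii\>$ is \emph{simultaneously} a bipartite pure state and a maximally correlated state with respect to the local reference bases $\{|i\>\}$, since $\<jk|\phi\>=a_j\delta_{jk}$. This dual identity lets us import the coherence formulas already established for pure states and the correspondence between entanglement and coherence for maximally correlated states.

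The core step is to invoke \thref{thm:EntCohMC}, which yields $E_{\rmr,\alpha}(\rho)=C_{\rmr,\alpha}(\rho)$ for $\alpha\in[0,2]$ and $\underline{E}_{\rmr,\alpha}(\rho)=\underline{C}_{\rmr,\alpha}(\rho)$ for $\alpha\in[\tfrac{1}{2},\infty]$, and then to apply \pref{pro:RRECpure}. Viewing $|\phi\>$ as a pure state on $\caH_A\otimes\caH_B$ in the product basis $\{|jk\>\}$, its expansion coefficients vanish off the diagonal $j=k$ and equal $a_i$ on it, so the probability vector $\{|c_{jk}|^2\}$ entering \pref{pro:RRECpure} is exactly $\{p_i\}$. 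Substituting reproduces the claimed formula \eqref{eq:H4} for $\alpha\in[0,2]$ and the full formula \eqref{eq:H5} for $\alpha\in[\tfrac{1}{2},\infty]$.

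The piece that remains is \eqref{eq:H4} on the range $\alpha>2$, where \thref{thm:EntCohMC} is unavailable because $S_\alpha$ no longer obeys the data-processing inequality. For the upper bound I would argue directly: since $\rho$ is pure we have $\rho^\alpha=\rho$, hence $S_\alpha(\rho\|\sigma)=\frac{1}{\alpha-1}\ln\<\phi|\sigma^{1-\alpha}|\phi\>$, and evaluating at the incoherent state $\sigma^{*}=(\sum_j p_j^{1/\alpha})^{-1}\sum_i p_i^{1/\alpha}|ii\>\<ii|$, which is separable, PPT, and nondistillable simultaneously, a short calculation gives $S_\alpha(\rho\|\sigma^*)=\frac{\alpha}{\alpha-1}\ln\sum_i p_i^{1/\alpha}$.

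The main obstacle is the matching lower bound for $\alpha>2$. The route I would follow is a symmetrization argument using the local-phase group $U_{\bm{\theta}}=\diag(e^{i\theta_j})_A\otimes\diag(e^{-i\theta_j})_B$, which fixes $|\phi\>$ and stabilizes each of the three families $\scA$. Exploiting this invariance on a putative minimizer together with a suitable convexity/monotonicity property of the scalar functional $\sigma\mapsto\<\phi|\sigma^{1-\alpha}|\phi\>$ should force the optimizer to be diagonal in $\{|jk\>\}$, at which point the minimization over $\scA$ collapses to the minimization over incoherent states already solved in \pref{pro:RRECformulaA}, delivering the required lower bound. The delicate point --- and where the argument has to be custom-built rather than copied from the $\alpha\in[0,2]$ case --- is that $S_\alpha$ is not jointly convex in this regime, so the symmetrization must be applied to $\sigma$ through the scalar above rather than to $S_\alpha$ as a whole.
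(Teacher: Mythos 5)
Your core argument is exactly the paper's: the paper proves this corollary in one line as a consequence of \thref{thm:EntCohMC} and \pref{pro:RRECpure}, observing (as you do) that $|\phi\>=\sum_i a_i|ii\>$ is a maximally correlated state whose nonzero probability vector in the product reference basis is precisely $\{p_i\}$. That yields \eqref{eq:H4} for $\alpha\in[0,2]$ and all of \eqref{eq:H5}, and this is all the paper itself establishes.

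The rest of your proposal concerns \eqref{eq:H4} for $\alpha>2$, a range the paper's one-line proof does not address either (note that the companion \crref{cor:RREEa} is explicitly restricted to $\alpha\in[0,2]$, and every later use of \eqref{eq:H4} in the paper stays within that range). Your upper bound there is correct and complete: for pure $\rho$ one has $S_\alpha(\rho\|\sigma)=\frac{1}{\alpha-1}\ln\<\phi|\sigma^{1-\alpha}|\phi\>$, and your incoherent $\sigma^*$ evaluates to $\frac{\alpha}{\alpha-1}\ln\bigl(\sum_i p_i^{1/\alpha}\bigr)$. The matching lower bound, however, is only a sketch, and the step it leans on is exactly where the trouble lies: averaging a putative minimizer over the local-phase group helps only if $\sigma\mapsto\<\phi|\sigma^{1-\alpha}|\phi\>$ does not increase under averaging, which one would normally get from convexity of this functional, i.e.\ from operator convexity of $x\mapsto x^{1-\alpha}$. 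That holds for $1-\alpha\in[-1,0)$, i.e.\ $\alpha\in(1,2]$, and fails for $\alpha>2$ --- precisely the regime you are trying to cover. The weaker scalar Jensen bound $\<\phi|\sigma^{1-\alpha}|\phi\>\geq\<\phi|\sigma|\phi\>^{1-\alpha}$ only gives $E_{\rmr,\alpha}(\rho)\geq-\ln\max_i p_i$, which is strictly below $\frac{\alpha}{\alpha-1}\ln\bigl(\sum_i p_i^{1/\alpha}\bigr)$ unless the nonzero $p_i$ are uniform. So for the $\alpha>2$ piece your argument has a genuine gap; since the paper neither proves nor uses that range, the safe fix is to state \eqref{eq:H4} for $\alpha\in[0,2]$ (matching \crref{cor:RREEa}) or to flag the $\alpha>2$ lower bound as unresolved.
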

\Crref{cor:RREEpure} is a consequence of \thref{thm:EntCohMC} and \pref{pro:RRECpure}. It reproduces the relative entropy of entanglement of bipartite pure states \cite{VedrP98} in the limit $\alpha\to 1$. 
In addition,  it reproduces the logarithmic robustness of entanglement in another limit $\alpha\to\infty$ \cite{VidaT99,HarrN03,Stei03} and  implies that any bipartite pure state $\rho$ on $\caH_A\otimes \caH_B$ satisfies
\begin{align}
E_\lr(\rho)&=\underline{E}_{\rmr,\infty}(\rho)
=E_{\rmr,2}(\rho)=\caN_\rmL(\rho)=2\ln \bigl(\tr\sqrt{\rho_A}\bigr),
\end{align}
where $\caN_\rmL(\rho):=\ln(1+\caN(\rho))$ is the logarithmic negativity \cite{HoroHHH09,VidaW02}.

\begin{corollary}\label{cor:RoENegMC}
	If $\rho$ is a $d\times d$ maximally correlated state, then 
	\begin{equation}\label{eq:RobustNegM}
	\frac{ \caN(\rho)}{d-1}\leq E_\caR(\rho)\leq \caN(\rho).
	\end{equation}
\end{corollary}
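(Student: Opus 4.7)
The plan is to reduce the statement to the bound \eqref{eq:RoCL} on the coherence robustness, but applied in a single-party Hilbert space of dimension $d$ rather than in the ambient bipartite space of dimension $d^2$. Two bridges are needed: \thref{thm:EntCohMC} to pass from entanglement to coherence, and the generalized CNOT to pass from the bipartite maximally correlated state to a single-party state on $\caH_A$.

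First, combining \eqref{eq:EntCohMCro} with $E_\lr(\rho)=\ln(1+E_\caR(\rho))$ and $C_\lr(\rho)=\ln(1+C_\caR(\rho))$ yields $E_\caR(\rho)=C_\caR(\rho)$ for any maximally correlated $\rho$. Second, a direct inspection of the partial transpose shows that $\rho^{\rmT_A}=\sum_{jk}\rho_{jk}\outer{kj}{jk}$ is block-diagonal in the basis $\{\ket{jj}\}_j\cup\{\ket{jk},\ket{kj}\}_{j<k}$: each $1\times 1$ diagonal block contributes $\rho_{jj}$, while each off-diagonal $2\times 2$ block has eigenvalues $\pm|\rho_{jk}|$. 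Hence $\tr|\rho^{\rmT_A}|=1+\sum_{j\neq k}|\rho_{jk}|$, so that $\caN(\rho)=\sum_{j\neq k}|\rho_{jk}|$.

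Now introduce the single-party state $\tilde\rho:=\sum_{jk}\rho_{jk}\outer{j}{k}$ on $\caH_A$. The generalized CNOT and its adjoint both permute the joint computational basis, so both are incoherent unitaries, and they realize the transformations $\tilde\rho\otimes\outer{0}{0}\mapsto\rho$ and $\rho\mapsto\tilde\rho\otimes\outer{0}{0}$, respectively. Because attaching or tracing out an incoherent ancilla is also an incoherence-preserving operation, \pref{pro:RRECmono} (applied to $C_\caR$ via $C_\lr=\ln(1+C_\caR)$) gives $C_\caR(\rho)=C_\caR(\tilde\rho)$. Applying \eqref{eq:RoCL} to $\tilde\rho$ in dimension $d$ yields
\[
\frac{C_{l_1}(\tilde\rho)}{d-1}\le C_\caR(\tilde\rho)\le C_{l_1}(\tilde\rho),
\]
and since $C_{l_1}(\tilde\rho)=\sum_{j\neq k}|\rho_{jk}|=\caN(\rho)$, the bound \eqref{eq:RobustNegM} follows.

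The main obstacle is the denominator. A naive application of \eqref{eq:RoCL} directly to $\rho$ as a state on the ambient $d^2$-dimensional bipartite space would only yield the weaker lower bound $\caN(\rho)/(d^2-1)$. The crucial trick is to exploit the fact that the CNOT pullback preserves $C_\caR$, which relocates the problem to a $d$-dimensional single-party space where the sharper Piani--Cianciaruso--Bromley--Napoli bound \eqref{eq:RoCL} can be applied; verifying that this pullback is reversible within the incoherent-preserving class is the one nontrivial check.
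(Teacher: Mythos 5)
Your proof is correct and follows essentially the same route as the paper's: $E_\caR(\rho)=C_\caR(\rho)$ via \thref{thm:EntCohMC}, the identity $\caN(\rho)=C_{l_1}(\rho)$, and the bound \eref{eq:RoCL} applied in dimension $d$ rather than $d^2$. The only difference is that where the paper justifies the dimension-$d$ application by noting that $\rho$ is supported on the $d$-dimensional span of computational-basis states, you make this step fully explicit by pulling $C_\caR$ back through the reversible incoherent generalized CNOT to the single-party state $\tilde\rho$ on $\caH_A$ --- a slightly more careful rendering of the same idea.
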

\begin{proof}
	If $\rho$ is a $d\times d$ maximally correlated state, then $\rho$ is supported on a $d$-dimensional subspace spanned by $d$ computational-basis states. Therefore, $\frac{ C_{l_1}(\rho)}{d-1}\leq C_{\caR}(\rho)\leq C_{l_1}(\rho)$
according to  theorem~4 in \rcite{PianCBN16}; cf.~\eref{eq:RoCL} in \sref{sec:RREC}. Now the corollary follows from the equality
 $E_\caR(\rho)=C_\caR(\rho)$  presented in \thref{thm:EntCohMC} and the equality $\caN(\rho)=C_{l_1}(\rho)$ \cite{RanaPWL16,ZhuHC17},  which is  straightforward to verify.
\end{proof}
\Crref{cor:RoENegMC} above implies that $E_\caR(\rho)= \caN(\rho)$ if $\rho$ is a two-qubit maximally correlated state. This requirement is sufficient but not necessary. Indeed, the equality $E_\caR(\rho)= \caN(\rho)$ holds for all Bell-diagonal states, not all of which are maximally correlated.
To see this, consider the Bell-diagonal state $\rho_{\mrm{BD}}(p):=\sum^3_{j=0}p_j \proj{\Psi_j}$, where $p = (p_0, p_1, p_2, p_3)$ is a probability distribution with $p_0\ge 1/2$ and 
\begin{equation}
\begin{aligned}
\ket{\Psi_0}&=\frac{1}{\sqrt2}(\ket{00}+\ket{11}),\quad
&\ket{\Psi_1}&=\frac{1}{\sqrt2}(\ket{00}-\ket{11}),\\
\ket{\Psi_2}&=\frac{1}{\sqrt2}(\ket{01}+\ket{10}),\quad
&\ket{\Psi_3}&=\frac{1}{\sqrt2}(\ket{01}-\ket{10})
\end{aligned}
\end{equation}
are four Bell states, which form a Bell basis.
The Bell-diagonal state $\rho_{\mrm{BD}}(p)$ is maximally correlated iff $p_2=p_3=0$.
Calculation shows that 
\begin{equation}
E_\caR(\rho_{\mrm{BD}}(p))=\mathcal{N}(\rho_{\mrm{BD}}(p))=2p_0-1, \label{Q1}
\end{equation}
where the equality $E_\caR(\rho_{\mrm{BD}}(p))=2p_0-1$
follows from \cite[(29)]{ZhuCH10}.

\section{\label{sec:Additivity}Additivity of R\'enyi relative entropies of coherence}
In quantum information processing, it is often more efficient to process a family of quantum states collectively. In this context, it is natural to ask whether the resource content of this family is equal to the sum of the resource contents of individual members. Additive resource measures are particularly appealing because they can significantly simplify the task of quantifying resources.  By virtue  of \thref{thm:EntCohMC},
in this section we prove that all R\'enyi relative entropies of coherence defined in \sref{sec:RREC} are additive, as long as they are monotonic under incoherence-preserving operations. Accordingly, R\'enyi relative entropies of entanglement defined in \sref{sec:RREE} are additive for maximally correlated states, although they are  not additive in general~\cite{ZhuCH10}.

To achieve our goal, we first recall the additivity properties of 
R\'{e}nyi conditional entropies, which can be proved using the duality relations presented in \pref{pro:duality}.
\begin{proposition}[\protect{\cite[lemma 7]{HayaT16}}]\label{pro:AdditivityRCE}
	Any pair of states $\rho_1$ and $\rho_2$ shared by Alice and Bob satisfies the following additivity relations:
	\begin{align}
	{H}_\alpha^{\uparrow}(A|B)_{\rho_1\otimes \rho_2}
	&={H}_\alpha^{\uparrow}(A|B)_{\rho_1}
	+
	{H}_\alpha^{\uparrow}(A|B)_{\rho_2}\quad \forall \alpha \in [0,\infty],
	\label{eq:AddRREa}\\
	\overline{H}_\alpha^{\uparrow}(A|B)_{\rho_1\otimes \rho_2}
	&=\overline{H}_\alpha^{\uparrow}(A|B)_{\rho_1}
	+
	\overline{H}_\alpha^{\uparrow}(A|B)_{\rho_2}\quad \forall \alpha \in\Bigl[\frac{1}{2},\infty\Bigr].
	\label{eq:AddRREb}
	\end{align}
\end{proposition}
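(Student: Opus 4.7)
The plan is a two-sided squeeze: establish $\geq$ (superadditivity) by a trivial product ansatz, then establish $\leq$ (subadditivity) either via an explicit closed-form formula or, where that is awkward, via duality.

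For the superadditive direction, I would restrict the minimization in $H_\alpha^\uparrow(A|B)_{\rho_1\otimes\rho_2} = -\min_{\sigma_B}S_\alpha(\rho_1\otimes\rho_2\|I_A\otimes\sigma_B)$ to product states $\sigma_B = \sigma_{B,1}\otimes\sigma_{B,2}$. Both $S_\alpha$ and $\underline{S}_\alpha$ are additive on tensor products (a direct calculation from the definitions in \eref{eq:RRE}), so this restricted minimum splits into the sum of individual minima; the true minimum over arbitrary $\sigma_B$ is no larger, and negating yields $H_\alpha^{\uparrow}(A|B)_{\rho_1\otimes\rho_2} \geq H_\alpha^{\uparrow}(A|B)_{\rho_1} + H_\alpha^{\uparrow}(A|B)_{\rho_2}$. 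The same argument runs verbatim for $\overline{H}_\alpha^\uparrow$ on its parameter range $\alpha\in\bigl[\frac{1}{2},\infty\bigr]$.

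For the reverse inequality with $H_\alpha^\uparrow$, I would substitute directly into the closed-form formula~\eref{eq:CEaformula}, namely $H_\alpha^\uparrow(A|B)_\rho = \frac{\alpha}{1-\alpha}\ln\tr\bigl\{[\tr_A(\rho^\alpha_{AB})]^{1/\alpha}\bigr\}$. For a product state $\rho_1\otimes\rho_2$, every operation inside the formula factorizes: the $\alpha$-th power, the partial trace over $A=A_1A_2$, the subsequent $1/\alpha$ power (which respects the tensor structure since the two factors act on disjoint spaces and commute), and the overall trace. The logarithm converts the resulting product into a sum, giving additivity cleanly for all $\alpha\in[0,\infty]$.

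For $\overline{H}_\alpha^\uparrow$ the analogous closed form is less convenient, so I would instead use the duality relation~\eref{eq:dualityb}: for a purification $\sigma$ of $\rho$, $\overline{H}_\alpha^\uparrow(A|B)_\rho = -\overline{H}_\beta^\uparrow(A|E)_\sigma$ with $\frac{1}{\alpha}+\frac{1}{\beta}=2$. The purification of $\rho_1\otimes\rho_2$ can be chosen as a product of individual purifications, $\sigma=\sigma_1\otimes\sigma_2$, so the reduced state $\sigma_{AE}=\sigma_{1,AE}\otimes\sigma_{2,AE}$ is again a tensor product. Applying the superadditivity bound from the first step to $\overline{H}_\alpha^\uparrow(A|E)_{\sigma_{AE}}$ and then translating back through duality converts superadditivity at parameter $\alpha$ into subadditivity at parameter $\beta$. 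Since $\beta=1/(2-1/\alpha)$ is an involution of $\bigl[\frac{1}{2},\infty\bigr]$ onto itself, this subadditivity covers the full required range, and combining with the direct superadditivity at $\beta$ yields equality. The main technical nuisance I anticipate lies at the boundary points $\alpha\in\{0,\frac{1}{2},1,\infty\}$, where the formulas and duality relations must be interpreted as appropriate limits; since \pref{pro:duality} already states the duality on the closed endpoints, these should extend continuously without extra work.
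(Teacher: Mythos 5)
Your argument is correct, and it is worth noting that the paper itself offers no proof of \pref{pro:AdditivityRCE}: the result is imported from \rcite{HayaT16} with only the one-line remark that it ``can be proved using the duality relations presented in \pref{pro:duality}.'' Your proposal is therefore a legitimate self-contained reconstruction along the standard lines. The three ingredients all check out: superadditivity from the product ansatz is immediate once one notes that both $S_\alpha$ and $\underline{S}_\alpha$ factor additively on tensor products; the closed form \eref{eq:CEaformula} does factorize term by term for $\rho_1\otimes\rho_2$ (the $1/\alpha$ power of a tensor product of positive operators is the tensor product of the powers), giving exact additivity of $H_\alpha^{\uparrow}$; and the duality \eref{eq:dualityb} maps $\alpha\mapsto\beta$ with $\frac{1}{\alpha}+\frac{1}{\beta}=2$, an involution of $[\frac{1}{2},\infty]$, so superadditivity at $\beta$ correctly converts into subadditivity at $\alpha$ over the whole required range. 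One small remark: for \eref{eq:AddRREa} you could avoid relying on the closed formula (and its interpretation at the endpoints $\alpha=0,1,\infty$) by using the third duality \eref{eq:dualityab} instead, which pairs $H_\beta^{\uparrow}(A|B)$ with $\overline{H}_\alpha^{\downarrow}(A|E)$ for $\alpha\beta=1$; the latter involves no optimization and is manifestly additive on product states, so subadditivity of $H_\beta^{\uparrow}$ follows at once for all $\beta\in[0,\infty]$. This is presumably the route the paper has in mind when it points to \pref{pro:duality}, but your formula-based route buys the same conclusion with an equally short computation.
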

Note that the other two types of  conditional entropies ${H}_\alpha^{\downarrow}(A|B)_{\rho}$ and $\overline{H}_\alpha^{\downarrow}(A|B)_{\rho}$ are obviously additive. Combining  \thref{thm:EntCohMC} with  \pref{pro:AdditivityRCE},
we can prove the additivity of R\'enyi relative entropies of coherence, including the logarithmic robustness of coherence.
\begin{theorem}\label{thm:AdditivityRREC}
	\begin{align}
	C_{\rmr}(\rho_1\otimes \rho_2) &=
	C_{\rmr}(\rho_1) +C_{\rmr}(\rho_2), \label{eq:AdditivityREC}\\
	C_{\rmr,\alpha}(\rho_1\otimes \rho_2) &=
	C_{\rmr,\alpha}(\rho_1) +C_{\rmr,\alpha}(\rho_2) \label{eq:AdditivityRREC} \quad\forall \alpha\in [0,\infty] ,\\
	\underline{C}_{\rmr,\alpha}(\rho_1\otimes \rho_2)&=
	\underline{C}_{\rmr,\alpha}(\rho_1)
	+\underline{C}_{\rmr,\alpha}(\rho_2) \quad\forall \alpha\in \Bigl[ \frac{1}{2},\infty\Bigr], \label{eq:AdditivityRRECb} \\
	C_\lr(\rho_1\otimes \rho_2)&=
	C_\lr(\rho_1)
	+ C_\lr(\rho_2). \label{eq:AdditivityRoC}
	\end{align}
\end{theorem}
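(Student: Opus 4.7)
The plan is to combine \thref{thm:EntCohMC}, which for maximally correlated states identifies each R\'enyi coherence measure with the negated corresponding R\'enyi conditional entropy, with the additivity of those conditional entropies stated in \pref{pro:AdditivityRCE}. The generalized CNOT gate will serve as the bridge between an arbitrary product $\rho_1\otimes\rho_2$ and its maximally correlated counterpart $\rho_{1,\mrm{MC}}\otimes\rho_{2,\mrm{MC}}$.

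Subadditivity, that is ``$\leq$'' in each of \eref{eq:AdditivityREC}--\eref{eq:AdditivityRoC}, is the routine direction: restrict the minimization in the definition of the coherence measure to product ansatz states $\sigma_1\otimes\sigma_2$ (recall that $\caI$ is closed under tensor products) and exploit the fact that both $S_\alpha$ and $\underline{S}_\alpha$ are additive on tensor products. Taking $\alpha\to 1$ and $\alpha\to\infty$ then handles \eref{eq:AdditivityREC} and \eref{eq:AdditivityRoC} as limiting cases.

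For the reverse inequality in \eref{eq:AdditivityRRECb} on $\alpha\in[\tfrac12,\infty]$ I would apply $\cnot$ on two disjoint pairs $(A_j,B_j)$ with each ancilla initialized to $\outer{0}{0}$. Adjoining an incoherent ancilla does not change $\underline{C}_{\rmr,\alpha}$ (one direction by the product ansatz $\sigma_A\otimes\outer{0}{0}$, the other by monotonicity under the partial trace, which is an incoherence-preserving CPTP map); since the CNOT is itself incoherent, \pref{pro:RRECmono} yields $\underline{C}_{\rmr,\alpha}(\rho_1\otimes\rho_2)\geq \underline{C}_{\rmr,\alpha}(\rho_{1,\mrm{MC}}\otimes\rho_{2,\mrm{MC}})$. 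The product $\rho_{1,\mrm{MC}}\otimes\rho_{2,\mrm{MC}}$ is itself maximally correlated on the bipartition $A_1A_2\,|\,B_1B_2$ under the natural relabeling $(j_1,j_2)\mapsto j$ of the product basis, so I would rewrite it as $-\overline{H}_\alpha^\uparrow(A_1A_2|B_1B_2)$ via \thref{thm:EntCohMC}, split the sum off using \pref{pro:AdditivityRCE}, and recombine each summand into $\underline{C}_{\rmr,\alpha}(\rho_j)$ by applying \thref{thm:EntCohMC} (or equivalently \thref{thm:CohTwoMeasCoincide}) once more. The same chain with $S_\alpha$ and $H_\alpha^\uparrow$ in place of $\underline{S}_\alpha$ and $\overline{H}_\alpha^\uparrow$ handles \eref{eq:AdditivityRREC} on $\alpha\in[0,2]$ and in particular \eref{eq:AdditivityREC} at $\alpha=1$, and \eref{eq:AdditivityRoC} drops out as the $\alpha\to\infty$ specialization of \eref{eq:AdditivityRRECb}.

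The remaining parameter range $\alpha\in(2,\infty]$ of \eref{eq:AdditivityRREC} is the only nontrivial loose end, because there $C_{\rmr,\alpha}$ is not known to be monotone under incoherent operations and so the CNOT step is unavailable. Here I would fall back on the closed-form expression in \pref{pro:RRECformulaA}: since $(\rho_1\otimes\rho_2)^\alpha=\rho_1^\alpha\otimes\rho_2^\alpha$, the diagonal part factorizes, and the $\ell_{1/\alpha}$-norm is multiplicative on tensor products of nonnegative diagonal matrices, additivity follows directly by inspection. The main point of care throughout is the bookkeeping of the monotonicity ranges of \pref{pro:RRECmono} together with the small verification that $\rho_{1,\mrm{MC}}\otimes\rho_{2,\mrm{MC}}$ is genuinely maximally correlated on the composite bipartition; both are elementary, but they need to be made explicit in order to legitimately invoke \thref{thm:EntCohMC}.
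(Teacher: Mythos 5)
Your proposal is correct and follows essentially the same route as the paper: the paper likewise proves \eref{eq:AdditivityRRECb} by the chain $\underline{C}_{\rmr,\alpha}(\rho)=\underline{C}_{\rmr,\alpha}(\rho_{\mrm{MC}})=-\overline{H}_\alpha^{\uparrow}(A|B)_{\rho_{\mrm{MC}}}$ via \thref{thm:EntCohMC} followed by \pref{pro:AdditivityRCE}, and it handles \eref{eq:AdditivityREC} and \eref{eq:AdditivityRREC} by the closed formulas $C_\rmr(\rho)=S(\rho^{\diag})-S(\rho)$ and \pref{pro:RRECformulaA}. The only cosmetic difference is that the paper invokes the closed formula of \pref{pro:RRECformulaA} for the whole range $\alpha\in[0,\infty]$ at once, whereas you use the CNOT route on $[0,2]$ and fall back on the formula only for $\alpha\in(2,\infty]$; both work, and your explicit checks (that $\rho_{1,\mrm{MC}}\otimes\rho_{2,\mrm{MC}}$ is maximally correlated under relabeling, and the monotonicity ranges) are the right points of care.
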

\Thref{thm:AdditivityRREC} is of fundamental interest to understanding the resource theory of coherence and its distinction from the resource theory of entanglement. Recall that most entanglement measures are in general not additive. In addition, \thref{thm:AdditivityRREC}
can significantly simplify the calculation of  R\'enyi relative entropies of coherence of  tensor products of quantum states.
Recall that the logarithmic robustness of coherence $C_\lr(\rho)$ quantifies the maximum advantage enabled by a quantum state in the task of  phase discrimination as measured by the logarithm of the ratio of success probabilities
\cite{NapoBCP16,PianCBN16}. The additivity of the logarithmic robustness of coherence thus has an operational implication:  the maximum advantage enabled by a tensor product of quantum states is additive. 
\Thref{thm:AdditivityRREC} also 
implies the additivity of one variant of the geometric coherence $C_\rmG(\rho)$, which coincides with $\underline{C}_{\rmr,1/2}(\rho)$. Incidentally,  the coherence of formation is additive according to \rcite{WintY16}, and the logarithmic $l_1$-norm of coherence is obviously additive. Surprisingly, most useful coherence measures are additive or have additive variants, in sharp contrast with entanglement measures. 

\begin{proof}
	\Eref{eq:AdditivityREC} follows from the formula $C_\rmr(\rho)=S(\rho^{\diag})-S(\rho)$, which is well known. Similarly, \eref{eq:AdditivityRREC} follows from the closed formula of $C_{\rmr,\alpha}(\rho)$ in  \pref{pro:RRECformulaA}. 
	
	To show	\eref{eq:AdditivityRRECb}, let $\rho_{\mrm{MC}}=\cnot[\rho\otimes |0\>\<0|]$. Then 	 
	\begin{align}
	\underline{C}_{\rmr,\alpha}(\rho)=
	\underline{C}_{\rmr,\alpha}(\rho\otimes |0\rangle \langle 0|)
	=\underline{C}_{\rmr,\alpha}(\rho_{\mrm{MC}})
	=-\overline{H}_\alpha^{\uparrow}(A|B)_{\rho_{\mrm{MC}}},
	\end{align}
	where the last equality follows from \thref{thm:EntCohMC}. Now \eref{eq:AdditivityRRECb} is an immediate consequence of \pref{pro:AdditivityRCE}. The same reasoning can also be applied to derive $\eref{eq:AdditivityREC}$ and $\eref{eq:AdditivityRoC}$ as well as \eref{eq:AdditivityRREC} for $\alpha\in [0,2]$.
 In addition, 
	\eref{eq:AdditivityRoC} follows from \eref{eq:AdditivityRRECb} by taking the limit $\alpha \to\infty$. 
\end{proof}

The combination of  \thsref{thm:CohTwoMeasCoincide} and \ref{thm:AdditivityRREC} implies the additivity of the maximum R\'enyi relative entropies of entanglement generated by incoherent operations acting on the system and an incoherent ancilla.
\begin{corollary}\label{L5-7-2}
	\begin{align}
	C_{E_{\rmr}}(\rho_1\otimes \rho_2) &=
	C_{E_{\rmr}}(\rho_1) +C_{E_{\rmr}}(\rho_2), \\
	C_{E_{\rmr,\alpha}}(\rho_1\otimes \rho_2) &=
	C_{E_{\rmr,\alpha}}(\rho_1) +C_{E_{\rmr,\alpha}}(\rho_2)\quad \forall \alpha \in [0,2], \label{6-7-A+}\\
	C_{\underline{E}_{\rmr,\alpha}}(\rho_1\otimes \rho_2)&=
	C_{\underline{E}_{\rmr,\alpha}}(\rho_1)
	+C_{\underline{E}_{\rmr,\alpha}}(\rho_2)\quad \forall \alpha \in \Bigl[\frac{1}{2},\infty\Bigr], \label{6-7-B+}\\
	C_{E_\lr}(\rho_1\otimes \rho_2)&=
	C_{E_\lr}(\rho_1)
	+ C_{E_\lr}(\rho_2).
	\end{align}
\end{corollary}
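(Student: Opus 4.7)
The plan is to observe that this corollary is essentially a direct substitution: \Thref{thm:CohTwoMeasCoincide} identifies each of the entanglement-generation coherence quantifiers $C_{E_{\rmr}}$, $C_{E_{\rmr,\alpha}}$ (for $\alpha\in[0,2]$), $C_{\underline{E}_{\rmr,\alpha}}$ (for $\alpha\in[\tfrac{1}{2},\infty]$), and $C_{E_\lr}$ with the corresponding R\'enyi relative entropies of coherence $C_{\rmr}$, $C_{\rmr,\alpha}$, $\underline{C}_{\rmr,\alpha}$, and $C_\lr$, respectively. Once this identification is made, the additivity claims reduce immediately to the additivity statements already established in \Thref{thm:AdditivityRREC}.

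Concretely, the first step is to apply \eref{eq:H22a}--\eqref{eq:H32+} at each of the three arguments $\rho_1$, $\rho_2$, and $\rho_1\otimes\rho_2$. For example, for the variant in \eqref{6-7-B+} with $\alpha\in[\tfrac{1}{2},\infty]$, one writes
\begin{equation}
C_{\underline{E}_{\rmr,\alpha}}(\rho_1\otimes\rho_2)=\underline{C}_{\rmr,\alpha}(\rho_1\otimes\rho_2),\qquad C_{\underline{E}_{\rmr,\alpha}}(\rho_i)=\underline{C}_{\rmr,\alpha}(\rho_i)\quad(i=1,2).
\end{equation}
The second step is to invoke \eref{eq:AdditivityRRECb} of \Thref{thm:AdditivityRREC}, which says $\underline{C}_{\rmr,\alpha}(\rho_1\otimes\rho_2)=\underline{C}_{\rmr,\alpha}(\rho_1)+\underline{C}_{\rmr,\alpha}(\rho_2)$, and the desired additivity of $C_{\underline{E}_{\rmr,\alpha}}$ follows. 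The other three equalities in the corollary are derived in exactly the same manner, using \eqref{eq:AdditivityREC}, \eqref{eq:AdditivityRREC}, and \eqref{eq:AdditivityRoC} together with \eqref{eq:H22a}, \eqref{eq:H22}, and \eqref{eq:H32+}, respectively; note that for \eqref{6-7-A+} the parameter range $\alpha\in[0,2]$ matches exactly the range in which \Thref{thm:CohTwoMeasCoincide} applies to the $S_\alpha$-type measure.

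There is essentially no obstacle here beyond bookkeeping of parameter ranges: the proof is a two-line composition of an identification (\Thref{thm:CohTwoMeasCoincide}) with an additivity theorem (\Thref{thm:AdditivityRREC}). The only point worth flagging is that the corollary does \emph{not} assert additivity of the underlying entanglement measures $E_{\rmr,\alpha}$, $\underline{E}_{\rmr,\alpha}$, or $E_\lr$ themselves (which in general fails), but only additivity of the \emph{maximum} generated by incoherent operations acting on the system and an incoherent ancilla. The mechanism behind this additivity is precisely that the common optimizer --- the generalized CNOT gate, noted in the remark after \Thref{thm:CohTwoMeasCoincide} --- produces maximally correlated states, on which the entanglement measures coincide with the R\'enyi conditional entropies and hence inherit additivity through \pref{pro:AdditivityRCE}.
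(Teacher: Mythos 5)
Your proposal is correct and follows exactly the paper's route: the paper derives this corollary as the combination of \Thref{thm:CohTwoMeasCoincide} (identifying $C_{E_{\rmr,\alpha}}$, $C_{\underline{E}_{\rmr,\alpha}}$, $C_{E_\lr}$ with $C_{\rmr,\alpha}$, $\underline{C}_{\rmr,\alpha}$, $C_\lr$) with the additivity results of \Thref{thm:AdditivityRREC}. Your bookkeeping of the parameter ranges and the remark on why only the \emph{maximum generated} entanglement is additive are both accurate.
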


Further, the combination of  \thref{thm:EntCohMC} and \pref{pro:AdditivityRCE} (or \thref{thm:AdditivityRREC}) implies the additivity of R\'enyi relative entropies of entanglement of maximally correlated states. This result is of intrinsic interest to understanding entanglement properties of maximally correlated states.
\begin{corollary}\label{cor:AdditivityRREE}
	If $\rho_1$ and $\rho_2$ are maximally correlated states, then
	\begin{align}
	{E_{\rmr}}(\rho_1\otimes \rho_2) &=
	{E_{\rmr}}(\rho_1) +{E_{\rmr}}(\rho_2), \\
	{E_{\rmr,\alpha}}(\rho_1\otimes \rho_2) &=
	{E_{\rmr,\alpha}}(\rho_1) +{E_{\rmr,\alpha}}(\rho_2)\quad \forall \alpha \in [0,2], \label{6-7-AC}\\
	{\underline{E}_{\rmr,\alpha}}(\rho_1\otimes \rho_2)&=
	{\underline{E}_{\rmr,\alpha}}(\rho_1)
	+{\underline{E}_{\rmr,\alpha}}(\rho_2)\quad \forall \alpha \in \Bigl[\frac{1}{2},\infty\Bigr], \label{6-7-BC}\\
	{E_\lr}(\rho_1\otimes \rho_2)&=
	{E_\lr}(\rho_1)
	+ {E_\lr}(\rho_2).
	\end{align}
\end{corollary}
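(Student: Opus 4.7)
The plan is to reduce the additivity for entanglement to the additivity of coherence already established in \Thref{thm:AdditivityRREC}, using the identification furnished by \Thref{thm:EntCohMC}. The enabling observation is that the tensor product of two maximally correlated states is again maximally correlated on the grouped bipartition with respect to the tensor-product reference basis, so \Thref{thm:EntCohMC} can be applied uniformly to the factors and to their product.

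First I would make this structural observation explicit. Writing $\rho_i = \sum_{j,k}(\rho_i)_{jk}\outer{jj}{kk}$ for $i=1,2$ on $\caH_{A_i}\otimes \caH_{B_i}$ and regrouping,
\begin{equation}
\rho_1\otimes\rho_2 = \sum_{j,k,m,n}(\rho_1)_{jk}(\rho_2)_{mn}\,\outer{jm,jm}{kn,kn},
\end{equation}
which is manifestly maximally correlated on the bipartition $A_1A_2|B_1B_2$ with respect to the product basis $\{\ket{jm}\otimes\ket{jm}\}$. Hence \Thref{thm:EntCohMC} applies both to $\rho_1,\rho_2$ separately and to $\rho_1\otimes\rho_2$ as a whole.

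The rest is just chaining equalities. For each admissible Rényi parameter, \Thref{thm:EntCohMC} gives $E_{\rmr,\alpha}(\rho_i) = C_{\rmr,\alpha}(\rho_i)$ for $i=1,2$ together with $E_{\rmr,\alpha}(\rho_1\otimes\rho_2)=C_{\rmr,\alpha}(\rho_1\otimes\rho_2)$, and analogously for $E_\rmr$, $\underline{E}_{\rmr,\alpha}$, and $E_\lr$. \Thref{thm:AdditivityRREC} supplies the additivity of the coherence quantity on the right-hand side, so composing the three equalities yields the desired additivity for the corresponding entanglement quantity. The case $E_\lr$ may either be treated in parallel (from $E_\lr(\rho_{\mrm{MC}})=C_\lr(\rho_{\mrm{MC}})$ and additivity of $C_\lr$) or obtained as the $\alpha\to\infty$ limit of the $\underline{E}_{\rmr,\alpha}$ statement.

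I do not foresee a genuine obstacle; the only care point is bookkeeping around the reference basis, i.e.\ that the coherence measure on the composite system is taken with respect to the tensor-product basis that matches the grouped bipartition above. This is automatic from the explicit form of $\rho_1\otimes\rho_2$ and is exactly the setting in which \Thref{thm:AdditivityRREC} yields additivity.
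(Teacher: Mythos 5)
Your proposal is correct and follows essentially the route the paper itself indicates: combining \Thref{thm:EntCohMC} with the additivity of the coherence measures in \Thref{thm:AdditivityRREC} (the paper also mentions \Pref{pro:AdditivityRCE} as an alternative intermediary). Making explicit that $\rho_1\otimes\rho_2$ is again maximally correlated on the grouped bipartition $A_1A_2|B_1B_2$ with respect to the product reference basis is exactly the bookkeeping step the paper leaves implicit, and your verification of it is correct.
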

This corollary
implies the additivity of the geometric entanglement $E_\rmG(\rho)$, which coincides with $\underline{E}_{\rmr,1/2}(\rho)$, for maximally correlated states. The additivity of an alternative geometric measure was considered in \rcite{ZhuCH10}.
The additivity of the relative entropy of entanglement of maximally correlated states was proven previously in \rcite{Rain99}; the special case of maximally correlated generalized Bell-diagonal states was also considered in \rcite{ZhuCH10}.

\section{\label{sec:bounds}Upper and lower bounds for R\'enyi relative entropies of coherence}
By virtue  of \thref{thm:EntCohMC}, here we derive several nontrivial upper and lower bounds for R\'enyi relative entropies of coherence, including the logarithmic robustness of coherence. Similar bounds apply to R\'enyi relative entropies of entanglement of maximally correlated states.

\begin{theorem}\label{thm:RRECub}
	Any state $\rho$ satisfies
	\begin{align}
	C_{\rmr,\alpha}(\rho)&\leq S_{\frac{1}{\alpha}}\bigl(\rho^{\diag}\bigr)\quad \forall \alpha\in [0,2], \\
	\underline{C}_{\rmr,\alpha}(\rho)&\leq S_{\frac{\alpha}{2\alpha-1}}\bigl(\rho^{\diag}\bigr)\quad \forall \alpha\in \Bigl[ \frac{1}{2},\infty\Bigr], \label{eq:RRECbub}\\
	C_\lr(\rho)&\leq S_{\frac{1}{2}}\bigl(\rho^{\diag}\bigr), \quad 		C_{\rmG}(\rho)\leq -\ln\bigl\|\rho^{\diag}\bigr\|; \label{eq:RoCGCub}
	\end{align}
	all the upper bounds are saturated if $\rho$ is pure.		
\end{theorem}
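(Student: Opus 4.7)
The plan is to reduce every claim in the theorem to the corresponding lower bound on a R\'enyi conditional entropy by passing to the maximally correlated state $\rho_{\mrm{MC}} := \cnot[\rho \otimes \outer{0}{0}]$. Two basic facts drive the reduction: the reduced state is $(\rho_{\mrm{MC}})_A = (\rho_{\mrm{MC}})_B = \rho^{\diag}$, and all the R\'enyi relative entropies of coherence agree on $\rho$ and on $\rho_{\mrm{MC}}$, namely $C_{\rmr,\alpha}(\rho) = C_{\rmr,\alpha}(\rho_{\mrm{MC}})$, and similarly for $\underline{C}_{\rmr,\alpha}$ and $C_\lr$. The latter follows from monotonicity under the incoherent unitary conjugations by $U_{\mrm{CNOT}}$ and its adjoint applied in turn, together with either the trivial observation that the incoherent ancilla $\outer{0}{0}$ adds no coherence or the additivity result of \thref{thm:AdditivityRREC}.

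Once this reduction is in place, I would invoke \thref{thm:EntCohMC} to rewrite
\begin{align*}
C_{\rmr,\alpha}(\rho) &= -H_\alpha^{\uparrow}(A|B)_{\rho_{\mrm{MC}}} \quad \forall\, \alpha \in [0,2], \\
\underline{C}_{\rmr,\alpha}(\rho) &= -\overline{H}_\alpha^{\uparrow}(A|B)_{\rho_{\mrm{MC}}} \quad \forall\, \alpha \in \bigl[\tfrac{1}{2},\infty\bigr],
\end{align*}
and then apply the second and fourth inequalities of \lref{lem:RCElb} to the bipartite state $\rho_{\mrm{MC}}$, whose marginal on $A$ is $\rho^{\diag}$. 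This delivers the bounds $-H_\alpha^\uparrow(A|B)_{\rho_{\mrm{MC}}} \leq S_{1/\alpha}(\rho^{\diag})$ for $\alpha \in [0,2]$ and $-\overline{H}_\alpha^\uparrow(A|B)_{\rho_{\mrm{MC}}} \leq S_{\alpha/(2\alpha-1)}(\rho^{\diag})$ for $\alpha \in [\tfrac{1}{2},\infty]$, which are precisely the first two inequalities of the theorem.

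The claims for $C_\lr(\rho)$ and $C_\rmG(\rho)$ will fall out by specializing \eref{eq:RRECbub} at $\alpha \to \infty$ (where $\alpha/(2\alpha-1) \to 1/2$) and at $\alpha = 1/2$ (where $\alpha/(2\alpha-1) \to \infty$, yielding $S_\infty(\rho^{\diag}) = -\ln \|\rho^{\diag}\|$), using the identifications $C_\lr(\rho) = \underline{C}_{\rmr,\infty}(\rho)$ and $C_\rmG(\rho) = \underline{C}_{\rmr,1/2}(\rho)$ from \sref{sec:RREC}. For saturation when $\rho$ is pure, observe that $\rho_{\mrm{MC}}$ is then also pure, so the system $A$ is independent of the environment of $\rho_{\mrm{MC}}$; the equality clause of \lref{lem:RCElb} therefore collapses the inequalities used above to equalities, and the agreement with \pref{pro:RRECpure} is automatic. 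The only nontrivial step I anticipate is verifying the basis-change equality $C_{\rmr,\alpha}(\rho) = C_{\rmr,\alpha}(\rho_{\mrm{MC}})$; everything downstream is a mechanical composition of \thref{thm:EntCohMC} with \lref{lem:RCElb}, followed by routine limits.
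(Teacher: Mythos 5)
Your proposal is correct and follows essentially the same route as the paper: pass to $\rho_{\mrm{MC}}=\cnot[\rho\otimes\outer{0}{0}]$, identify $C_{\rmr,\alpha}(\rho)=C_{\rmr,\alpha}(\rho_{\mrm{MC}})=-H_\alpha^{\uparrow}(A|B)_{\rho_{\mrm{MC}}}$ (and likewise for $\underline{C}_{\rmr,\alpha}$) via \thref{thm:EntCohMC}, apply \lref{lem:RCElb} with $(\rho_{\mrm{MC}})_A=\rho^{\diag}$, and obtain \eref{eq:RoCGCub} and the pure-state saturation by the limits $\alpha\to\infty$, $\alpha\to 1/2$ and the equality clause of \lref{lem:RCElb}. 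The step you flag as nontrivial, $C_{\rmr,\alpha}(\rho)=C_{\rmr,\alpha}(\rho_{\mrm{MC}})$, is handled in the paper exactly as you suggest (invariance under the incoherent unitary $\cnot$ and under appending an incoherent ancilla), so no gap remains.
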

\begin{proof}
	Let $\rho_{\mrm{MC}}=\cnot[\rho\otimes |0\>\<0|]$. Then
	\begin{equation}
	C_{\rmr,\alpha}(\rho)=C_{\rmr,\alpha}(\rho_{\mrm{MC}})=-H_\alpha^{\uparrow}(A|B)_{\rho_{\mrm{MC}}}\leq S_{\frac{1}{\alpha}}((\rho_{\mrm{MC}})_A)=S_{\frac{1}{\alpha}}\bigl(\rho^{\diag}\bigr)\quad \forall \alpha\in [0,2]
	\end{equation}
	according to \thref{thm:EntCohMC} and \lref{lem:RCElb}. The inequality is saturated if $\rho$ is pure according to \lref{lem:RCElb}.
	By the same token,
	\begin{equation}
	\underline{C}_{\rmr,\alpha}(\rho)=-\overline{H}_\alpha^{\uparrow}(A|B)_{\rho_{\mrm{MC}}}\leq S_{\frac{\alpha}{2\alpha-1}}((\rho_{\mrm{MC}})_A)=S_{\frac{\alpha}{2\alpha-1}}\bigl(\rho^{\diag}\bigr)\quad \forall \alpha\in \Bigl[ \frac{1}{2},\infty\Bigr],
	\end{equation}
	and the inequality is saturated if $\rho$ is pure.
	\Eref{eq:RoCGCub} follows from \eref{eq:RRECbub} by taking the limits $\alpha\to \infty$ and $\alpha\to 1/2$.
\end{proof}

\begin{theorem}\label{thm:RRECabBound}
	Any state $\rho$ satisfies
	\begin{align}
	S_{2-\frac{1}{\alpha}}(\rho\|\rho^{\diag})&\leq 		C_{\rmr,\alpha}(\rho)\leq S_\alpha(\rho\|\rho^{\diag})\quad  \forall  \alpha\in \Bigl[\frac{1}{2},2\Bigr], \label{eq:RRECaBound}
	\\
	\underline{S}_{2-\frac{1}{\alpha}}(\rho\|\rho^{\diag})&\leq 		\underline{C}_{\rmr,\alpha}(\rho)\leq \underline{S}_\alpha(\rho\|\rho^{\diag})\quad \forall \alpha\in \Bigl[ \frac{1}{2},\infty\Bigr], \label{eq:RRECbBound}\\
	\underline{S}_{2}(\rho\|\rho^{\diag})&\leq 		C_\lr(\rho)\leq \underline{S}_\infty(\rho\|\rho^{\diag}); \label{eq:LRoCBound}
	\end{align}
 all the lower bounds in the three equations are saturated if $\rho$ is pure.
\end{theorem}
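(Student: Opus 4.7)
The upper bounds are immediate. Since $\rho^{\diag}$ is incoherent, plugging $\sigma=\rho^{\diag}$ into the minimizations
$C_{\rmr,\alpha}(\rho)=\min_{\sigma\in\caI}S_\alpha(\rho\|\sigma)$ and
$\underline{C}_{\rmr,\alpha}(\rho)=\min_{\sigma\in\caI}\underline{S}_\alpha(\rho\|\sigma)$
gives the upper bounds in \eref{eq:RRECaBound} and \eref{eq:RRECbBound}; the upper bound in \eref{eq:LRoCBound} is the $\alpha\to\infty$ limit of the latter, since $C_\lr(\rho)=\underline{C}_{\rmr,\infty}(\rho)$.

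For the lower bounds, the plan is to mirror the strategy of \thref{thm:RRECub}. Set $\rho_{\mrm{MC}}=\cnot[\rho\otimes \outer{0}{0}]$. By \thref{thm:EntCohMC}, within the stated ranges
\begin{align*}
C_{\rmr,\alpha}(\rho)=-H_\alpha^{\uparrow}(A|B)_{\rho_{\mrm{MC}}},\qquad
\underline{C}_{\rmr,\alpha}(\rho)=-\overline{H}_\alpha^{\uparrow}(A|B)_{\rho_{\mrm{MC}}}.
\end{align*}
Now invoke the second inequality in each of the first two lines of \lref{lem:RCEineq} (valid for $\alpha\in[1/2,\infty]$) to get
\begin{align*}
C_{\rmr,\alpha}(\rho)\geq -H_{2-1/\alpha}^{\downarrow}(A|B)_{\rho_{\mrm{MC}}},\qquad
\underline{C}_{\rmr,\alpha}(\rho)\geq -\overline{H}_{2-1/\alpha}^{\downarrow}(A|B)_{\rho_{\mrm{MC}}}.
\end{align*}
Intersecting with the range from \thref{thm:EntCohMC} gives $\alpha\in[1/2,2]$ in the first case and $\alpha\in[1/2,\infty]$ in the second, matching the theorem.

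The only substantive step left is to identify the downward conditional entropies on $\rho_{\mrm{MC}}$ with the R\'enyi relative entropies of $\rho$ against $\rho^{\diag}$, namely
\begin{align*}
-H_\beta^{\downarrow}(A|B)_{\rho_{\mrm{MC}}}=S_\beta(\rho\|\rho^{\diag}),\qquad
-\overline{H}_\beta^{\downarrow}(A|B)_{\rho_{\mrm{MC}}}=\underline{S}_\beta(\rho\|\rho^{\diag}).
\end{align*}
The plan here is to use the isometry $V\colon|j\rangle\mapsto |jj\rangle$ so that $\rho_{\mrm{MC}}=V\rho V^\dagger$ and, by direct calculation, $V^\dagger\bigl[I_A\otimes(\rho^{\diag})^t\bigr]V=(\rho^{\diag})^t$ for every real $t$. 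Substituting these into the definitions of $S_\beta$ and $\underline{S}_\beta$ and using $\rho_{\mrm{MC}}^\beta=V\rho^\beta V^\dagger$ together with $(VMV^\dagger)^\beta=VM^\beta V^\dagger$ yields the two identities. This isometry reduction is the main (and essentially the only) computational step; everything else is bookkeeping.

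Finally, pure-state saturation follows because $\rho$ pure forces $\rho_{\mrm{MC}}$ pure, so the \lref{lem:RCEineq} inequalities used above become equalities. The log-robustness bound \eref{eq:LRoCBound} then follows from \eref{eq:RRECbBound} by letting $\alpha\to\infty$, under which $2-1/\alpha\to 2$ and $\underline{C}_{\rmr,\alpha}\to C_\lr$. I expect no significant obstacle: once the isometry identity above is in hand, the entire argument is a clean composition of \thref{thm:EntCohMC} with \lref{lem:RCEineq}.
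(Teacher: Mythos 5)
Your proposal is correct and follows essentially the same route as the paper: trivial upper bounds from choosing $\sigma=\rho^{\diag}$, and lower bounds by combining \thref{thm:EntCohMC} with the second inequalities of \lref{lem:RCEineq} and the identification $-H_\beta^{\downarrow}(A|B)_{\rho_{\mrm{MC}}}=S_\beta(\rho\|\rho^{\diag})$, $-\overline{H}_\beta^{\downarrow}(A|B)_{\rho_{\mrm{MC}}}=\underline{S}_\beta(\rho\|\rho^{\diag})$, which is exactly the paper's \lref{lem:RCEMC} (via \lref{lem:RRErhovsMC}). The only difference is that you establish that identification by conjugating with the isometry $V\colon|j\rangle\mapsto|jj\rangle$ instead of the paper's direct component-wise trace computation in the appendix; both are valid and the rest of the argument, including the pure-state saturation and the $\alpha\to\infty$ limit, matches the paper.
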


\begin{proof}
	The upper bounds in \esref{eq:RRECaBound} to \eqref{eq:LRoCBound} are trivial given that $\rho^{\diag}$ is incoherent. To establish the lower bound in \eref{eq:RRECaBound} for $\alpha\in \bigl[\frac{1}{2},2\bigr]$, let $\rho_{\mrm{MC}}=\cnot[\rho\otimes |0\>\<0|]$, then 
	\begin{align}
	&C_{\rmr,\alpha}(\rho)=C_{\rmr,\alpha}(\rho_{\mrm{MC}})=-H_\alpha^{\uparrow}(A|B)_{\rho_{\mrm{MC}}}\geq -H_{2-\frac{1}{\alpha}}^{\downarrow}(A|B)_{\rho_{\mrm{MC}}}=S_{2-\frac{1}{\alpha}}(\rho\|\rho^{\diag}).
	\end{align}
	Here the second and third equalities follow from \thref{thm:EntCohMC} in \sref{sec:Connect} and \lref{lem:RCEMC} below, respectively; the inequality follows from \lref{lem:RCEineq} and is saturated when $\rho$ is pure. The lower bound for $\underline{C}_{\rmr,\alpha}(\rho)$ in \eref{eq:RRECbBound} and the saturation for a pure state can be proved in the same way. \Eref{eq:LRoCBound} follows from \eref{eq:RRECbBound} by taking the limit $\alpha\rightarrow \infty$.
\end{proof}
\Eref{eq:RRECbBound} in \thref{thm:RRECabBound} yields a lower bound for the geometric coherence $C_\rmG(\rho)\geq \underline{S}_0(\rho\|\rho^{\diag})$. 
The bounds for $C_\lr(\rho)$ in \eref{eq:LRoCBound} can be expressed more explicitly as 
\begin{align}\label{eq:LRoClu}
\ln\tr\Bigl\{\bigl[\bigl(\rho^{\diag}\bigr)^{-1/4}\rho\bigl(\rho^{\diag}\bigr)^{-1/4}\bigr]^2\Bigr\}\leq C_\lr(\rho)\leq\ln \bigl\|\bigl(\rho^{\diag}\bigr)^{-1/2}\rho\bigl(\rho^{\diag}\bigr)^{-1/2}\bigr\|.
\end{align}
Here the lower bound improves over the bound $C_\lr(\rho)\geq C_\rmr(\rho)=S(\rho\|\rho^{\diag})$ derived in \cite{RanaPWL16}.
\Eref{eq:LRoClu} implies that
\begin{align}
\tr\Bigl\{\bigl[\bigl(\rho^{\diag}\bigr)^{-1/4}\rho\bigl(\rho^{\diag}\bigr)^{-1/4}\bigr]^2\Bigr\}-1\leq C_{\caR}(\rho)\leq \bigl\|\bigl(\rho^{\diag}\bigr)^{-1/2}\rho\bigl(\rho^{\diag}\bigr)^{-1/2}\bigr\|-1.
\end{align}
In addition, \thsref{thm:RRECub} and
\ref{thm:RRECabBound} enable a simple derivation of R\'enyi relative entropies of coherence of pure states (for certain parameter ranges); cf.~\sref{sec:RREC}. Also, they offer a simple explanation of why the equalities in  \eref{eq:RRECApureAlt} and \eref{eq:RRECBpureAlt} hold.

\begin{lemma}\label{lem:RCEMC}
	Let $\rho_{\mrm{MC}}=\cnot[\rho\otimes |0\>\<0|]$. Then
	\begin{align}
	H_\alpha^{\downarrow}(A|B)_{\rho_{\mrm{MC}}}&=-S_\alpha(\rho\|\rho^{\diag}),\quad 
	\overline{H}_\alpha^{\downarrow}(A|B)_{\rho_{\mrm{MC}}}=-\underline{S}_\alpha(\rho\|\rho^{\diag})\quad \forall \alpha\in [0,\infty].
	\end{align}
\end{lemma}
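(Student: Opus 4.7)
The plan is to exploit the isometry $V:\caH_A\to \caH_A\otimes \caH_B$ defined on the reference bases by $V|j\>=|jj\>$. Then $\rho_{\mrm{MC}}=V\rho V^\dagger$, and tracing out $A$ gives $(\rho_{\mrm{MC}})_B=\sum_j \rho_{jj}|j\>\<j|$, which we identify with $\rho^{\diag}$ on $\caH_B$ via the chosen identification of the reference bases. Hence the reference state in the definitions of $H_\alpha^{\downarrow}$ and $\overline{H}_\alpha^{\downarrow}$ is $I_A\otimes \rho^{\diag}$, and both assertions reduce to showing
\begin{equation*}
S_\alpha(\rho_{\mrm{MC}}\|I_A\otimes \rho^{\diag})=S_\alpha(\rho\|\rho^{\diag}),\qquad \underline{S}_\alpha(\rho_{\mrm{MC}}\|I_A\otimes \rho^{\diag})=\underline{S}_\alpha(\rho\|\rho^{\diag}).
\end{equation*}

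The single algebraic fact doing all the work is the intertwining identity
\begin{equation*}
(I_A\otimes (\rho^{\diag})^t)\,V = V\,(\rho^{\diag})^t\qquad \forall t\in\bbR,
\end{equation*}
which is immediate from evaluating both sides on $|j\>$. Combined with $V^\dagger V=I$ and the fact that $V$ being an isometry gives $f(V\rho V^\dagger)=V f(\rho) V^\dagger$ on the image of $V$ for any $f$ continuous on the spectrum of $\rho$, one obtains on the one hand
\begin{equation*}
\tr\bigl(\rho_{\mrm{MC}}^\alpha (I_A\otimes (\rho^{\diag})^{1-\alpha})\bigr) = \tr\bigl(\rho^\alpha V^\dagger (I_A\otimes (\rho^{\diag})^{1-\alpha})V\bigr)=\tr\bigl(\rho^\alpha (\rho^{\diag})^{1-\alpha}\bigr),
\end{equation*}
which yields the first equality upon taking logarithms. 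On the other hand, setting $\beta=(1-\alpha)/(2\alpha)$, the intertwining identity gives
\begin{equation*}
(I_A\otimes (\rho^{\diag})^\beta)\,\rho_{\mrm{MC}}\,(I_A\otimes (\rho^{\diag})^\beta)=V\bigl((\rho^{\diag})^\beta \rho (\rho^{\diag})^\beta\bigr)V^\dagger,
\end{equation*}
and raising to the power $\alpha$ and tracing gives $\tr\bigl((\rho^{\diag})^\beta \rho (\rho^{\diag})^\beta\bigr)^\alpha$, proving the second equality.

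The only real thing to check carefully is the support convention: powers are taken on supports, so one must verify that $\supp(\rho_{\mrm{MC}})\subseteq \caH_A\otimes \supp(\rho^{\diag})$ (so that $S_\alpha$ and $\underline{S}_\alpha$ are finite in the natural cases) and that the intertwining identity holds with this convention. Both are immediate because $V$ maps $|j\>$ with $\rho_{jj}=0$ to a vector in the kernel of $I_A\otimes \rho^{\diag}$, which is also annihilated by $\rho_{\mrm{MC}}$. The boundary cases $\alpha\in\{0,1,\infty\}$, being defined as limits of the expressions above, then follow by continuity; alternatively, they can be verified directly from the explicit formulas for $S_0$, $S=S_1$, and $\underline{S}_\infty$, again using the isometry identity. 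This last bookkeeping around supports and boundary values is the only mildly delicate step; the algebraic core is a one-line calculation.
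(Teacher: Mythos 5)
Your proof is correct and follows essentially the same route as the paper: both reduce the claim to showing $(\rho_{\mrm{MC}})_B=\rho^{\diag}$ and then that $S_\alpha(\rho_{\mrm{MC}}\|I_A\otimes \rho^{\diag})=S_\alpha(\rho\|\rho^{\diag})$ (and likewise for $\underline{S}_\alpha$), which the paper establishes in \lref{lem:RRErhovsMC} by an explicit coordinate computation in the $\ketbra{jj}{kk}$ basis. Your packaging of that computation as an intertwining identity for the isometry $V\ket{j}=\ket{jj}$ is just a cleaner, coordinate-free rendering of the same argument, with the support bookkeeping handled correctly.
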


\begin{proof}According to the definition and \lref{lem:RRErhovsMC} below,
	\begin{align}
	&-H_\alpha^{\downarrow}(A|B)_{\rho_{\mrm{MC}}}=S_\alpha(\rho_{\mrm{MC}}\|I_A\otimes (\rho_{\mrm{MC}})_B)=S_\alpha(\rho_{\mrm{MC}}\|I_A\otimes \rho^{\diag})=	S_\alpha(\rho\|\rho^{\diag}).
	\end{align}
	The other equality in \lref{lem:RCEMC} follows from a similar reasoning.	
\end{proof}

The following lemma is proved in the appendix.
\begin{lemma}\label{lem:RRErhovsMC}
	Let $\rho$ and $\sigma$ be two density matrices on $\caH$ with $\sigma$ being diagonal in the reference basis. Let $\rho_{\mrm{MC}}=\cnot[\rho\otimes |0\>\<0|]$. Then
	\begin{align}
	S_\alpha(\rho_{\mrm{MC}}\|I_A\otimes \sigma) =S_\alpha(\rho\|\sigma),\quad
	\underline{S}_\alpha(\rho_{\mrm{MC}}\|I_A\otimes \sigma)=\underline{S}_\alpha(\rho\|\sigma)\quad \forall \alpha\in [0,\infty].
	\end{align}
\end{lemma}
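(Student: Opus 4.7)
The plan is to reduce both identities to elementary algebra via the isometry $V\colon\caH_A\to\caH_A\otimes\caH_B$ defined by $V|j\>=|jj\>$, so that $\rho_{\mrm{MC}}=\cnot[\rho\otimes|0\>\<0|]=V\rho V^\dagger$ and $V^\dagger V=I_A$. Since $\sigma$ is diagonal in the reference basis, writing $\sigma=\sum_j\sigma_j|j\>\<j|$ one checks directly that for any real power $t$ (on the support of $\sigma$),
\begin{equation*}
(I_A\otimes\sigma)^t V=I_A\otimes\sigma^t\cdot V=V\sigma^t,\qquad V^\dagger(I_A\otimes\sigma)^t=\sigma^t V^\dagger,
\end{equation*}
where on the right $\sigma^t$ is viewed as the diagonal operator on $\caH_A$ with the same eigenvalues. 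This intertwining relation together with $V^\dagger V=I_A$ will do all the work.

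For the first identity, I would use $V^\dagger V=I_A$ to deduce $\rho_{\mrm{MC}}^\alpha=V\rho^\alpha V^\dagger$ on the support of $\rho_{\mrm{MC}}$, and then combine with the intertwining relation and cyclicity of the trace:
\begin{equation*}
\tr\bigl(\rho_{\mrm{MC}}^\alpha(I_A\otimes\sigma)^{1-\alpha}\bigr)
=\tr\bigl(V\rho^\alpha V^\dagger(I_A\otimes\sigma)^{1-\alpha}\bigr)
=\tr\bigl(\rho^\alpha\sigma^{1-\alpha}V^\dagger V\bigr)
=\tr\bigl(\rho^\alpha\sigma^{1-\alpha}\bigr),
\end{equation*}
which yields $S_\alpha(\rho_{\mrm{MC}}\|I_A\otimes\sigma)=S_\alpha(\rho\|\sigma)$ after taking $\frac{1}{\alpha-1}\ln$. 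For the second (sandwich) identity, the same intertwining applied twice gives
\begin{equation*}
(I_A\otimes\sigma)^{\frac{1-\alpha}{2\alpha}}\rho_{\mrm{MC}}(I_A\otimes\sigma)^{\frac{1-\alpha}{2\alpha}}
=V\bigl(\sigma^{\frac{1-\alpha}{2\alpha}}\rho\,\sigma^{\frac{1-\alpha}{2\alpha}}\bigr)V^\dagger;
\end{equation*}
using $V^\dagger V=I_A$ once more, raising to the $\alpha$-th power stays inside the isometry, and the trace collapses to $\tr\bigl(\sigma^{\frac{1-\alpha}{2\alpha}}\rho\,\sigma^{\frac{1-\alpha}{2\alpha}}\bigr)^\alpha$, which gives $\underline{S}_\alpha(\rho_{\mrm{MC}}\|I_A\otimes\sigma)=\underline{S}_\alpha(\rho\|\sigma)$. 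The limiting cases $\alpha=0,1,\infty$ then follow by continuity from the convention that powers are taken on the support.

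There is no serious obstacle here; the only thing to watch is the support convention when $\sigma$ (or $\rho$) is not full rank, since $\underline{S}_\alpha$ uses $\sigma^{(1-\alpha)/(2\alpha)}$ for $\alpha<1$ and this power is to be interpreted on the support of $\sigma$. The intertwining $(I_A\otimes\sigma)^t V=V\sigma^t$ is still valid on the support because both sides annihilate vectors $|j\>$ with $\sigma_j=0$, so the calculation above goes through verbatim on the appropriate supports.
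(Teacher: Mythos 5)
Your proof is correct and follows essentially the same route as the paper: the paper's proof is the same computation written out entrywise, using that the sandwiched maximally correlated state $\sum_{jk}\sigma_{jj}^{t}\rho_{jk}\sigma_{kk}^{t}\,|jj\>\<kk|$ has the same spectrum as $\sum_{jk}\sigma_{jj}^{t}\rho_{jk}\sigma_{kk}^{t}\,|j\>\<k|$, which is exactly your isometry $V|j\>=|jj\>$ in disguise. Your packaging via the intertwining relation $(I_A\otimes\sigma)^tV=V\sigma^t$ and $V^\dagger V=I_A$ is a slightly cleaner way to organize the identical argument, and your remark on the support convention is handled correctly.
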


In view of \thref{thm:EntCohMC},
when $\rho$  is a maximally correlated state, \thsref{thm:RRECub} and \ref{thm:RRECabBound} still hold if R\'enyi relative entropies of coherence are replaced by corresponding R\'enyi relative entropies of entanglement. For example, the following corollary is a consequence of   \thsref{thm:EntCohMC} and  \ref{thm:RRECub}. 
\begin{corollary}
	Any maximally correlated state $\rho$ on $\caH_A\otimes \caH_B$ satisfies
	\begin{align}
	E_{\rmr,\alpha}(\rho)&\leq S_{\frac{1}{\alpha}}\bigl(\rho^{\diag}\bigr)=S_{\frac{1}{\alpha}}(\rho_A)\quad \forall \alpha\in [0,2], \\
	\underline{E}_{\rmr,\alpha}(\rho)&\leq S_{\frac{\alpha}{2\alpha-1}}\bigl(\rho^{\diag}\bigr)=S_{\frac{\alpha}{2\alpha-1}}(\rho_A)\quad \forall \alpha\in \Bigl[ \frac{1}{2},\infty\Bigr], \\
	E_\lr(\rho)&\leq S_{\frac{1}{2}}\bigl(\rho^{\diag}\bigr)=S_{\frac{1}{2}}(\rho_A), \quad 		E_{\rmG}(\rho)\leq -\ln\bigl\|\rho^{\diag}\bigr\|=-\ln\|\rho_A\|.
	\end{align}
	All the upper bounds are saturated if $\rho$ is pure.		
\end{corollary}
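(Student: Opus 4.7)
The plan is to combine \Thref{thm:EntCohMC}, which identifies R\'enyi relative entropies of entanglement with R\'enyi relative entropies of coherence for maximally correlated states, with the bounds on R\'enyi relative entropies of coherence established in \Thref{thm:RRECub}. First I would invoke \Thref{thm:EntCohMC} to replace each quantity on the left-hand side (i.e.~$E_{\rmr,\alpha}(\rho)$, $\underline{E}_{\rmr,\alpha}(\rho)$, $E_\lr(\rho)$, and $E_\rmG(\rho)$) by the corresponding coherence measure ($C_{\rmr,\alpha}(\rho)$, $\underline{C}_{\rmr,\alpha}(\rho)$, $C_\lr(\rho)$, and $C_\rmG(\rho)$), where coherence is defined with respect to the product basis $\{\ket{jj}\}$ on which $\rho$ is maximally correlated. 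Then a direct application of \Thref{thm:RRECub} yields the upper bounds $S_{1/\alpha}(\rho^{\diag})$, $S_{\alpha/(2\alpha-1)}(\rho^{\diag})$, $S_{1/2}(\rho^{\diag})$, and $-\ln\|\rho^{\diag}\|$, respectively, in the required parameter ranges.

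The only remaining point is the identity $S_\beta(\rho^{\diag})=S_\beta(\rho_A)$ (and $\|\rho^{\diag}\|=\|\rho_A\|$). This follows immediately from the structure of maximally correlated states: if $\rho=\sum_{jk}\rho_{jk}\outer{jj}{kk}$, then the diagonal of $\rho$ in the product basis is the operator $\sum_j \rho_{jj}\outer{jj}{jj}$, whose nonzero spectrum coincides with that of $\rho_A=\tr_B(\rho)=\sum_j \rho_{jj}\outer{j}{j}$. Since every R\'enyi entropy and the operator norm depend only on the spectrum, the two quantities agree.

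For saturation when $\rho$ is pure, I would simply note that pure maximally correlated states are pure bipartite states, and \Thref{thm:RRECub} explicitly states that each of the upper bounds on the coherence measures is saturated on pure states. Because \Thref{thm:EntCohMC} establishes equality between entanglement and coherence measures throughout, the saturation transfers immediately to the entanglement side.

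I do not expect any genuine obstacle here, as the result is essentially a corollary in the formal sense: both ingredients are already proved, and only the identification of $\rho^{\diag}$ with $\rho_A$ (at the level of spectra) needs to be spelled out. If there is any subtle point, it is ensuring that the parameter ranges in \Thref{thm:RRECub} match those asserted here, which I would verify by direct inspection.
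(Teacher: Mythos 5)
Your proposal is correct and is exactly the paper's argument: the corollary is stated there as "a consequence of Theorems~\ref{thm:EntCohMC} and \ref{thm:RRECub}," i.e.\ replace each entanglement measure by the corresponding coherence measure via Theorem~\ref{thm:EntCohMC} and then apply the bounds of Theorem~\ref{thm:RRECub}, with the identification of the spectrum of $\rho^{\diag}$ with that of $\rho_A$ left implicit. Your explicit verification of that spectral identification and of the matching parameter ranges is a harmless (and welcome) elaboration of the same route.
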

Note that this corollary yields a simple derivation of the relative entropy of entanglement and robustness of entanglement of bipartite pure states.

\section{\label{sec:relationR}Relations between R\'enyi relative entropies}
In this section, we determine the condition under which R\'enyi relative entropies are independent of the order parameter $\alpha$. Remember that usually they are monotonically increasing with the order parameter. 
The results presented in this section will be used in  the next section to study the relations between different R\'enyi relative entropies of coherence.

For this purpose, we recall the classical case regarding R\'enyi relative entropies between  two probability distributions
$p$ and $q$ on ${\cal X}$.
We assume that the support of $p$ is included in that of $q$ and define the random variables $\ln p(X)$ and $\ln q(X)$ 
on the support of $p$. Let $\phi(s)$ for $s\geq-1$ be the cumulant generating function of the classical random variable
$\ln p(X)-\ln q(X)$, i.e.,
\begin{equation}
\phi(s):= \ln \rE_{p,X} \exp \bigl\{ s [\ln p(X)-\ln q(X)]\bigr\},
\end{equation}
where
$\rE_{p,X}$ expresses the expectation with respect to the random variable $X$ under the distribution $p$. Then  the R\'enyi relative entropy $S_{1+s}(p\|q)$ can be expressed as $S_{1+s}(p\|q)=\phi(s)/s$. Note that $\phi(0)=0$, we deduce that
\begin{equation}
\phi'(0)=S(p\|q),\quad \phi''(0)=2\lim_{s\to 0}S_{1+s}'(p\|q).
\end{equation}
The first derivative $\phi'(0)$ expresses the expectation of the  variable $\ln p(X)-\ln q(X)$, i.e., 
the relative entropy $S(p\|q)$.
The second derivative $\phi''(0)$ expresses the variance of $\ln p(X)-\ln q(X)$,
which is called the relative varentropy $V(p\|q)$,
\begin{equation}\label{eq:varentropy}
2\lim_{s\to 0}S_{1+s}'(p\|q)=\phi''(0)=V(p\|q):=
\rE_{p,X} [\ln p(X)-\ln q(X)]^2-S(p\|q)^2.
\end{equation}
Incidentally,   $V(p\|q)$ plays an important role in 
the second order analysis and moderate deviation analysis in hypothesis testing  \cite[section~9]{WataH17}\cite{Li14}\cite[(34)]{TomaH13}. In conjunction with the monotonicity of $S_{1+s}(p\|q)$ with $s$, \eref{eq:varentropy} implies the following
proposition.
\begin{proposition}\label{pro:RREequalCon}
The following conditions are equivalent. 
\begin{enumerate}
	\item[(A1)] 	 $S_{1+s}(p\|q)=S(p\|q)$, i.e., $\phi(s)=s\phi'(0)$, for all $s \ge -1$.
	\item[(A2)] 	 $S_{1+s}(p\|q)=S(p\|q)$, i.e., $\phi(s)=s\phi'(0)$, for some $s \ge -1$ with $s\neq0$. 
	\item[(A3)] $\lim_{s\to 0}S_{1+s}'(p\|q)=0$, i.e., $\phi''(0)=0$.
	\item[(A4)]  $p$ is a constant times of $q$ on the support of $p$.
\end{enumerate}
\end{proposition}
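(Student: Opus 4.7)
My plan is to prove the cyclic chain of implications $\text{(A1)}\Rightarrow\text{(A2)}\Rightarrow\text{(A3)}\Rightarrow\text{(A4)}\Rightarrow\text{(A1)}$, leveraging the fact that $\phi(s)$ is the cumulant generating function of the random variable $Y:=\ln p(X)-\ln q(X)$ under $p$, and hence is convex on $[-1,\infty)$, with $\phi(0)=0$ and $\phi'(0)=S(p\|q)$.

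The implication $\text{(A1)}\Rightarrow\text{(A2)}$ is trivial. For $\text{(A2)}\Rightarrow\text{(A3)}$, I would use the tangent-line characterization of convexity. Since $\phi$ is convex with $\phi(0)=0$, it lies above its tangent line $\ell(s):=s\phi'(0)$ at the origin. If $\phi(s_0)=s_0\phi'(0)=\ell(s_0)$ for some $s_0\neq 0$ with $s_0\ge -1$, then for every $t\in[0,1]$, convexity gives
\begin{equation}
\phi(ts_0)=\phi\bigl(ts_0+(1-t)\cdot 0\bigr)\leq t\phi(s_0)+(1-t)\phi(0)=ts_0\phi'(0)=\ell(ts_0),
\end{equation}
which combined with $\phi\ge\ell$ forces $\phi=\ell$ on the entire segment between $0$ and $s_0$. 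Hence $\phi$ is linear in a neighborhood of $0$, so $\phi''(0)=0$, which by \eref{eq:varentropy} is exactly $\text{(A3)}$.

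For $\text{(A3)}\Rightarrow\text{(A4)}$, the identity $\phi''(0)=V(p\|q)=\mathrm{Var}_p(Y)$ from \eref{eq:varentropy} shows that $\phi''(0)=0$ iff the random variable $Y=\ln p(X)-\ln q(X)$ is $p$-almost-surely constant. This means there exists a constant $c$ with $p(x)=cq(x)$ for every $x$ in the support of $p$, which is condition $\text{(A4)}$. Finally, $\text{(A4)}\Rightarrow\text{(A1)}$ is a direct calculation: if $p(x)=cq(x)$ on $\mathrm{supp}(p)$, then $\mathbb{E}_{p,X}\exp\{sY\}=c^s$, so $\phi(s)=s\ln c$ and $S_{1+s}(p\|q)=\phi(s)/s=\ln c=S(p\|q)$ for every $s\ge -1$ with $s\neq 0$ (and the case $s=0$ holds by definition as a limit).

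The only genuinely delicate step is $\text{(A2)}\Rightarrow\text{(A3)}$, since one must ensure that the tangent $\phi'(0)$ exists as a finite number and that the chord argument remains valid across $s=0$ (for $s_0<0$ one applies the same convexity argument on $[s_0,0]$). Both are guaranteed because $\phi$ is finite and convex on an interval containing $0$ in its interior (namely $[-1,\infty)$), so it is differentiable almost everywhere, and in particular one can take one-sided derivatives at $0$ that coincide with $S(p\|q)$ by a standard differentiation-under-the-integral argument. The remaining implications are either trivial or reduce to the standard cumulant/variance identity already recorded in \eref{eq:varentropy}.
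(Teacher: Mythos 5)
Your proof is correct. The paper does not spell out a proof of \pref{pro:RREequalCon}; it asserts the proposition as a consequence of \eref{eq:varentropy} together with the monotonicity of $S_{1+s}(p\|q)$ in $s$, and the pattern it has in mind is the one made explicit for the quantum analogue in \thref{thm:RREequalCon}: if $S_{1+s}=S_{1}$ at $s=0$ and at some $s_0\neq 0$, monotonicity forces $S_{1+s}$ to be constant on the intervening interval, so its derivative vanishes there and $\phi''(0)=0$; the rest follows from the variance identity and a direct computation. You reach the same key step $\text{(A2)}\Rightarrow\text{(A3)}$ by a different structural property: instead of citing monotonicity of the R\'enyi relative entropy in the order parameter, you use convexity of the cumulant generating function $\phi$, pinching it between its tangent line at the origin and the chord through $0$ and $s_0$ to conclude that $\phi$ is linear on the segment, hence $\phi''(0)=0$. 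The two routes are close cousins --- monotonicity of $\phi(s)/s$ is itself a consequence of convexity of $\phi$ with $\phi(0)=0$ --- but yours is more self-contained, since it needs only the elementary log-convexity of $s\mapsto\sum_x p(x)^{1+s}q(x)^{-s}$ rather than the monotonicity of R\'enyi divergences as an imported fact; the paper's route has the advantage of transferring verbatim to the noncommutative setting of \thref{thm:RREequalCon}, where a convexity argument for the analogous $\phi$ would be less immediate. Your remaining implications ($\phi''(0)=V(p\|q)=0$ iff $\ln p-\ln q$ is $p$-a.s.\ constant, and the computation $\phi(s)=s\ln c$ under (A4)) match the intended argument exactly, and your care about one-sided derivatives at $s=0$ is warranted but unproblematic here since $\phi$ is smooth on a neighbourhood of $0$.
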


Now, we consider the quantum scenario in which $\rho$ and $\sigma$ are two density matrices with  $\supp(\rho)\leq\supp(\sigma)$.   The following analysis
also applies to the case in which $\sigma$ is a positive operator instead of a density matrix. 
Since 
$ {S}_\alpha(\rho\|\sigma)$
and 
$ \underline{S}_\alpha(\rho\|\sigma)$
are  combinations of differentiable functions with respect to $\alpha$,
their derivatives with respect to $\alpha$ are defined and are denoted by
${S}_\alpha'(\rho\|\sigma)$
and  $\underline{S}_\alpha'(\rho\|\sigma)$, respectively. Let $s=\alpha-1$ and
define $\phi(s):= \ln \tr (\rho^{1+s}\sigma^{-s})$ as the analogue of the classical cumulant generating function. Then ${S}_{1+s}(\rho\|\sigma)=\phi(s)/s$ as in the classical case.
Calculation shows that \cite[Exercise 3.5]{Haya17book}
\begin{align}
\phi'(s)
&= 
\frac{\tr \left[\rho^{1+s} ( \ln \rho-\ln \sigma )\sigma^{-s}\right]}{\tr(\rho^{1+s}\sigma^{-s})},
\\
\phi''(s)
&= \frac{\tr \left[\rho^{1+s} ( \ln \rho-\ln \sigma )\sigma^{-s}( \ln \rho-\ln \sigma )\right]}{\tr( \rho^{1+s}\sigma^{-s})}
-\left(\frac{\tr\left[ \rho^{1+s} ( \ln \rho-\ln \sigma )\sigma^{-s}\right]}{\tr (\rho^{1+s}\sigma^{-s})}\right)^2,
\end{align}
which implies that
\begin{align}
\phi'(0)
&= 
\tr [\rho( \ln \rho-\ln \sigma )]
=S(\rho\|\sigma)\\
\phi''(0)
&=V(\rho\|\sigma):=
\tr [\rho ( \ln \rho-\ln \sigma)^2]-S(\rho\|\sigma)^2=
\tr \left\{\rho [ \ln \rho-\ln \sigma -S(\rho\|\sigma) ]^2\right\}.
\end{align}
The relative varentropy $V(\rho\|\sigma)$ in the quantum setting also plays an important role in the second order analysis and moderate deviation analysis in hypothesis testing \cite{ChenH17}\cite{Li14}\cite[(34)]{TomaH13}. As in the classical case, we still have $\phi''(0)=2\lim_{\alpha\to 1}S_\alpha'(\rho\|\sigma)$. 
Suppose $\rho$ and $\sigma$ have spectral decompositions $\rho=\sum_j \lambda_j P_j$ and $\sigma=\sum_k \mu_k Q_k$, where $\lambda_j$ and $\mu_k$ are distinct positive eigenvalues of $\rho$ and $\sigma$, respectively. Then
\begin{align}\label{eq:RREdev}
2\lim_{\alpha\to 1}S_\alpha'(\rho\|\sigma)=\phi''(0)
=\left[
\sum_{jk}a_{jk}\lambda_j\left(\ln\frac{\lambda_j}{\mu_k}\right)^2-\left(\sum_{jk}a_{jk}\lambda_j\ln\frac{\lambda_j}{\mu_k}\right)^2\right],
\end{align}
where $a_{jk}=\tr(P_jQ_k)$, which satisfy $\sum_k a_{jk}=\tr(P_j)$ given that $\supp(\rho)\leq \supp(\sigma)$.

By virtue of \eref{eq:RREdev}, we can prove the following lemma. 
\begin{lemma}\label{lem:RREder}
Suppose $\rho$ is a density matrix and $\sigma$ is a positive operator with $\supp(\sigma)\geq \supp(\rho)$. Then
\begin{equation}\label{eq:RREdevEqual}
	\lim_{\alpha\rightarrow 1} \underline{S}_\alpha'(\rho\|\sigma)=\lim_{\alpha\rightarrow 1} S_\alpha'(\rho\|\sigma).
	\end{equation}	
$\lim_{\alpha\rightarrow 1} \underline{S}_\alpha'(\rho\|\sigma)=0$ and $\lim_{\alpha\rightarrow 1} S_\alpha'(\rho\|\sigma)=0$ iff	$\rho$ commutes with $\sigma$ and is proportional to $\Pi_\rho\sigma$, where $\Pi_\rho$ is the projector onto the support of $\rho$.
\end{lemma}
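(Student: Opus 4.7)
The lemma splits into two largely independent claims: (i) the two one-sided derivatives at $\alpha=1$ coincide, and (ii) this common value vanishes exactly when $\rho$ commutes with $\sigma$ and is proportional to $\Pi_\rho\sigma$.

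For (i), the plan is to show that both limits equal $V(\rho\|\sigma)/2$, the relative varentropy introduced just above the lemma. The excerpt already establishes $\lim_{\alpha\to 1}S_\alpha'(\rho\|\sigma)=\phi''(0)/2=V(\rho\|\sigma)/2$ via $\phi(s):=\ln\tr(\rho^{1+s}\sigma^{-s})$ together with the Taylor identity $(\phi(s)/s)'|_{s=0}=\phi''(0)/2$. I would repeat this calculation with
\[
\underline{\phi}(s):=\ln\tr\bigl(\sigma^{-s/[2(1+s)]}\rho\,\sigma^{-s/[2(1+s)]}\bigr)^{1+s},
\]
so that $\underline{S}_{1+s}(\rho\|\sigma)=\underline{\phi}(s)/s$. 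Once one verifies $\underline{\phi}(0)=0$, $\underline{\phi}'(0)=S(\rho\|\sigma)$, and $\underline{\phi}''(0)=V(\rho\|\sigma)$, the same Taylor argument gives $\lim_{\alpha\to 1}\underline{S}_\alpha'(\rho\|\sigma)=V(\rho\|\sigma)/2$ and the first assertion follows.

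The technical heart of (i) is the computation of $\underline{\phi}''(0)$. Writing $A(s):=\sigma^{-s/[2(1+s)]}\rho\,\sigma^{-s/[2(1+s)]}$, one has $A(0)=\rho$ and $A'(0)=-\tfrac{1}{2}(\ln\sigma\cdot\rho+\rho\ln\sigma)$; differentiating $\ln\tr A(s)^{1+s}$ twice via the product rule and the identity $\frac{d}{ds}\tr X(s)^{1+s}=\tr(X^{1+s}\ln X)+(1+s)\tr(X^{s}X'(s))$, then exploiting cyclicity of the trace, all cross terms involving the noncommuting pieces of $A'(0)$ collapse, and what remains is exactly $\tr[\rho(\ln\rho-\ln\sigma)^2]-S(\rho\|\sigma)^2=V(\rho\|\sigma)$. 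This bookkeeping is the main obstacle, but it parallels the one for $\phi$ already given in the excerpt; alternatively one may invoke the known second-order expansion of sandwiched R\'enyi divergences used in second-order hypothesis testing.

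For (ii), the key identity is
\[
V(\rho\|\sigma)=\tr\bigl\{\rho\bigl[\ln\rho-\ln\sigma-S(\rho\|\sigma)\Pi_\rho\bigr]^2\bigr\}=\bigl\|H\rho^{1/2}\bigr\|_{\mrm{HS}}^2,
\]
where $H:=\ln\rho-\ln\sigma-S(\rho\|\sigma)\Pi_\rho$ is Hermitian on $\supp(\rho)$ and $\|\cdot\|_\mrm{HS}$ denotes the Hilbert--Schmidt norm. Hence $V(\rho\|\sigma)=0$ iff $H\rho^{1/2}=0$ iff $H\Pi_\rho=0$, equivalently $\Pi_\rho\ln\sigma=\ln\rho-S(\rho\|\sigma)\Pi_\rho$. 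The right-hand side is Hermitian, so $\Pi_\rho\ln\sigma$ is Hermitian, which forces $[\Pi_\rho,\ln\sigma]=0$ and therefore $[\Pi_\rho,\sigma]=0$. Consequently $\Pi_\rho\sigma=\sigma\Pi_\rho=\Pi_\rho\sigma\Pi_\rho$, and exponentiating the displayed identity on $\supp(\rho)$ yields $\Pi_\rho\sigma=e^{-S(\rho\|\sigma)}\rho$, i.e.\ $\rho\propto\Pi_\rho\sigma$. Commutation follows from $\rho\sigma=\rho\Pi_\rho\sigma=e^{-S(\rho\|\sigma)}\rho^2=\sigma\Pi_\rho\rho=\sigma\rho$. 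The converse is a direct check: if $\rho=c\Pi_\rho\sigma$ with $[\rho,\sigma]=0$, then on $\supp(\rho)$ one has $\sigma=c^{-1}\rho$, so $H=0$ there and $V(\rho\|\sigma)=0$.
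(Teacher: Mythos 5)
Your proposal is correct in substance but takes a genuinely different route from the paper on both halves of the lemma, and the two approaches are worth contrasting. For the equality of derivatives, the paper does not compute $\underline{\phi}''(0)$ at all: it observes that $\underline{S}_\alpha(\rho\|\sigma)\leq S_\alpha(\rho\|\sigma)$ for all $\alpha$ with equality at $\alpha=1$, so the one-sided difference quotients give $\lim_{\alpha\to1+0}\underline{S}_\alpha'\leq\lim_{\alpha\to1+0}S_\alpha'$ and $\lim_{\alpha\to1-0}\underline{S}_\alpha'\geq\lim_{\alpha\to1-0}S_\alpha'$, which forces equality since both derivatives are continuous at $\alpha=1$. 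This two-line squeeze completely avoids the second-order expansion of the sandwiched quantity, which is the hardest step in your plan: your assertion that "all cross terms involving the noncommuting pieces of $A'(0)$ collapse" is the actual content of the claim $\underline{\phi}''(0)=V(\rho\|\sigma)$, and as written it is announced rather than proved (differentiating $\tr A(s)^{1+s}$ twice when both the base and the exponent depend on $s$ involves Fr\'echet derivatives of operator powers against noncommuting directions, and the cancellations at $s=0$ need to be exhibited). The result is true and citable from the second-order hypothesis-testing literature, so your fallback is legitimate, but if you want a self-contained argument the paper's monotonicity trick is strictly cheaper and you should adopt it. What your route buys in exchange is the explicit value $\lim_{\alpha\to1}\underline{S}_\alpha'=V(\rho\|\sigma)/2$ obtained independently, rather than inherited from the Petz version.

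For the vanishing condition, your argument is complete, correct, and nicer than the paper's. The paper writes $\phi''(0)$ in spectral coordinates as the variance of $\ln(\lambda_j/\mu_k)$ under the distribution $a_{jk}\lambda_j$ with $a_{jk}=\tr(P_jQ_k)$, and then runs a combinatorial argument showing the coefficient matrix has at most one nonzero entry per row and column, from which containment of spectral supports and commutativity follow. Your identity $V(\rho\|\sigma)=\tr(\rho H^2)=\bigl\|H\rho^{1/2}\bigr\|_{\mathrm{HS}}^2$ with $H=\ln\rho-\ln\sigma-S(\rho\|\sigma)\Pi_\rho$ reduces everything to $H\Pi_\rho=0$, and the deduction that Hermiticity of $\ln\rho-S(\rho\|\sigma)\Pi_\rho$ forces $[\Pi_\rho,\sigma]=0$, after which exponentiation on $\supp(\rho)$ gives $\Pi_\rho\sigma=e^{-S(\rho\|\sigma)}\rho$, is clean and basis-free. (One pedantic check you implicitly perform and should keep: replacing the scalar $S(\rho\|\sigma)$ by $S(\rho\|\sigma)\Pi_\rho$ inside the square does not change $\tr(\rho\,[\cdot]^2)$ because $\Pi_\rho\rho=\rho$.) Both routes prove the same equivalence; yours generalizes more readily to settings where one prefers not to enumerate spectral projectors.
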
 

\begin{proof}
Since 	$\lim_{\alpha\rightarrow 1} \underline{S}_\alpha(\rho\|\sigma)=\lim_{\alpha\rightarrow 1} S_\alpha(\rho\|\sigma)=S(\rho\|\sigma)$ and $\underline{S}_\alpha(\rho\|\sigma)\leq S_\alpha(\rho\|\sigma)$ for all $\alpha\geq 0$, 
we have
\begin{align}
\lim_{\alpha\rightarrow 1+0} \underline{S}_\alpha'(\rho\|\sigma)
\le \lim_{\alpha\rightarrow 1+0} S_\alpha'(\rho\|\sigma),\quad
\lim_{\alpha\rightarrow 1-0} \underline{S}_\alpha'(\rho\|\sigma)
 \ge \lim_{\alpha\rightarrow 1-0} S_\alpha'(\rho\|\sigma),
\end{align}
which implies \eref{eq:RREdevEqual}.

Note that the expression  in \eref{eq:RREdev}
may be interpreted as the variance of the variable $\ln\frac{\lambda_j}{\mu_k}$ with respect to the probability distribution composed of the components $a_{jk}\lambda_j$.
If $\rho$ commutes with $\sigma$ and is proportional to $\Pi_\rho\sigma$, then it is straightforward to verify that $\lim_{\alpha\rightarrow 1} \underline{S}_\alpha'(\rho\|\sigma)=\lim_{\alpha\rightarrow 1} S_\alpha'(\rho\|\sigma)=0$; cf.~\eref{eq:RREspecialProof} below. 

Conversely, if $\lim_{\alpha\rightarrow 1} \underline{S}_\alpha'(\rho\|\sigma)=0$ or $\lim_{\alpha\rightarrow 1} S_\alpha'(\rho\|\sigma)=0$, then
 $\ln\frac{\lambda_j}{\mu_k}=c$ for some constant $c$ whenever $a_{jk}\neq0$ (as defined after \eref{eq:RREdev}). In that case, the coefficient matrix $a_{jk}$ has at most one nonzero entry in each row and each column.  On the other hand, by assumption the support of $\rho$ is contained in the support of $\sigma$, which implies that $\sum_k a_{jk}=\sum_k \tr(P_jQ_k)=\tr(P_j)$ for each $j$. 
Therefore, for each spectral projector $P_j$ of $\rho$, there exists a spectral projector $Q_{w(j)}$ of $\sigma$ such that $\tr(P_jQ_{w(j)})=\tr(P_j)$ and $\tr(P_jQ_m)=0$ for all $m\neq w(j)$, where $w$ is an injective map from the spectral projectors of $\rho$ to that of $\sigma$. Consequently, the support of  $P_j$ is contained in the support of $Q_{w(j)}$, so that $\rho$ commutes with $\sigma$. 
Furthermore, $\lambda_j/\mu_{w(j)}$ is a constant according to the above discussion. Therefore, $\rho$ is proportional to $\Pi_\rho\sigma$.
\end{proof}

Now, as the quantum analogue of \pref{pro:RREequalCon}, we derive the following theorem, which is very useful to understanding the relations between R\'enyi relative entropies with different order parameters.
\begin{theorem}\label{thm:RREequalCon}
	Suppose $\rho$ is a density matrix and $\sigma$ is a positive operator with $\supp(\sigma)\geq \supp(\rho)$.
	Then the following conditions are equivalent.
	\begin{enumerate}
		\item[(B1)] 	 $S_\alpha(\rho\|\sigma)=S(\rho\|\sigma)$ for all $\alpha\geq0$.
		
		\item[(B2)] 	 $S_\alpha(\rho\|\sigma)=S(\rho\|\sigma)$ for some $\alpha\geq0$ with $\alpha\neq 1$. 
		
		\item[(B3)] $\lim_{\alpha\rightarrow 1} S_\alpha'(\rho\|\sigma)=0$.

		\item[(B4)] 	 $\underline{S}_\alpha(\rho\|\sigma)=S(\rho\|\sigma)$ for all $\alpha\geq0$.
		
		\item[(B5)] 	 $\underline{S}_\alpha(\rho\|\sigma)=S(\rho\|\sigma)$ for some $\alpha\geq0$ with $\alpha\neq 1$.

		\item[(B6)] $\lim_{\alpha\rightarrow 1} \underline{S}_\alpha'(\rho\|\sigma)=0$.
		
		\item[(B7)]  $\rho$ commutes with $\sigma$ and  is proportional to $\Pi_\rho\sigma$.
	\end{enumerate}
	\end{theorem}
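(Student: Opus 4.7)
The plan is to close two cycles of implications,
$(B7)\Rightarrow(B1)\Rightarrow(B2)\Rightarrow(B3)\Rightarrow(B7)$ and
$(B7)\Rightarrow(B4)\Rightarrow(B5)\Rightarrow(B6)\Rightarrow(B7)$,
which together yield the equivalence of all seven conditions. The implications $(B1)\Rightarrow(B2)$ and $(B4)\Rightarrow(B5)$ are trivial, while $(B3)\Leftrightarrow(B6)\Leftrightarrow(B7)$ follows from \lref{lem:RREder} (the equality of the two limiting derivatives and the characterization of their common vanishing). So the real work consists of $(B7)\Rightarrow(B1)$, $(B7)\Rightarrow(B4)$, $(B2)\Rightarrow(B3)$, and $(B5)\Rightarrow(B6)$.

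For $(B7)\Rightarrow(B1)$ and $(B7)\Rightarrow(B4)$, I would use the commutativity hypothesis to diagonalize $\rho$ and $\sigma$ simultaneously. Writing $\rho=\sum_i p_i|i\>\<i|$ and $\sigma=\sum_i q_i|i\>\<i|$, the hypothesis $\rho\propto\Pi_\rho\sigma$ forces $p_i/q_i$ to equal a common constant $c$ for all $i\in\supp(\rho)$. Normalization then gives $c\,\tr(\Pi_\rho\sigma)=1$, and a direct computation shows $\tr(\rho^\alpha\sigma^{1-\alpha})=c^{\alpha-1}$, so that $S_\alpha(\rho\|\sigma)=\ln c=S(\rho\|\sigma)$ for all $\alpha\ge 0$. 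Because $\rho$ commutes with $\sigma$, $\underline{S}_\alpha(\rho\|\sigma)=S_\alpha(\rho\|\sigma)$ for every $\alpha\in(0,\infty)$ by the saturation condition of \eref{eq:RREabOrder}, and the boundary cases $\alpha=0,\infty$ follow by the limiting definitions; this yields $(B4)$ as well.

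For $(B2)\Rightarrow(B3)$ and $(B5)\Rightarrow(B6)$, the key input is that both $S_\alpha(\rho\|\sigma)$ and $\underline{S}_\alpha(\rho\|\sigma)$ are monotonically nondecreasing in $\alpha$ and equal to $S(\rho\|\sigma)$ at $\alpha=1$. If equality holds at some $\alpha_0\neq 1$, monotonicity pins $S_\alpha(\rho\|\sigma)$ (respectively $\underline{S}_\alpha(\rho\|\sigma)$) to the constant value $S(\rho\|\sigma)$ on the whole interval with endpoints $1$ and $\alpha_0$. Differentiability of these functions (they are smooth combinations of traces of operator powers) then forces the derivative at $\alpha=1$ to vanish, which is $(B3)$ or $(B6)$. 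Finally, $(B3)\Rightarrow(B7)$ and $(B6)\Rightarrow(B7)$ are exactly the second statement of \lref{lem:RREder}.

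The main obstacle is really the explicit verification in step $(B7)\Rightarrow(B1)/(B4)$: one has to keep track of the support of $\rho$ versus that of $\sigma$, use the convention that powers are taken on the support, and handle the boundary values $\alpha=0$ and $\alpha=\infty$ of both families consistently. Once that calculation is in place, everything else is either trivial, follows by monotonicity in $\alpha$, or is a direct appeal to \lref{lem:RREder}.
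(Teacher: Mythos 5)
Your proposal is correct and follows essentially the same route as the paper: the two implication cycles through (B7), the monotonicity-in-$\alpha$ argument to get from (B2)/(B5) to the vanishing derivative at $\alpha=1$, the appeal to Lemma~\ref{lem:RREder} for (B3)/(B6)$\Rightarrow$(B7), and the direct simultaneous-diagonalization computation $\tr(\rho^\alpha\sigma^{1-\alpha})=c^{\alpha-1}$ for (B7)$\Rightarrow$(B1)/(B4). The only cosmetic difference is that your constant $c$ is the reciprocal of the paper's, which does not affect anything.
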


	\begin{proof}
		We shall prove the theorem by establishing the following implications,
		\begin{align*}
		{\rm (B1)}\imply {\rm (B2)} \imply {\rm (B3)} \imply {\rm (B7)} \imply {\rm (B1)},\quad 	
		{\rm (B4)} \imply {\rm (B5)} \imply {\rm (B6)}\imply {\rm (B7)} \imply {\rm (B4)} . 
		\end{align*}			
		Obviously,  (B1) implies (B2). 		
		If $S_\alpha(\rho\|\sigma)=S(\rho\|\sigma)$ for some $\alpha\geq0$ with $\alpha\neq 1$, then $S_\alpha'(\rho\|\sigma)=0$ in the interval $[1,\alpha]$ if $\alpha>1$ or $[\alpha,1]$ if $\alpha<1$, given that $S_\alpha(\rho\|\sigma)$ is monotonically increasing with $\alpha$. Therefore, (B2) implies (B3). The implication ${\rm (B3)} \imply {\rm (B7)}$ is shown in  \lref{lem:RREder}. The implications ${\rm (B4)} \imply {\rm (B5)} \imply {\rm (B6)}\imply {\rm (B7)}$ follow from a similar reasoning.

For the  implications ${\rm (B7)} \imply {\rm (B1)}$ and ${\rm (B7)}\imply {\rm (B4)}$, note that $\underline{S}_\alpha(\rho\|\sigma)=S_\alpha(\rho\|\sigma)$ because $\rho$ commutes with $\sigma$.
		Meanwhile, the condition (B7) implies that $\Pi_\rho\sigma=c\rho$ for some constant $c>0$, so that
		\begin{align}
	\tr(\rho^\alpha \sigma^{1-\alpha})=	\tr(\rho^\alpha \Pi_\rho \sigma^{1-\alpha})
		=\tr[\rho^\alpha (\Pi_\rho \sigma)^{1-\alpha}]=c^{1-\alpha}\tr(\rho)=c^{1-\alpha}
		\end{align}
		for $\alpha\geq0$.
		Therefore, 
		\begin{equation}\label{eq:RREspecialProof}
		\underline{S}_\alpha(\rho\|\sigma)=S_\alpha(\rho\|\sigma)=-\ln c \quad \forall \alpha\in [0,\infty],
		\end{equation}
		which implies (B1) and (B4). 
	\end{proof}

As  applications  of \eref{eq:RREdev} and \thref{thm:RREequalCon},
here we reproduce several well-known folklore results concerning R\'enyi  entropies based on the observation $S_\alpha(\rho)=-S_\alpha(\rho\| I)$. 
Setting $\sigma=I$ in \eref{eq:RREdev} yields
\begin{equation}\label{eq:REdev}
	\lim_{\alpha\rightarrow 1} S_\alpha'(\rho)=-\frac{1}{2}\left[
	\sum_{j}m_j\lambda_j\left(\ln\lambda_j\right)^2-\left(\sum_{j}m_j\lambda_j\ln\lambda_j\right)^2\right],
	\end{equation}
where $\lambda_j$ are the distinct eigenvalues of $\rho$ and $m_j$ are the corresponding multiplicities.	 
It follows  that	$\lim_{\alpha\rightarrow 1} S_\alpha'(\rho)=0$ iff	all nonzero eigenvalues of $\rho$ are equal, that is, $\rho$  is proportional to a projector.

\Thref{thm:RREequalCon} has an analogue for R\'enyi  entropies. 
\begin{corollary}
	The following statements concerning a density matrix $\rho$ are equivalent.
	\begin{enumerate}
		\item[(C1)] 	 $S_\alpha(\rho)=S(\rho)$ for all $\alpha\geq0$.
		
		\item[(C2)] 	 $S_\alpha(\rho)=S(\rho)$ for some $\alpha\geq0$ with $\alpha\neq 1$. 
		
		\item[(C3)] $\lim_{\alpha\rightarrow 1} S_\alpha'(\rho)=0$.

		\item[(C4)]  $\rho$ is proportional to a projector.
	\end{enumerate}
\end{corollary}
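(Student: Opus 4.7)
The plan is to derive the corollary as a direct specialization of \Thref{thm:RREequalCon} to $\sigma = I$, exploiting the identity $S_\alpha(\rho) = -S_\alpha(\rho\|I)$. This identity is immediate from $\tr(\rho^\alpha I^{1-\alpha}) = \tr(\rho^\alpha)$ together with the sign flip $\frac{1}{1-\alpha} = -\frac{1}{\alpha-1}$. The theorem applies in this setting because $I$ is a positive operator with $\supp(I) \supseteq \supp(\rho)$, and the surrounding text already observed that the analysis covers positive-operator second arguments, not just density matrices.

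Under this substitution, conditions (B1), (B2), (B3) of the theorem read, after multiplying by $-1$, as (C1), (C2), (C3) of the corollary, respectively: $S_\alpha(\rho\|I) = S(\rho\|I)$ is equivalent to $S_\alpha(\rho) = S(\rho)$, and $\lim_{\alpha\to 1} S_\alpha'(\rho\|I) = 0$ to $\lim_{\alpha\to 1} S_\alpha'(\rho) = 0$. The parallel conditions (B4)--(B6) involving $\underline{S}_\alpha$ need not be recorded in the corollary since $[\rho,I] = 0$ forces $\underline{S}_\alpha(\rho\|I) = S_\alpha(\rho\|I)$, making them indistinguishable from (B1)--(B3) in this case.

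The final step is to match (B7) with (C4). Because every $\rho$ commutes with $I$, the commuting clause of (B7) is automatic, so (B7) collapses to the single requirement that $\rho$ be proportional to $\Pi_\rho I = \Pi_\rho$, i.e.\ that $\rho$ be a scalar multiple of the projector onto its support. This is precisely the content of (C4). Combining these identifications with the equivalence chain (B1) $\Leftrightarrow$ (B2) $\Leftrightarrow$ (B3) $\Leftrightarrow$ (B7) established in \Thref{thm:RREequalCon} yields all the equivalences claimed in the corollary.

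I do not anticipate a real obstacle here: all the heavy lifting was done in \Lref{lem:RREder} and \Thref{thm:RREequalCon}, and the corollary amounts to reading off the $\sigma=I$ case. As a sanity check, the explicit derivative formula~\eref{eq:REdev} stated just before the corollary is exactly the specialization of \eref{eq:RREdev} to $\sigma = I$ (collapsing the double sum over $a_{jk}$ to a single sum with multiplicities $m_j$), and the vanishing-variance analysis there already shows directly that (C3) $\Leftrightarrow$ (C4), offering an independent cross-check.
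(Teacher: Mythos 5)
Your proposal is correct and follows exactly the route the paper intends: the corollary is stated immediately after the observation $S_\alpha(\rho)=-S_\alpha(\rho\|I)$ and the specialization of \eref{eq:RREdev} to $\sigma=I$, so the paper's (implicit) proof is precisely the reduction to \thref{thm:RREequalCon} with $\sigma=I$ that you carry out. Your writeup simply makes explicit the details the paper leaves to the reader, including the collapse of (B7) to (C4) and the redundancy of (B4)--(B6).
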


\section{\label{sec:relationRREC}Relations between R\'enyi relative entropies of coherence}

By virtue of the results presented in  previous sections, here we clarify order relations between
different R\'enyi relative entropies of coherence, including the logarithmic robustness of coherence.
We then determine all states whose relative entropy of coherence (or distillable coherence) is equal to the logarithmic robustness of coherence or
geometric coherence. These results will be useful in understanding the relation between exact coherence distillation and asymptotic coherence distillation as discussed in \sref{sec:ExactDistill}.

First, the inequality \eref{eq:RREabOrder} implies that
\begin{align}
C_{\rmr,\alpha}(\rho) &\ge \underline{C}_{\rmr,\alpha}(\rho) \quad \forall \alpha\in [0,\infty]. \label{eq:RRECabOrder}
\end{align}
The following theorem establishes inequalities in the opposite direction.
\begin{theorem}\label{thm:RRECabOrder}
	Any state $\rho$ satisfies
	\begin{align}
	\underline{C}_{\rmr,\alpha}(\rho) &\geq C_{\rmr, 2-\frac{1}{\alpha}}(\rho) \quad \forall \alpha\in \Bigl[ \frac{1}{2},\infty\Bigr], \label{eq:RRECboundB}\\
	C_\lr(\rho)&\geq C_{\rmr, 2}(\rho) =\ln \Biggl[\sum_j \biggl(\sum_k |\rho_{jk}|^2\biggr)^{1/2}\Biggr]^2, \label{eq:RoCbound}
	\end{align}
	and the two inequalities are saturated when $\rho$ is pure.	
\end{theorem}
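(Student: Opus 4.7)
The plan is to reduce everything to a single application of Lemma \ref{lem:RCEineq} after conjugating by the generalized CNOT gate, exactly as in the proofs of Theorems \ref{thm:AdditivityRREC} and \ref{thm:RRECabBound}. Concretely, I would set $\rho_{\mrm{MC}}=\cnot[\rho\otimes\outer{0}{0}]$ and observe that both sides of \eref{eq:RRECboundB} are unchanged when $\rho$ is replaced by $\rho_{\mrm{MC}}$: indeed $C_{\rmr,\beta}(\rho)=C_{\rmr,\beta}(\rho_{\mrm{MC}})$ for every $\beta\in[0,\infty]$ (by the same argument given for $\underline{C}_{\rmr,\alpha}$ in the proof of \thref{thm:AdditivityRREC}, using additivity and the fact that the generalized CNOT is incoherent together with its inverse when acting on states of the form $\rho\otimes\outer{0}{0}$). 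Thus it suffices to prove the inequality for maximally correlated states.

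For $\alpha\in[\frac{1}{2},\infty]$, \thref{thm:EntCohMC} converts the two R\'enyi relative entropies of coherence into R\'enyi conditional entropies:
\begin{equation*}
\underline{C}_{\rmr,\alpha}(\rho)=-\overline{H}_\alpha^{\uparrow}(A|B)_{\rho_{\mrm{MC}}},\qquad C_{\rmr,2-\frac{1}{\alpha}}(\rho)=-H_{2-\frac{1}{\alpha}}^{\uparrow}(A|B)_{\rho_{\mrm{MC}}},
\end{equation*}
where the second equality uses that $2-\frac{1}{\alpha}\in[0,2]$ so that \eref{eq:EntCohMCa} applies. Inequality \eref{eq:RRECboundB} is then exactly the conditional-entropy inequality $\overline{H}_\alpha^{\uparrow}(A|B)_{\rho_{\mrm{MC}}}\leq H_{2-\frac{1}{\alpha}}^{\uparrow}(A|B)_{\rho_{\mrm{MC}}}$ from \lref{lem:RCEineq} (with a sign flip). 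When $\rho$ is pure, $\rho_{\mrm{MC}}$ is also pure, so the pure-state saturation clause in \lref{lem:RCEineq} immediately yields equality in \eref{eq:RRECboundB}.

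For \eref{eq:RoCbound}, I would take the limit $\alpha\to\infty$ in \eref{eq:RRECboundB}: the left side tends to $C_\lr(\rho)$ by \eref{H5-3B}, while $2-\frac{1}{\alpha}\to 2$ on the right and $C_{\rmr,\alpha}$ is continuous in $\alpha$, giving $C_\lr(\rho)\geq C_{\rmr,2}(\rho)$. The explicit formula for $C_{\rmr,2}(\rho)$ is then simply \eref{eq:RRECr2}. Saturation for pure states in \eref{eq:RoCbound} follows from the same pure-state saturation in the $\alpha=\infty$ case of \eref{eq:RRECboundB}, and is also consistent with \eref{eq:RoCCr2}.

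I do not foresee a genuine obstacle: the entire argument is a bookkeeping exercise that combines the CNOT trick, \thref{thm:EntCohMC}, and \lref{lem:RCEineq}. The only point requiring a little care is checking that the CNOT-invariance $C_{\rmr,\beta}(\rho)=C_{\rmr,\beta}(\rho_{\mrm{MC}})$ is valid for the full parameter range $\beta\in[0,\infty]$ (rather than only the range where $C_{\rmr,\beta}$ is monotone under incoherent operations); this is handled by noting that on states of the form $\rho\otimes\outer{0}{0}$ the action of $\cnotu$ is invertible by an incoherent unitary, so the equality holds by direct computation from the formula in \pref{pro:RRECformulaA} without needing the monotonicity assumption.
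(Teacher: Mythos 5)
Your proposal is correct and follows essentially the same route as the paper's own proof: conjugate by the generalized CNOT gate, invoke Theorem~\ref{thm:EntCohMC} to rewrite both coherence measures as R\'enyi conditional entropies of $\rho_{\mrm{MC}}$, apply Lemma~\ref{lem:RCEineq} (including its pure-state saturation clause), and obtain \eref{eq:RoCbound} by the limit $\alpha\to\infty$ together with \eref{eq:RRECr2}. Your extra remark justifying $C_{\rmr,\beta}(\rho)=C_{\rmr,\beta}(\rho_{\mrm{MC}})$ for all $\beta\in[0,\infty]$ via Proposition~\ref{pro:RRECformulaA} is a welcome clarification of a step the paper leaves implicit, but it does not change the argument.
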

\begin{proof}
	Let $\rho_{\mrm{MC}}=\cnot[\rho\otimes |0\>\<0|]$. Then 
	\begin{align}
	&\underline{C}_{\rmr,\alpha}(\rho)=\underline{C}_{\rmr,\alpha}(\rho_{\mrm{MC}})=-\overline{H}_\alpha^{\uparrow}(A|B)_{\rho_{\mrm{MC}}} \geq
	-H_{2-\frac{1}{\alpha}}^{\uparrow}(A|B)_{\rho_{\mrm{MC}}}=C_{\rmr, 2-\frac{1}{\alpha}}(\rho_{\mrm{MC}})=C_{\rmr, 2-\frac{1}{\alpha}}(\rho)
	\end{align}
	according to \thref{thm:EntCohMC} and \lref{lem:RCEineq}. In addition, \lref{lem:RCEineq} implies that the inequality is saturated when $\rho$ is pure, which can also be verified explicitly by virtue of   \pref{pro:RRECpure}.
	Taking the limit $\alpha\to \infty $ in \eref{eq:RRECboundB} and applying \eref{eq:RRECr2}, we obtain \eref{eq:RoCbound}. Again, 
	the inequality $C_\lr(\rho)\geq C_{\rmr, 2}(\rho)$ is saturated when $\rho$ is pure. 
\end{proof}
\Eref{eq:RRECboundB} in \thref{thm:RRECabOrder} yields a lower bound for the geometric coherence, 
\begin{equation}
C_\rmG(\rho)\geq C_{\rmr, 0}(\rho)=-\ln \bigl\|(\Pi_\rho)^{\diag}\bigr\|,
\end{equation}
where the formula for $C_{\rmr, 0}(\rho)$ comes from  \eref{eq:RRECformulaA0} and $\Pi_\rho$ is the projector onto the support of $\rho$. This in turn leads to a lower bound for the other variant of the geometric coherence $\tilde{C}_\rmG(\rho)$, that is,
\begin{equation}
\tilde{C}_\rmG(\rho)\geq 1- \bigl\|(\Pi_\rho)^{\diag}\bigr\|.
\end{equation} 
\Eref{eq:RoCbound} improves over the bound $C_\lr(\rho)\geq C_{\rmr}(\rho)$ known previously \cite{RanaPWL16}, note that $C_{\rmr, 2}(\rho)\geq C_\rmr(\rho)$. As a corollary, we get a lower bound for the robustness of coherence,
\begin{equation}
C_{\caR}(\rho)\geq\Biggl[\sum_j \biggl(\sum_k |\rho_{jk}|^2\biggr)^{1/2}\Biggr]^2-1.
\end{equation}

By virtue of \thref{thm:RRECabOrder} and the inequality $C_\lr(\rho)\leq C_\rmL(\rho)$ \cite{PianCBN16}, we can derive a universal upper bound for all R\'enyi relative entropies of coherence.
\begin{corollary}\label{cor:RRECuub}
Any state $\rho$ satisfies
\begin{align}
C_{\rmr,\alpha}(\rho)&\leq C_\lr(\rho)\leq C_\rmL(\rho)\quad \forall \alpha\in [0,2],\\
\underline{C}_{\rmr,\alpha}(\rho)&\leq C_\lr(\rho)\leq C_\rmL(\rho)\quad \forall \alpha \in[0,\infty].
\end{align}
\end{corollary}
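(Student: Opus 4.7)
The proof plan is to assemble the corollary from three ingredients already developed in the paper. First, for the bound $C_\lr(\rho)\leq C_\rmL(\rho)$, I would simply invoke the inequality $C_\caR(\rho)\leq C_{l_1}(\rho)$ cited from \rcite{PianCBN16} in \eref{eq:RoCL}, taking logarithms via the definitions $C_\lr(\rho)=\ln(1+C_\caR(\rho))$ and $C_\rmL(\rho)=\ln(1+C_{l_1}(\rho))$.

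Second, for the inequality $\underline{C}_{\rmr,\alpha}(\rho)\leq C_\lr(\rho)$ valid on the full range $\alpha\in[0,\infty]$, I would appeal directly to \eref{eq:RRECvsRoC}, which records that $\underline{C}_{\rmr,\alpha}(\rho)$ is monotonically increasing in $\alpha$ and approaches $C_\lr(\rho)$ as $\alpha\to\infty$.

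Third, for $C_{\rmr,\alpha}(\rho)\leq C_\lr(\rho)$ on $\alpha\in[0,2]$, I would combine the monotonicity of $C_{\rmr,\alpha}(\rho)$ in $\alpha$ (noted just above \eref{eq:RRECvsRoC}) with the inequality $C_\lr(\rho)\geq C_{\rmr,2}(\rho)$ established in \eref{eq:RoCbound} of \thref{thm:RRECabOrder}. Explicitly, for any $\alpha\in[0,2]$, monotonicity yields $C_{\rmr,\alpha}(\rho)\leq C_{\rmr,2}(\rho)$, and then \eref{eq:RoCbound} completes the chain $C_{\rmr,2}(\rho)\leq C_\lr(\rho)\leq C_\rmL(\rho)$.

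Since every step is either a direct citation of an earlier inequality or a one-line monotonicity argument, there is no real obstacle: the content of the corollary lies entirely in having identified $C_\lr(\rho)$ (equivalently $C_\rmL(\rho)$) as the correct universal upper envelope. The only thing to double-check in writing up is that the parameter ranges line up, in particular that the $\alpha=2$ endpoint in \thref{thm:RRECabOrder} matches the $[0,2]$ range in the first inequality of the corollary, which it does.
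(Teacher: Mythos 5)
Your proposal is correct and follows exactly the route the paper intends: the paper introduces the corollary with the phrase ``By virtue of \thref{thm:RRECabOrder} and the inequality $C_\lr(\rho)\leq C_\rmL(\rho)$,'' which is precisely your combination of \eref{eq:RoCbound} with the monotonicity of $C_{\rmr,\alpha}$ in $\alpha$, together with \eref{eq:RRECvsRoC} for the $\underline{C}_{\rmr,\alpha}$ case and \eref{eq:RoCLL} for the final bound. The parameter ranges line up as you checked, so nothing further is needed.
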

In conjunction with \eref{eq:RRECformulaA}, \crref{cor:RRECuub} leads to an interesting inequality,
\begin{equation}
\frac{1}{\alpha-1}\ln \bigl\|(\rho^\alpha)^{\diag}\bigr\|_{1/\alpha}\leq C_\rmL(\rho)\quad \forall \alpha\in [0,2],
\end{equation}
which is applicable to any density matrix. It is
equivalent to the following inequality,
\begin{equation} \Bigl(\bigl\|(\rho^\alpha)^{\diag}\bigr\|_{1/\alpha}\Bigr)^{\frac{1}{\alpha-1}}\leq C_{l_1}(\rho)+1=\sum_{j,k} |\rho_{jk}|\quad  \forall \alpha\in [0,2].
\end{equation}

If $\rho$ is a maximally correlated state, then the logarithmic negativity is equal to the logarithmic $l_1$-norm of coherence, that is,
$\caN_\rmL(\rho)=C_\rmL(\rho)$  \cite{RanaPWL16,ZhuHC17}. By virtue of  \thref{thm:EntCohMC} and \crref{cor:RRECuub}, we can derive a universal upper bound for all R\'enyi relative entropies of entanglement of maximally correlated states.
\begin{corollary}
	Any maximally correlated state $\rho$ satisfies
	\begin{align}
	E_{\rmr,\alpha}(\rho)&\leq E_\lr(\rho)\leq \caN_\rmL(\rho)\quad \forall \alpha\in [0,2],\\
	\underline{E}_{\rmr,\alpha}(\rho)&\leq E_\lr(\rho)\leq \caN_\rmL(\rho)\quad \forall \alpha \in[0,\infty].
	\end{align}
\end{corollary}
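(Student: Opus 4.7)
My plan is to reduce the entanglement inequalities for maximally correlated states to the corresponding coherence inequalities already established in \crref{cor:RRECuub}, and then translate the $l_1$-norm bound back into a negativity bound. The key inputs are three results stated earlier: \thref{thm:EntCohMC} gives us exact equalities between R\'enyi relative entropies of entanglement and their coherence counterparts on maximally correlated states; \crref{cor:RRECuub} provides the universal upper bound $C_\rmL(\rho)$ for all R\'enyi relative entropies of coherence; and the identity $\caN_\rmL(\rho)=C_\rmL(\rho)$ for maximally correlated states (cited just above the corollary from \cite{RanaPWL16,ZhuHC17}) converts a statement about the logarithmic $l_1$-norm of coherence into one about the logarithmic negativity.

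The execution is a short chain of substitutions. Let $\rho$ be maximally correlated. For $\alpha \in [0,2]$, I would write
\begin{equation*}
E_{\rmr,\alpha}(\rho)=C_{\rmr,\alpha}(\rho)\leq C_\lr(\rho)=E_\lr(\rho)\leq C_\rmL(\rho)=\caN_\rmL(\rho),
\end{equation*}
where the first and third equalities come from \thref{thm:EntCohMC}, the first inequality from \crref{cor:RRECuub}, and the last equality from $\caN_\rmL(\rho)=C_\rmL(\rho)$ on maximally correlated states. For $\alpha \in [0,\infty]$ the same chain applies with $\underline{E}_{\rmr,\alpha}$ and $\underline{C}_{\rmr,\alpha}$ in place of $E_{\rmr,\alpha}$ and $C_{\rmr,\alpha}$, using the second line of \crref{cor:RRECuub}.

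There is essentially no obstacle here: the corollary is a packaging statement that combines three previously proven facts. The only minor point worth being careful about is that \thref{thm:EntCohMC} requires $\alpha\in[0,2]$ for the $E_{\rmr,\alpha}$ branch and $\alpha\in[\tfrac12,\infty]$ for the $\underline{E}_{\rmr,\alpha}$ branch, while the corollary extends the latter to $\alpha\in[0,\infty]$. For $\alpha\in[0,\tfrac12)$ the quantity $\underline{E}_{\rmr,\alpha}(\rho)$ is not a bona fide entanglement measure but is still defined, and the bound $\underline{E}_{\rmr,\alpha}(\rho)\leq E_\lr(\rho)$ follows from the monotonicity of $\underline{S}_\alpha(\rho\|\sigma)$ in $\alpha$ together with the limit $\underline{S}_\infty=E_\lr$, exactly as in the coherence analogue \eref{eq:RRECvsRoC}. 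Hence the extended range poses no difficulty.
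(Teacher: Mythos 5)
Your proposal is correct and follows exactly the route the paper intends: the paper gives no explicit proof for this corollary, stating only that it follows from \thref{thm:EntCohMC}, \crref{cor:RRECuub}, and the identity $\caN_\rmL(\rho)=C_\rmL(\rho)$ for maximally correlated states, which is precisely the chain of substitutions you write out. Your extra care with the range $\alpha\in[0,\tfrac{1}{2})$, handled via monotonicity of $\underline{S}_\alpha$ in $\alpha$, is also consistent with the paper, which already records $\underline{E}_{\rmr,\alpha}(\rho)\leq E_\lr(\rho)$ for all $\alpha\in[0,\infty]$ in \sref{sec:RREE}.
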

Note that the two equations above still hold if $\rho$ is subjected to any local unitary transformation.

Now, using \thsref{thm:RRECabBound}, \ref{thm:RREequalCon}, and \ref{thm:RRECabOrder},
we determine all states whose
 relative entropy of coherence (or distillable coherence \cite{WintY16})  coincides with the logarithmic robustness of coherence or geometric coherence.
 
\begin{theorem}\label{thm:EntropyRobustEqual}
	The following conditions are equivalent.
	\begin{itemize}
		\item[(D1)]
		$C_\lr(\rho)=\underline{C}_{\rmr,\infty}(\rho)=C_{\rmr}(\rho)$. 
		\item[(D2)]
		$\underline{C}_{\rmr,\alpha}(\rho)=C_\rmr(\rho)$
		for some $\alpha\geq 1/2$ with $\alpha\neq 1$.
		
		\item[(D3)]
		$\underline{C}_{\rmr,\alpha}(\rho)=C_\rmr(\rho)$ for all $\alpha\geq1/2$.
		
		\item[(D4)]
		$C_{\rmr,\alpha}(\rho)=C_\rmr(\rho)$ for all $\alpha\geq0$.
		
		\item[(D5)]
		$C_{\rmr,\alpha}(\rho)=C_\rmr(\rho)$
		for some $\alpha\geq0$ with $\alpha\neq 1$.
		
		\item[(D6)]	$\rho$ commutes with $\rho^{\diag}$ and is proportional to $\Pi_\rho \rho^{\diag}$. 
 	\end{itemize}
\end{theorem}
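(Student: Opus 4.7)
The plan is to establish the equivalences via two cycles that share the condition (D6): $\textrm{(D4)}\Rightarrow\textrm{(D5)}\Rightarrow\textrm{(D6)}\Rightarrow\textrm{(D4)}$ and $\textrm{(D3)}\Rightarrow\textrm{(D1)}\Rightarrow\textrm{(D2)}\Rightarrow\textrm{(D6)}\Rightarrow\textrm{(D3)}$. The implications $\textrm{(D4)}\Rightarrow\textrm{(D5)}$, $\textrm{(D3)}\Rightarrow\textrm{(D1)}$, and $\textrm{(D1)}\Rightarrow\textrm{(D2)}$ are immediate (the last two by taking $\alpha\to\infty$ and recalling $\underline{C}_{\rmr,\infty}=C_\lr$). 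The central ingredient will be \Thref{thm:RREequalCon} applied to the pair $(\rho,\rho^{\diag})$, which, since $\supp(\rho)\subseteq\supp(\rho^{\diag})$, tells us that $S_\alpha(\rho\|\rho^{\diag})=S(\rho\|\rho^{\diag})$ or $\underline{S}_\alpha(\rho\|\rho^{\diag})=S(\rho\|\rho^{\diag})$ at some $\alpha\neq 1$ is equivalent to (D6).

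For $\textrm{(D5)}\Rightarrow\textrm{(D6)}$ the strategy will be to sandwich some $S_\beta(\rho\|\rho^{\diag})$ with $\beta\neq 1$ between $C_\rmr(\rho)$ and itself, whence \Thref{thm:RREequalCon} yields (D6). If $\alpha_0<1$, the trivial bound $C_{\rmr,\alpha_0}(\rho)\leq S_{\alpha_0}(\rho\|\rho^{\diag})$ combined with the monotonicity $S_{\alpha_0}(\rho\|\rho^{\diag})\leq S(\rho\|\rho^{\diag})=C_\rmr(\rho)$ already does this at $\beta=\alpha_0$. If $\alpha_0\in(1,2]$, the lower bound $C_{\rmr,\alpha_0}(\rho)\geq S_{2-1/\alpha_0}(\rho\|\rho^{\diag})$ from \Thref{thm:RRECabBound} together with $S_{2-1/\alpha_0}\geq S_1$ does the job at $\beta=2-1/\alpha_0$. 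The case $\alpha_0>2$ is first reduced to $\alpha_0=2$ by monotonicity of $C_{\rmr,\alpha}$ in $\alpha$. The argument for $\textrm{(D2)}\Rightarrow\textrm{(D6)}$ is entirely analogous, with $S_\alpha$ and $C_{\rmr,\alpha}$ replaced by $\underline{S}_\alpha$ and $\underline{C}_{\rmr,\alpha}$ throughout.

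For $\textrm{(D6)}\Rightarrow\textrm{(D4)}$ I will first extract from the relation $\rho=k\Pi_\rho\rho^{\diag}$ the structural identity $(\Pi_\rho)_{jj}=1/k$ for every $j$ with $p_j:=\rho^{\diag}_{jj}>0$, obtained by comparing the $(j,j)$-entries of the two sides in the reference basis. Substituting into \Pref{pro:RRECformulaA} then gives $(\rho^\alpha)^{\diag}_{jj}=k^{\alpha-1}p_j^\alpha$ on the support of $\rho^{\diag}$, hence $\|(\rho^\alpha)^{\diag}\|_{1/\alpha}=k^{\alpha-1}$ and $C_{\rmr,\alpha}(\rho)=\ln k$ for every $\alpha\geq 0$; the limit $\alpha\to 1$ identifies this constant with $C_\rmr(\rho)$. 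For $\textrm{(D6)}\Rightarrow\textrm{(D3)}$ I will combine the trivial upper bound $\underline{C}_{\rmr,\alpha}(\rho)\leq\underline{S}_\alpha(\rho\|\rho^{\diag})=C_\rmr(\rho)$ (using \Thref{thm:RREequalCon}) with monotonicity of $\underline{C}_{\rmr,\alpha}$ for $\alpha\geq 1$ and with the lower bound $\underline{C}_{\rmr,\alpha}(\rho)\geq\underline{S}_{2-1/\alpha}(\rho\|\rho^{\diag})=C_\rmr(\rho)$ from \Thref{thm:RRECabBound} for $\alpha\in[1/2,1)$.

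The main obstacle will be $\textrm{(D6)}\Rightarrow\textrm{(D4)}$ on the range $\alpha\in[0,1/2)$, where \Thref{thm:RRECabBound} supplies no matching lower bound on $C_{\rmr,\alpha}$ in terms of $S_\beta(\rho\|\rho^{\diag})$. This is what forces us to bypass purely bound-based reasoning and to use the explicit formula of \Pref{pro:RRECformulaA} in tandem with the structural identity for $(\Pi_\rho)_{jj}$; as a byproduct, $\rho^{\diag}$ itself turns out to be the optimal incoherent state in $\min_{\sigma\in\caI}S_\alpha(\rho\|\sigma)$ whenever (D6) holds, for every $\alpha\geq 0$.
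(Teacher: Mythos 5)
Your proposal is correct and follows essentially the same route as the paper: the trivial implications plus monotonicity, the sandwich via \Thref{thm:RRECabBound} and \Thref{thm:RREequalCon} applied to $(\rho,\rho^{\diag})$ for $\textrm{(D5)}\Rightarrow\textrm{(D6)}$ and $\textrm{(D2)}\Rightarrow\textrm{(D6)}$, and a direct computation with \Pref{pro:RRECformulaA} for $\textrm{(D6)}\Rightarrow\textrm{(D4)}$ (the paper packages this last step, together with $\textrm{(D6)}\Rightarrow\textrm{(D3)}$, as \lref{lem:RRECspecial}, phrased via spectral projectors rather than your identity $(\Pi_\rho)_{jj}=1/k$, but the content is the same). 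Your two-cycle decomposition through (D6) and the explicit reduction of the case $\alpha_0>2$ to $\alpha_0=2$ are minor organizational refinements of the paper's single six-step cycle.
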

Different R\'enyi relative entropies of coherence are interesting in different contexts and have different operational interpretations. For example, the relative entropy of coherence is equal to the distillable coherence \cite{WintY16}, while the geometric coherence upper bounds the exact coherence distillation rate (see \sref{sec:ExactDistill}).
Therefore, \thref{thm:EntropyRobustEqual} is instructive to understanding  the connections between  different operational tasks in which R\'enyi relative entropies of coherence play certain roles. For example, \thref{thm:EntropyRobustEqual} is helpful to clarifying the relation between exact coherence distillation and asymptotic coherence distillation.

The combination of \thref{thm:EntropyRobustEqual} and \crref{cor:RRECuub} yields the following result.
\begin{corollary}\label{cor:RECLL}
	If $\rho$ saturates the inequality $C_{\rmr}(\rho)\leq C_\rmL(\rho)$, then
$\rho$ commutes with $\rho^{\diag}$ and is proportional to $\Pi_\rho \rho^{\diag}$.
\end{corollary}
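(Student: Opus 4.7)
The plan is to squeeze $C_\lr(\rho)$ between $C_{\rmr}(\rho)$ and $C_\rmL(\rho)$ and then invoke \thref{thm:EntropyRobustEqual} to extract the structural conclusion on $\rho$. Concretely, \crref{cor:RRECuub} gives the chain of inequalities
\begin{equation}
C_{\rmr}(\rho) \;\leq\; C_\lr(\rho) \;\leq\; C_\rmL(\rho),
\end{equation}
where the first inequality is the well-known comparison between the relative entropy of coherence and the logarithmic robustness (which also follows from \eref{eq:RRECabOrder} in the limit $\alpha\to \infty$), and the second is the bound $C_\lr(\rho)\leq C_\rmL(\rho)$ that was established as part of \crref{cor:RRECuub}.

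If the hypothesis $C_{\rmr}(\rho)=C_\rmL(\rho)$ holds, both inequalities in the displayed chain must saturate, so in particular $C_\lr(\rho)=C_{\rmr}(\rho)$. Recalling that $C_\lr(\rho)=\underline{C}_{\rmr,\infty}(\rho)$ by \eref{H5-3B}, this is exactly condition (D1) of \thref{thm:EntropyRobustEqual}.

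By the equivalences in \thref{thm:EntropyRobustEqual}, (D1) implies (D6), which states that $\rho$ commutes with $\rho^{\diag}$ and is proportional to $\Pi_\rho \rho^{\diag}$. This is precisely the conclusion we need, so the corollary follows. There is no serious obstacle: the entire argument is a one-line deduction from \crref{cor:RRECuub} together with the characterization already established in \thref{thm:EntropyRobustEqual}; the only thing to be careful about is to identify $C_\lr$ with $\underline{C}_{\rmr,\infty}$ so as to match the formulation of condition (D1).
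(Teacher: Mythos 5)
Your proposal is correct and follows exactly the paper's route: the chain $C_{\rmr}(\rho)\leq C_\lr(\rho)\leq C_\rmL(\rho)$ from \crref{cor:RRECuub} forces $C_{\rmr}(\rho)=C_\lr(\rho)$ under the hypothesis, and then condition (D1) of \thref{thm:EntropyRobustEqual} yields (D6). One minor slip: the first inequality $C_{\rmr}(\rho)\leq C_\lr(\rho)$ does not follow from \eref{eq:RRECabOrder} in the limit $\alpha\to\infty$ (that would give an inequality in the wrong direction for $C_{\rmr,\infty}$); it follows from the monotonicity of $\underline{C}_{\rmr,\alpha}$ in $\alpha$ as in \eref{eq:RRECvsRoC}, or simply from setting $\alpha=1$ in \crref{cor:RRECuub}, which you also cite.
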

As an implication of \thref{thm:EntropyRobustEqual} and \crref{cor:RECLL}, when $\rho$ is  pure,  $C_\lr(\rho)=C_{\rmr}(\rho)$ iff all nonzero elements of $\diag(\rho)$ are equal, in which case $\rho$ is either incoherent or maximally coherent on the support of $\rho^{\diag}$ (here $\diag(\rho)$ is a vector, while  $\rho^{\diag}$ is a diagonal matrix). Similarly, when $\rho$ is a qubit state,  $C_\lr(\rho)=C_{\rmr}(\rho)$ iff $\rho$ is incoherent or maximally coherent. The same is true if $C_\lr(\rho)$ is replaced by $C_\rmL(\rho)$ given that  $C_\rmL(\rho)=C_\lr(\rho)$ in both cases. In general, incoherent states and maximally coherent states can satisfy the conditions in \thref{thm:EntropyRobustEqual}, but they are not the only candidates. For example, the conditions can also be satisfied by a weighted direct sum of two  maximally coherent states, say
 \begin{equation}
 \rho=p_1(|\psi\>\<\psi|)+p_2 (|\varphi\>\<\varphi|),
 \end{equation}
 where 
 \begin{equation}
  |\psi\>=\frac{1}{\sqrt{2}}(|0\>+|1\>),\quad  |\varphi\>=\frac{1}{\sqrt{2}}(|2\>+|3\>),\quad  0\leq p_1, p_2\leq 1,\quad  p_1+p_2=1.
 \end{equation}

When $\rho$ is a maximally correlated state, \thref{thm:EntropyRobustEqual} and \crref{cor:RECLL} still hold if R\'enyi relative entropies of coherence are replaced by the corresponding R\'enyi relative entropies of entanglement, while the logarithmic $l_1$-norm of coherence is replaced by the logarithmic negativity. For example, the following corollary is the analogue of \crref{cor:RECLL}.
\begin{corollary}
	If $\rho$ is a maximally correlated state that saturates the inequality $E_{\rmr}(\rho)\leq \caN_\rmL(\rho)$, then
	$\rho$ commutes with $\rho^{\diag}$ and is proportional to $\Pi_\rho \rho^{\diag}$.
\end{corollary}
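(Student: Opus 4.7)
The plan is to reduce the statement to its coherence analogue, Corollary~\ref{cor:RECLL}, by exploiting the two equalities that characterise maximally correlated states: the coincidence of relative entropy of entanglement with relative entropy of coherence, and the coincidence of negativity with the $l_1$-norm of coherence.

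Concretely, first I would invoke \Thref{thm:EntCohMC} to conclude that for any maximally correlated state $\rho$ we have $E_{\rmr}(\rho)=C_{\rmr}(\rho)$. Next, I would use the well-known fact, cited earlier in this paper after \crref{cor:RoENegMC}, that $\caN(\rho)=C_{l_1}(\rho)$ whenever $\rho$ is maximally correlated; taking logarithms of $1+\caN(\rho)=1+C_{l_1}(\rho)$ gives $\caN_\rmL(\rho)=C_\rmL(\rho)$. Combining these two identities, the hypothesis $E_{\rmr}(\rho)=\caN_\rmL(\rho)$ translates immediately into $C_{\rmr}(\rho)=C_\rmL(\rho)$, i.e., $\rho$ saturates the inequality $C_{\rmr}(\rho)\leq C_\rmL(\rho)$.

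At this point one only has to apply \crref{cor:RECLL}, which asserts precisely that saturation of $C_{\rmr}(\rho)\leq C_\rmL(\rho)$ forces $\rho$ to commute with $\rho^{\diag}$ and to be proportional to $\Pi_\rho\rho^{\diag}$. This yields the claimed conclusion without any additional work. There is no genuine obstacle here: the entire argument is a two-line translation from the entanglement side to the coherence side using the dictionary established by \Thref{thm:EntCohMC} together with the pure-state-level identity $\caN=C_{l_1}$ for maximally correlated states. If anything merits emphasis, it is only the verification that the $l_1$-norm and negativity do coincide on this class of states, but this is straightforward from the definitions (the partial transpose of $\sum_{jk}\rho_{jk}\outer{jj}{kk}$ has off-diagonal blocks whose singular values are exactly $|\rho_{jk}|$) and has already been recorded in the paper.
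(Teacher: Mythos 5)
Your proposal is correct and is essentially the argument the paper itself intends: the paper states this corollary as an immediate translation of Corollary~\ref{cor:RECLL} via the identities $E_{\rmr}(\rho)=C_{\rmr}(\rho)$ from \Thref{thm:EntCohMC} and $\caN_\rmL(\rho)=C_\rmL(\rho)$ for maximally correlated states, exactly as you do. Your explicit verification that $\caN(\rho)=C_{l_1}(\rho)$ via the singular values of the partial transpose is also sound, matching the fact the paper cites from \rcite{RanaPWL16,ZhuHC17}.
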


\begin{proof}[Proof of \thref{thm:EntropyRobustEqual}]
	We shall prove \thref{thm:EntropyRobustEqual} by establishing the following implications,
	\begin{align*}
	&{\rm (D4)}\imply {\rm (D3)} \imply {\rm (D1)} \imply {\rm (D2)} \imply {\rm (D5)} \imply {\rm (D6)} \imply {\rm (D4)}. 
	\end{align*}
	The implications ${\rm (D4)}\imply {\rm (D3)}$ and ${\rm (D2)} \imply {\rm (D5)}$ follow from \thref{thm:RRECabOrder}, \eref{eq:RRECabOrder}, and the monotonicity of $C_{\rmr,\alpha}, \underline{C}_{\rmr,\alpha}$ with $\alpha$. The implications ${\rm (D3)} \imply {\rm (D1)}$ and ${\rm (D1)} \imply {\rm (D2)}$ are trivial. The implication ${\rm (D6)} \imply {\rm (D4)}$ follows from \lref{lem:RRECspecial} below.
	
	It remains to show the implication ${\rm (D5)} \imply {\rm (D6)}$. If $C_{\rmr,\alpha}(\rho)=C_\rmr(\rho)$ for some $\alpha<1$, then $S_\alpha(\rho\|\rho^{\diag})=C_\rmr(\rho)=S(\rho\|\rho^{\diag})$. If $C_{\rmr,\alpha}(\rho)=C_\rmr(\rho)$ for some $\alpha>1$, then 
	$S_{2-\frac{1}{\alpha}}(\rho\|\rho^{\diag})=C_\rmr(\rho)=S(\rho\|\rho^{\diag})$ according to \thref{thm:RRECabBound}. Therefore, $\rho$ commutes with $\rho^{\diag}$ and is proportional to $\Pi_\rho \rho^{\diag}$ according to \thref{thm:RREequalCon}.
\end{proof}

In the rest of this section, we prove a lemma used in the proof of \thref{thm:EntropyRobustEqual}.
\begin{lemma}\label{lem:RRECspecial}
Suppose $\rho$ is  a density matrix that  commutes with
	 $\rho^{\diag}$ and satisfies  $\Pi_\rho \rho^{\diag}=c\rho$ for some positive constant $c$. Then
	\begin{align}
	C_\lr(\rho)=	C_{\rmr,\alpha}(\rho)=	\underline{S}_\alpha(\rho\|\rho^{\diag})=S_\alpha(\rho\|\rho^{\diag})&=-\ln c\quad \forall \alpha\in [0,\infty], \label{eq:RREspecial}\\
	\underline{C}_{\rmr,\alpha}(\rho)&=-\ln c \quad \forall \alpha\in \Bigl[ \frac{1}{2},\infty\Bigr].
	\end{align} 
\end{lemma}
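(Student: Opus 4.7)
The plan is to leverage \thref{thm:RREequalCon}: the lemma's hypotheses on $\rho$ are precisely condition~(B7) of that theorem applied with $\sigma=\rho^{\diag}$, so the explicit computation \eqref{eq:RREspecialProof} inside its proof gives $S_\alpha(\rho\|\rho^{\diag})=\underline{S}_\alpha(\rho\|\rho^{\diag})=-\ln c$ for every $\alpha\in[0,\infty]$. This settles the two rightmost equalities of \eqref{eq:RREspecial}. Since $\rho^{\diag}\in\caI$, plugging $\sigma=\rho^{\diag}$ into the defining minima immediately yields the one-sided bounds $C_{\rmr,\alpha}(\rho)\le-\ln c$, $\underline{C}_{\rmr,\alpha}(\rho)\le-\ln c$, and hence $C_\lr(\rho)=\underline{C}_{\rmr,\infty}(\rho)\le-\ln c$.

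For the matching lower bounds on $\underline{C}_{\rmr,\alpha}(\rho)$ throughout $\alpha\in[\tfrac{1}{2},\infty]$, I would invoke \thref{thm:RRECabBound}: the lower half of \eqref{eq:RRECbBound} combined with $\underline{S}_{2-1/\alpha}(\rho\|\rho^{\diag})=-\ln c$ forces $\underline{C}_{\rmr,\alpha}(\rho)\ge-\ln c$, yielding equality throughout that range, and in particular $C_\lr(\rho)=-\ln c$.

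To pin down $C_{\rmr,\alpha}(\rho)$ \emph{uniformly} on $[0,\infty]$ (rather than only on $[\tfrac{1}{2},2]$, where the analogous sandwich from \thref{thm:RRECabBound} would suffice), I would use the closed formula \eqref{eq:RRECformulaA} of \pref{pro:RRECformulaA}. The key short input is a diagonal identity obtained by taking the $(j,j)$-entry of $\Pi_\rho\rho^{\diag}=c\rho$: because $\rho^{\diag}$ is diagonal in the reference basis, this reads $(\Pi_\rho)_{jj}d_j=cd_j$ with $d_j:=(\rho^{\diag})_{jj}$, forcing $(\Pi_\rho)_{jj}=c$ whenever $d_j>0$ and $(\Pi_\rho)_{jj}=0$ when $d_j=0$ (positivity of $\rho$ makes $\rho_{jj}=0$ imply $\rho\ket{j}=0$). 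Since $\Pi_\rho$ and $\rho^{\diag}$ commute, one has $\rho^\alpha=c^{-\alpha}\Pi_\rho(\rho^{\diag})^\alpha$, so $(\rho^\alpha)_{jj}=c^{1-\alpha}d_j^\alpha$ on the support and $0$ elsewhere; summing with $\sum_jd_j=1$ gives $\|(\rho^\alpha)^{\diag}\|_{1/\alpha}=c^{1-\alpha}$, whence \pref{pro:RRECformulaA} delivers $C_{\rmr,\alpha}(\rho)=-\ln c$ on all of $[0,\infty]$ at once. The step I expect to be the main obstacle is precisely this uniform treatment of $C_{\rmr,\alpha}$ in the range $\alpha<\tfrac{1}{2}$, where neither the sandwich from \thref{thm:RRECabBound} nor the $\alpha$-monotonicity of $C_{\rmr,\alpha}$ pins down a lower bound on its own, and the diagonal identity above is what rescues the argument.
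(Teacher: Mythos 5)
Your proposal is correct and follows essentially the same route as the paper: both rest on \eref{eq:RREspecialProof} for $S_\alpha(\rho\|\rho^{\diag})=\underline{S}_\alpha(\rho\|\rho^{\diag})=-\ln c$, on the closed formula of \pref{pro:RRECformulaA} for $C_{\rmr,\alpha}$, and on the sandwich \eref{eq:RRECbBound} of \thref{thm:RRECabBound} for $\underline{C}_{\rmr,\alpha}$ and hence $C_\lr$. The only (cosmetic) difference is in evaluating $\bigl\|(\rho^\alpha)^{\diag}\bigr\|_{1/\alpha}$: you use $\rho^\alpha=c^{-\alpha}\Pi_\rho(\rho^{\diag})^\alpha$ together with $(\Pi_\rho)_{jj}=c$ on the support of $\rho^{\diag}$, whereas the paper works with the spectral decomposition $\rho=\sum_j\lambda_jP_j$ and the structure of the $P_j^{\diag}$; both yield $c^{1-\alpha}$.
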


\begin{proof}
The equalities $\underline{S}_\alpha(\rho\|\rho^{\diag})=S_\alpha(\rho\|\rho^{\diag})=-\ln c$ follow from \eref{eq:RREspecialProof}. To prove other equalities in the lemma,
let $\rho=\sum_j \lambda_j P_j$ be the spectral decomposition of $\rho$ with $\lambda_j>0$.  
If $\rho$   commutes with $\rho^{\diag}$ and satisfies  $\Pi_\rho \rho^{\diag}=c\rho$, then  $P_j^{\diag}$ have mutually orthogonal supports and all nonzero entries of $P_j^{\diag}$ are equal to $c$.
Suppose $P_j^{\diag}$ have $n_j$ nonzero entries, then $\sum_j n_j \lambda_j c=\tr\rho=1$, so that $\sum_j n_j \lambda_j=1/c$. 
According to  \pref{pro:RRECformulaA},
\begin{align}
	C_{\rmr,\alpha}(\rho)&=\frac{1}{\alpha-1}\ln \bigl\|(\rho^\alpha)^{\diag}\bigr\|_{1/\alpha}=\frac{1}{\alpha-1}\ln \Biggl\|\biggl(\sum_j\lambda_j^\alpha P_j\biggr)^{\diag}\Biggr\|_{1/\alpha}\nonumber \\
	&=\frac{1}{\alpha-1}\ln \left(\sum_j n_j \lambda_jc^{1/\alpha} \right)^\alpha
	 =\frac{1}{\alpha-1}\ln c^{1-\alpha}=-\ln c \quad \forall \alpha\in [0,\infty),
	\end{align}
which further implies that 	$C_{\rmr,\infty}(\rho)=-\ln c$. 
According to \thref{thm:RRECabBound}, 
	\begin{equation}
	-\ln c=\underline{S}_{\rmr,2-\frac{1}{\alpha}}(\rho\|\rho^{\diag})\leq \underline{C}_{\rmr,\alpha}(\rho)\leq \underline{S}_{\rmr,\alpha}(\rho\|\rho^{\diag})=-\ln c\quad \forall \alpha\in\Bigl[\frac{1}{2},\infty\Bigr),
	\end{equation}
	which implies that $\underline{C}_{\rmr,\alpha}(\rho)=-\ln c$ for $\alpha\geq 1/2$. Alternatively, this result can  be derived from \thref{thm:RRECabOrder} and the equality $C_{\rmr,\alpha}(\rho)=-\ln c$.	
	Taking the limit $\alpha\to \infty$ yields the equality $C_\lr(\rho)=-\ln c$.
\end{proof}

\section{\label{sec:ExactDistill}Exact coherence distillation}

It is known that R\'{e}nyi relative entropies of entanglement upper bound the exact distillation rate of entanglement \cite[lemma 8.15]{Haya17book}.
Moreover, the bounds based on $E_{\rmr,0}$ and $\underline{E}_{\rmr,1/2}$ are saturated in the case of pure states \cite{HayaKMM03,Haya06}\cite[Exercise 8.32]{Haya17book}.
In this section we show that R\'enyi relative entropies of coherence play the same role in exact coherence distillation
as R\'{e}nyi relative entropies of entanglement play in exact entanglement distillation.

Exact coherent distillation is a procedure for producing perfect maximally coherent states from  partially coherent states as illustrated in \fref{fig:ECD}. In other words, the goal is to generate  maximally coherent states with exactly zero error. By contrast, in conventional asymptotic coherence distillation, the goal is to generate  maximally coherent states with a small error that goes to zero asymptotically.
 For a given state $\rho$, we define the exact coherence distillation length 
$L_{\rme,\rmc}(\rho)$ as
\begin{align}\label{eq:CDL}
L_{\rme,\rmc}(\rho):=
\max\{ L |
\exists \Lambda_\rmi, \;
\Lambda_\rmi(\rho)=|\Phi_{\rmc,L}\rangle \langle \Phi_{\rmc,L}|
\},
\end{align}
where $|\Phi_{\rmc,L}\rangle
:=\sum_{j=0}^{L-1}\frac{1}{\sqrt{L}}|j\rangle$ is a maximally coherent state in dimension $L$ \cite{BaumCP14},
and $\Lambda_\rmi$ is an incoherent operation.
Then, we define the asymptotic exact coherent distillation rate
$R_{\rme,\rmc}(\rho)$ as
\begin{align}
R_{\rme,\rmc}(\rho):= \lim_{n \to \infty}\frac{1}{n}\ln L_{\rme,\rmc}(\rho^{\otimes n}).
\end{align}

\begin{figure}
	\centering
	\includegraphics[width=12cm]{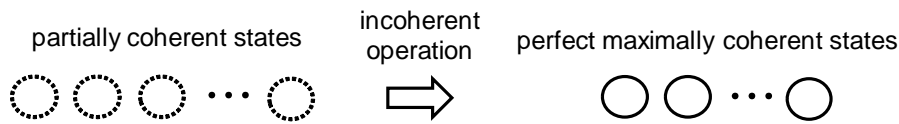}
	\caption{\label{fig:ECD}Exact coherence distillation.
This procedure  produces perfect maximally coherent states from partially coherent states.}
\end{figure}
\begin{lemma}\label{pro:ExactDistill}
	\begin{align}
	\ln L_{\rme,\rmc}(\rho)\leq R_{\rme,\rmc}(\rho)&\leq C_{\rmr,\alpha}(\rho) \quad \forall \alpha\in [0,2],\\
	\ln L_{\rme,\rmc}(\rho)\leq R_{\rme,\rmc}(\rho)&\leq \underline{C}_{\rmr,\alpha}(\rho) \quad \forall \alpha\in \Bigl[ \frac{1}{2},\infty\Bigr].
	\end{align}
\end{lemma}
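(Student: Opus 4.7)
The plan is to combine three results from preceding sections: the monotonicity of the R\'enyi relative entropies of coherence under incoherent operations (\pref{pro:RRECmono}), their additivity on tensor products (\thref{thm:AdditivityRREC}), and the closed-form evaluation on pure states (\pref{pro:RRECpure}). Together these yield the standard single-letter converse bound familiar from other quantum resource theories, now adapted to the coherence setting.

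First I would dispatch the easy left inequality $\ln L_{\rme,\rmc}(\rho)\leq R_{\rme,\rmc}(\rho)$. If an incoherent operation $\Lambda_\rmi$ realizes the optimal single-copy distillation length $L:=L_{\rme,\rmc}(\rho)$, then $\Lambda_\rmi^{\otimes n}$ sends $\rho^{\otimes n}$ to $\proj{\Phi_{\rmc,L}}^{\otimes n}$, which coincides with $\proj{\Phi_{\rmc,L^n}}$ up to an incoherent basis-relabeling unitary. Hence $L_{\rme,\rmc}(\rho^{\otimes n})\geq L^n$, so dividing $\ln$ by $n$ and letting $n\to\infty$ delivers the bound.

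For the upper bounds, fix $\alpha$ in the stated range and any incoherent $\Lambda_\rmi$ with $\Lambda_\rmi(\rho^{\otimes n}) = \proj{\Phi_{\rmc,L}}$. I would apply \pref{pro:RRECmono} to obtain $C_{\rmr,\alpha}(\proj{\Phi_{\rmc,L}}) \leq C_{\rmr,\alpha}(\rho^{\otimes n})$ when $\alpha\in[0,2]$, and analogously $\underline{C}_{\rmr,\alpha}(\proj{\Phi_{\rmc,L}}) \leq \underline{C}_{\rmr,\alpha}(\rho^{\otimes n})$ when $\alpha\in[\tfrac{1}{2},\infty]$. \Thref{thm:AdditivityRREC} then rewrites each right-hand side as $n$ times the corresponding single-copy quantity. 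Plugging $p_i=1/L$ into the pure-state formulas of \pref{pro:RRECpure} gives, via a short telescoping of exponents, $C_{\rmr,\alpha}(\proj{\Phi_{\rmc,L}}) = \underline{C}_{\rmr,\alpha}(\proj{\Phi_{\rmc,L}}) = \ln L$, matching the Shannon value. Combining these three ingredients yields $\ln L \leq n\,C_{\rmr,\alpha}(\rho)$ (respectively $n\,\underline{C}_{\rmr,\alpha}(\rho)$); taking the supremum over admissible $L$, i.e.\ $L=L_{\rme,\rmc}(\rho^{\otimes n})$, and then sending $n\to\infty$ completes both upper bounds.

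There is no substantial obstacle here; the argument is essentially an assembly step given the accumulated machinery. The one point to watch is the parameter ranges, which are dictated by where monotonicity in \pref{pro:RRECmono} is available, i.e.\ $\alpha\in[0,2]$ for $C_{\rmr,\alpha}$ and $\alpha\in[\tfrac{1}{2},\infty]$ for $\underline{C}_{\rmr,\alpha}$. It is worth emphasizing that the additivity of \thref{thm:AdditivityRREC} is what allows the converse bound to be stated in single-letter form, independent of the blocklength $n$ --- without it one would be stuck with $\frac{1}{n}C_{\rmr,\alpha}(\rho^{\otimes n})$, which is not manifestly equal to $C_{\rmr,\alpha}(\rho)$.
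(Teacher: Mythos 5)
Your proposal is correct and follows essentially the same route as the paper: monotonicity under incoherent operations (\pref{pro:RRECmono}), the normalization $C_{\rmr,\alpha}(\proj{\Phi_{\rmc,L}})=\underline{C}_{\rmr,\alpha}(\proj{\Phi_{\rmc,L}})=\ln L$ from the pure-state formulas, and additivity (\thref{thm:AdditivityRREC}) to pass from $\frac{1}{n}C(\rho^{\otimes n})$ to the single-letter bound. The only cosmetic difference is that the paper first states the single-copy bound $\ln L_{\rme,\rmc}(\sigma)\leq C(\sigma)$ abstractly for any normalized monotone $C$ and then specializes to $\sigma=\rho^{\otimes n}$, whereas you work with $\rho^{\otimes n}$ directly; the content is identical.
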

\begin{proof}
	According to the definition of $L_{\rme,\rmc}(\rho)$, it is straightforward to verify that $L_{\rme,\rmc}(\rho^{\otimes n})\geq L_{\rme,\rmc}(\rho)^n$. Therefore,
	\begin{align}
	R_{\rme,\rmc}(\rho)&= \lim_{n \to \infty}\frac{1}{n}\ln L_{\rme,\rmc}(\rho^{\otimes n})\geq \lim_{n \to \infty}\frac{1}{n}\ln L_{\rme,\rmc}(\rho)^n=\ln L_{\rme,\rmc}(\rho).
	\end{align}
	
	Let $C$ be any coherent measure that does not increase under incoherent operations. Then $C(|\Phi_{\rmc,L}\rangle \langle \Phi_{\rmc,L}|)\leq C(\rho)$ whenever $|\Phi_{\rmc,L}\rangle \langle \Phi_{\rmc,L}|$ can be generated from $\rho$ by incoherent operations. If, in addition, $C$ satisfies the normalization condition $C(|\Phi_{\rmc,L}\rangle \langle \Phi_{\rmc,L}|)=\ln L$, which is the case for all the coherent measures that appear in \lref{pro:ExactDistill}, then $\ln L_{\rme,\rmc}(\rho)\leq C(\rho)$. Therefore,
	\begin{align}
	R_{\rme,\rmc}(\rho)\leq \lim_{n \to \infty}\frac{1}{n}C(\rho^{\otimes n}). 
	\end{align}
	Now \lref{pro:ExactDistill} follows from the fact that $C_{\rmr,\alpha}$ for $ \alpha\in [0,2]$ and $\underline{C}_{\rmr,\alpha}(\rho)$ for $\alpha\in \bigl[ \frac{1}{2},\infty\bigr]$ are additive according to \thref{thm:AdditivityRREC}.
\end{proof}
Recall that both $C_{\rmr,\alpha}$ and $\underline{C}_{\rmr,\alpha}(\rho)$ are monotonically increasing with $\alpha$ and that $C_{\rmr,0}(\rho) \le \underline{C}_{\rmr,1/2}(\rho)$ according to \eref{eq:RRECboundB}. So the bound $C_{\rmr,0}(\rho)$ on the exact distillation rate is the best among all bounds 
based on R\'enyi relative entropies of coherence.
Actually, this bound is saturated when $\rho$ is pure, in which case $C_{\rmr,0}(\rho)= \underline{C}_{\rmr,1/2}(\rho)$.

\begin{theorem}\label{thm:ExactDistillpure}
	Suppose $\rho=|\psi\>\<\psi|$ is a pure state. Then $L_{\rme,\rmc}(\rho)=\lfloor 1/p_{\max}\rfloor$ and $R_{\rme,\rmc}(\rho)=-\ln(p_{\max})=C_{\rmr,0}(\rho)$, where $p_{\max}=p_{\max}(\rho):=\|\diag(\rho)\|_\infty$. 
\end{theorem}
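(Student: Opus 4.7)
The upper bound is immediate from tools already at hand. Combining \lref{pro:ExactDistill} with the pure-state formula $C_{\rmr,0}(\rho)=-\ln p_{\max}$ from \pref{pro:RRECpure} gives $\ln L_{\rme,\rmc}(\rho)\le R_{\rme,\rmc}(\rho)\le C_{\rmr,0}(\rho)=-\ln p_{\max}$. Since $L_{\rme,\rmc}(\rho)$ is a positive integer, this at once yields $L_{\rme,\rmc}(\rho)\le\lfloor 1/p_{\max}\rfloor$ and $R_{\rme,\rmc}(\rho)\le-\ln p_{\max}$.

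The nontrivial part is the matching lower bound on $L_{\rme,\rmc}(\rho)$: setting $L:=\lfloor 1/p_{\max}\rfloor$, one must exhibit an incoherent operation $\Lambda_\rmi$ with $\Lambda_\rmi(\rho)=\proj{\Phi_{\rmc,L}}$. After first absorbing the phases of the amplitudes into a diagonal incoherent unitary, we may assume $\ket{\ps}=\sum_j \sqrt{p_j}\ket{j}$ with $p_j\ge 0$. The cleanest route is then to invoke the majorization criterion for deterministic pure-state interconversion under incoherent operations (\rscite{DuBG15,DuBQ15,WintY16}): $\ket{\ps}\to\ket{\phi}$ by some incoherent $\Lambda_\rmi$ iff the diagonal probability vector of $\proj{\ps}$ is majorized by that of $\proj{\phi}$. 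The target $\ket{\Phi_{\rmc,L}}$ has the uniform diagonal $(1/L,\ldots,1/L,0,\ldots,0)$ with $L$ nonzero entries, and the bound $p_{(1)}\le p_{\max}\le 1/L$ forces $\sum_{j=1}^{k}p_{(j)}\le k/L$ for $k\le L$ and $\le 1$ for $k\ge L$, which is exactly the required majorization inequality. This existence step is the main technical hurdle; if one prefers to avoid citing the majorization criterion, the same conclusion can be reached constructively by decomposing the target rearrangement into elementary T-transforms and implementing each by a two-level incoherent Kraus family, or directly by packaging the mass $\sqrt{p_j}$ into incoherent Kraus operators of the form $K_n=\sum_j c_{nj}\ketbra{f_n(j)}{j}$, tuned so that every $K_n\ket{\ps}$ is a common scalar multiple of $\ket{\Phi_{\rmc,L}}$ and $\sum_n K_n^\dag K_n=I$.

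Finally, the asymptotic rate follows by tensorisation. Since $\rho^{\otimes n}$ is pure with diagonal entries equal to products of those of $\rho$, we have $p_{\max}(\rho^{\otimes n})=p_{\max}^{n}$. The single-copy formula just established gives $L_{\rme,\rmc}(\rho^{\otimes n})=\lfloor p_{\max}^{-n}\rfloor$, and the bracketing $p_{\max}^{-n}-1\le\lfloor p_{\max}^{-n}\rfloor\le p_{\max}^{-n}$ shows that $n^{-1}\ln L_{\rme,\rmc}(\rho^{\otimes n})\to-\ln p_{\max}$, so that $R_{\rme,\rmc}(\rho)=-\ln p_{\max}=C_{\rmr,0}(\rho)$, matching the upper bound and closing the proof. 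Every step beyond the majorization-based construction is packaging of results already established in the excerpt.
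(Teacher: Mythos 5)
Your proof is correct and follows essentially the same route as the paper's: the key step in both is the majorization criterion for deterministic pure-state conversion under incoherent operations, which reduces reachability of $\ket{\Phi_{\rmc,L}}$ to the condition $p_{\max}\le 1/L$, followed by tensorization using $p_{\max}(\rho^{\otimes n})=p_{\max}^n$. The only (cosmetic) difference is that you obtain the upper bound $L_{\rme,\rmc}(\rho)\le\lfloor 1/p_{\max}\rfloor$ from \lref{pro:ExactDistill} together with \pref{pro:RRECpure}, whereas the paper reads it off directly from the ``only if'' direction of the same majorization criterion.
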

\begin{proof}
	A pure state $\rho$ can be transformed to another pure state $\sigma$ under incoherent operations iff $\diag(\rho)$ is majorized by $\diag(\sigma)$ \cite{DuBG15,WintY16,ChitG16a,ZhuMCF17} (the same is true if we consider strictly incoherent operations). In addition, $\diag(\rho)$ is majorized by $\diag(|\Phi_{\rmc,L}\rangle \langle \Phi_{\rmc,L}|)$ iff $p_{\max}\leq 1/L$. Therefore, $L_{\rme,\rmc}(\rho)=\lfloor 1/p_{\max}\rfloor$, 
	\begin{align}
	R_{\rme,\rmc}(\rho)&=\frac{1}{n}\lim_{n \to \infty}\ln\left \lfloor (p_{\max}(\rho^{\otimes n}))^{-1}\right\rfloor=\frac{1}{n}\lim_{n \to \infty}\ln\left \lfloor (p_{\max})^{-n}\right\rfloor=-\ln(p_{\max})=C_{\rmr,0}(\rho).
	\end{align}
\end{proof}

Note that \lref{pro:ExactDistill} and \thref{thm:ExactDistillpure} still hold if the operation $\Lambda_\rmi$ in the definition of $L_{\rme,\rmc}(\rho)$ in \eref{eq:CDL} is only required to be incoherence-preserving instead of being incoherent. In this case, the current proof of \lref{pro:ExactDistill} still applies after replacing incoherent operations with incoherence-preserving operations. The current proof of \thref{thm:ExactDistillpure} implies that $L_{\rme,\rmc}(\rho)\geq\lfloor 1/p_{\max}\rfloor$ and 
$R_{\rme,\rmc}(\rho)\geq-\ln(p_{\max})=C_{\rmr,0}(\rho)$, while the opposite inequalities follow from \lref{pro:ExactDistill}. Therefore, for pure states, the exact coherence  distillation rate (length)
remains the same under three distinct classes of operations, namely, strictly incoherent operations, incoherent operations, and incoherence-preserving operations.

In general, the exact distillation rate $R_{\rme,\rmc}(\rho)$ is smaller than the distillable coherence, which is equal to the relative entropy of coherence $C_{\rmr}(\rho)$ \cite{WintY16}. Therefore,  exact distillation requires more resources than distillation with negligible small error even asymptotically under incoherence-preserving operations.
Consequently, the exact distillation rate of coherence is in general smaller than the coherence cost.

A necessary condition for saturating the inequality $R_{\rme,\rmc}(\rho)\leq C_{\rmr}(\rho)$ can be derived from \thref{thm:EntropyRobustEqual} and \lref{pro:ExactDistill}.
\begin{corollary}
	If the exact distillation rate of coherence is equal to the distillable coherence, that is, if
	the bound $R_{\rme,\rmc}(\rho)\leq C_{\rmr}(\rho)$ is saturated, then $\rho$  commutes with $\rho^{\diag}$ and is proportional to $\Pi_\rho \rho^{\diag}$. 
\end{corollary}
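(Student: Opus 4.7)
The plan is to chain together the upper bound on $R_{\rme,\rmc}(\rho)$ from \lref{pro:ExactDistill} with the monotonicity of the R\'enyi relative entropies of coherence in $\alpha$, and then invoke \thref{thm:EntropyRobustEqual} to conclude the structural constraint on $\rho$. The key observation is that the hypothesis $R_{\rme,\rmc}(\rho)=C_{\rmr}(\rho)$ forces $C_{\rmr,0}(\rho)=C_{\rmr}(\rho)$, which is precisely condition (D5) of \thref{thm:EntropyRobustEqual}.

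In more detail, first I would apply \lref{pro:ExactDistill} with $\alpha=0$ to obtain
\begin{equation*}
R_{\rme,\rmc}(\rho) \leq C_{\rmr,0}(\rho).
\end{equation*}
Second, since $C_{\rmr,\alpha}(\rho)$ is monotonically nondecreasing in $\alpha$ (as noted after \eref{eq:RREabOrder}), we have
\begin{equation*}
C_{\rmr,0}(\rho) \leq \lim_{\alpha\to 1}C_{\rmr,\alpha}(\rho) = C_{\rmr}(\rho).
\end{equation*}
Combining these two inequalities with the assumed equality $R_{\rme,\rmc}(\rho)=C_{\rmr}(\rho)$ forces the chain to collapse, yielding $C_{\rmr,0}(\rho)=C_{\rmr}(\rho)$.

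Finally, since $0 \neq 1$, the identity $C_{\rmr,0}(\rho)=C_{\rmr}(\rho)$ is exactly condition (D5) of \thref{thm:EntropyRobustEqual}. That theorem then gives condition (D6): $\rho$ commutes with $\rho^{\diag}$ and is proportional to $\Pi_\rho \rho^{\diag}$, which is the claim. There is no real obstacle here, since the heavy lifting has already been done: \lref{pro:ExactDistill} provides the exact-distillation upper bound, and \thref{thm:EntropyRobustEqual} already characterizes precisely those states for which any R\'enyi relative entropy of coherence coincides with $C_\rmr$. The only thing to verify is the trivial sandwich argument above, which pins the order parameter down to the specific value $\alpha=0$ and makes \thref{thm:EntropyRobustEqual} applicable.
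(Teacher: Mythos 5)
Your proof is correct and follows exactly the route the paper intends: combine the bound $R_{\rme,\rmc}(\rho)\leq C_{\rmr,0}(\rho)$ from \lref{pro:ExactDistill} with the monotonicity $C_{\rmr,0}(\rho)\leq C_{\rmr}(\rho)$ to force $C_{\rmr,0}(\rho)=C_{\rmr}(\rho)$, i.e.\ condition (D5) of \thref{thm:EntropyRobustEqual}, and then read off (D6). The paper leaves this argument implicit (it only states that the corollary follows from \thref{thm:EntropyRobustEqual} and \lref{pro:ExactDistill}), and your sandwich argument is precisely the intended derivation.
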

According to this corollary or \thref{thm:ExactDistillpure}, when $\rho$ is a pure state, the inequality $R_{\rme,\rmc}(\rho)\leq C_{\rmr}(\rho)$ is saturated iff $\rho$ is incoherent or  maximally coherent on the support of $\rho^{\diag}$. Similarly, when $\rho$ is a qubit state, the inequality $R_{\rme,\rmc}(\rho)\leq C_{\rmr}(\rho)$
is saturated iff $\rho$ is incoherent or maximally coherent.

\section{\label{sec:Sum}Summary}

We proved that R\'enyi relative entropies of coherence and R\'enyi relative entropies of entanglement are both equal to the corresponding R\'enyi conditional entropies for maximally correlated states. By virtue of this observation and  the generalized CNOT gate, we established an operational one-to-one mapping
between entanglement measures and coherence measures based on  R\'enyi relative entropies.  In particular, every R\'enyi relative entropy of coherence is equal to the maximum R\'enyi relative entropy of entanglement generated by incoherence-preserving operations (or incoherent operations) acting on the system and an incoherent ancilla. 
These results significantly strengthen the connection between the resource theory of coherence and that of entanglement. They are also useful to understanding the properties of maximally correlated states themselves.
We then proved that all R\'enyi relative entropies of coherence, including the logarithmic robustness of coherence, are additive. Accordingly, 
all R\'enyi relative entropies of entanglement  are additive for maximally correlated states. In addition, we derived several nontrivial bounds on R\'enyi relative entropies of coherence and logarithmic robustness of coherence, which improve over bounds known in the literature, including the inequality $C_\rmr(\rho)\leq C_\lr(\rho)$ between the relative entropy of coherence and  logarithmic robustness of coherence. Furthermore, we determined all states whose  relative entropy of coherence (or distillable coherence) is equal to the logarithmic robustness of coherence or geometric coherence. As an application, we provided an upper bound for the exact coherence distillation rate based on a special R\'enyi relative entropy of coherence, which is saturated for pure states.


\ack
We are grateful to a referee for careful  comments and constructive suggestions.
HZ acknowledges financial support
from the Excellence
Initiative of the German Federal and State Governments
(ZUK~81) and the DFG.
MH was supported in part by JSPS Grants-in-Aid for Scientific Research (A) No. 17H01280 and (B) No. 16KT0017 and Kayamori Foundation of Informational Science Advancement.
LC was supported by Beijing Natural Science Foundation (4173076), the National Natural Science Foundation (NNSF) of China (Grant No. 11501024), and the Fundamental Research Funds for the Central Universities (Grants No. KG12001101, No. ZG216S1760, and No. ZG226S17J6).

\appendix

\section{Alternative proof of \pref{pro:RRECformulaA}}
In this appendix we give an alternative proof of  \pref{pro:RRECformulaA} for $\alpha\in [0,2]$
by virtue of \thref{thm:EntCohMC} and \eref{eq:CEaformula}. 
\begin{proof}
	Let $\rho_{\mrm{MC}}=\cnot[\rho\otimes |0\>\<0|]$, then 
	\begin{align}
	C_{\rmr,\alpha}(\rho)&=C_{\rmr,\alpha}(\rho_{\mrm{MC}})=-H_\alpha^{\uparrow}(A|B)_{\rho_{\mrm{MC}}}=\frac{\alpha}{\alpha-1}\ln\tr\bigl\{[\tr_A(\rho_{\mrm{MC}}^\alpha)]^{1/\alpha}\bigr\}\nonumber\\
	&=\frac{1}{\alpha-1}\ln \bigl\|(\rho^\alpha)^{\diag}\bigr\|_{1/\alpha}.
	\end{align}
	Here the second inequality follows from \thref{thm:EntCohMC}, the third one from \eref{eq:CEaformula}, and the last one from the observation that
	\begin{equation}
	\tr_A(\rho_{\mrm{MC}}^\alpha)=\sum_j (\rho^{\alpha})_{jj} |j\>\<j|=(\rho^\alpha)^{\diag}.
	\end{equation}
\end{proof}

\section{Proof of \lref{lem:RRErhovsMC}}
\begin{proof}
	According to the definition of $S_\alpha$ in \eref{eq:RRE},
	\begin{align}
	S_\alpha\left(\rho_{\mrm{MC}}\|I_A\otimes \sigma\right)&=\frac{1}{\alpha-1}\ln\tr \left\{\rho_{\mrm{MC}}^\alpha \left[I_A\otimes \sigma^{1-\alpha}\right]\right\}=\frac{1}{\alpha-1}\ln\tr \left\{(\rho^\alpha)^{\diag} \sigma^{1-\alpha}\right\}\nonumber \\
	&=\frac{1}{\alpha-1}\ln\tr (\rho^\alpha \sigma^{1-\alpha})=S_\alpha(\rho\|\sigma),
	\end{align}
	where the third equality follows from the assumption that $\sigma$ is diagonal in the reference basis. Similarly,
	\begin{align}
	\underline{S}_\alpha\left(\rho_{\mrm{MC}}\|I_A\otimes \sigma\right)&=\frac{1}{\alpha-1}\ln\tr \left\{\bigl[ \bigl(I_A\otimes \sigma^{\frac{1-\alpha}{2\alpha}}\bigr)\rho_{\mrm{MC}} \bigl(I_A\otimes \sigma^{\frac{1-\alpha}{2\alpha}}\bigr)\bigr]^\alpha\right\}\nonumber \\
	&=\frac{1}{\alpha-1}\ln\tr \left\{\left[\sum_{jk} \sigma_{jj}^{\frac{1-\alpha}{2\alpha}}\rho_{jk}\sigma_{kk}^{\frac{1-\alpha}{2\alpha}} (|jj\>\<kk|) \right]^\alpha\right\}\nonumber \\
	&=\frac{1}{\alpha-1}\ln\tr \left\{\left[\sum_{jk} \sigma_{jj}^{\frac{1-\alpha}{2\alpha}}\rho_{jk}\sigma_{kk}^{\frac{1-\alpha}{2\alpha}} (|j\>\<k|) \right]^\alpha\right\}\nonumber \\
	&=\frac{1}{\alpha-1}\ln\tr \bigl[\bigl( \sigma^{\frac{1-\alpha}{2\alpha}}\rho\sigma^{\frac{1-\alpha}{2\alpha}} \bigr)^\alpha\bigr]=\underline{S}_\alpha(\rho\|\sigma).
	\end{align}
\end{proof}

\section*{References}


\bibliographystyle{hieeetr}

\bibliography{all_references}

\end{document}